\newcommand{\xkh}[1]{\left(#1\right)}
\newcommand{\dkh}[1]{\left\{#1\right\}}
\newcommand{\nj}[1]{\langle {#1} \rangle}
\newcommand{\norm}[1]{\|{#1}\|_2}
\newcommand{\norms}[1]{\|{#1}\|}
\newcommand{\abs}[1]{\lvert#1\rvert}
\newcommand{\Abs}[1]{\left\lvert#1\right\rvert}
\newcommand{\E}{{\mathbb E}}
\newcommand{\R}{{\mathbb {R}}}
\newcommand{\Rn}{{\mathbb R}^n}
\newcommand{\T}{\top}
\newcommand{\vx}{{\bm x}}
\newcommand{\vy}{{\bm y}}
\newcommand{\vu}{{\bm u}}
\newcommand{\vv}{{\bm v}}
\newcommand{\vw}{{\bm w}}
\newcommand{\vz}{{\bm z}}
\newcommand{\hz}{{\hat{\bm{z}}}}
\newcommand{\vb}{{\bm b}}
\newcommand{\va}{{\bm a}}
\newcommand{\RNum}[1]{\uppercase\expandafter{\romannumeral #1\relax}}
\renewenvironment{proof}{\noindent {\it \textbf{Proof}~}}{\hfill \qed \par}
\begin{document}

\title{Nearly optimal bounds for the global geometric landscape of phase retrieval \thanks{J. F. Cai was supported in part by Hong Kong Research Grant Council grants 16309518, 16309219, 16310620, 16306821.
Y. Wang was supported in part by the Hong Kong Research Grant Council grants 16306415 and 16308518. }
}


\author{Jian-Feng Cai \and  Meng Huang    \and  Dong Li \and Yang Wang     
}


\institute{J. F. Cai \at
              Department of Mathematics, The Hong Kong University of Science and Technology, Hong Kong, China \\
              \email{jfcai@ust.hk}           
           \and
           M. Huang \at
              School of Mathematical Sciences, Beihang University, Beijing, 100191, China\\
              \email{menghuang@buaa.edu.cn}
           \and
           D. Li \at
              SUSTech International Center for Mathematics and Department of Mathematics, Southern University of Science and Technology, Shenzhen, China\\
             \email{lid@sustech.edu.cn}     
             \and
              Y. Wang \at
              Department of Mathematics, The Hong Kong University of Science and Technology, Hong Kong, China \\
              \email{yangwang@ust.hk}                      
}


\maketitle

\begin{abstract}
The phase retrieval problem is concerned with  recovering an unknown signal $\vx\in \Rn$ from a set of  magnitude-only measurements $y_j=\abs{\nj{\va_j,\vx}}, \; j=1,\ldots,m$. A natural least squares formulation can be used to solve this problem efficiently even with random initialization, despite its non-convexity of the loss function. One way to explain this surprising phenomenon is the benign geometric landscape:  (1) all local minimizers are global; and (2) the objective function has a negative curvature around each saddle point and local maximizer. In this paper, we show that  $m=O(n \log n)$ Gaussian random measurements are sufficient to guarantee the loss function of a commonly used estimator has such benign geometric landscape with high probability. This is a step toward answering the open problem given by Sun-Qu-Wright  \cite{Sun18}, in which  the authors suggest that $O(n \log n)$ or even $O(n)$ is enough to guarantee the favorable geometric property.
\keywords{Phase retrieval \and Geometric landscape \and Nonconvex optimization}
\subclass{94A12 \and 65K10 \and 49K45}
\end{abstract}

\section{Introduction}
\subsection{Background}
In a prototypical  phase retrieval problem, one is interested in  how to recover an unknown signal $\vx\in \Rn$ from a series of magnitude-only measurements 
\begin{equation} \label{eq:meayi}
y_j=\abs{\nj{\va_j,\vx}}, ~j=1,\ldots,m,
\end{equation}
where $\va_j \in \Rn, j=1,\ldots,m$ are given vectors and $m$ is the number of measurements. This problem is of fundamental importance in numerous areas of physics and engineering such as X-ray crystallography \cite{harrison1993phase,millane1990phase}, microscopy
\cite{miao2008extending}, astronomy \cite{fienup1987phase}, coherent diffractive
imaging \cite{shechtman2015phase,Gerchberg1972} and optics
\cite{walther1963question} etc, where the optical detectors  can only record the magnitude of signals while losing the phase information. Despite its simple mathematical formulation, it has been
shown that reconstructing a finite-dimensional discrete signal from the magnitude of its Fourier transform is generally an {\em NP-complete} problem \cite{Sahinoglou}.

Due to the practical ubiquity of the phase retrieval problem, many algorithms have been designed for this problem.  For example, based on the technique ``matrix-lifting'', the phase retrieval problem can be recast as a low rank matrix recovery problem.
By  using convex relaxation one can show that the matrix recovery problem under suitable conditions is equivalent to a convex optimization problem  \cite{phaselift,Phaseliftn,Waldspurger2015}.  However, since the matrix-lifting technique involves semidefinite
program for $n\times n$ matrices, the computational cost is prohibitive for large scale problems. 
In contrast, many non-convex algorithms bypass the lifting step and operate directly on the lower-dimensional ambient space, making them much more computationally efficient. Early non-convex algorithms were mostly based on the technique of alternating projections, e.g. Gerchberg-Saxton \cite{Gerchberg1972} and Fineup \cite{ER3}. The main drawback, however, is the lack of theoretical guarantee. Later Netrapalli et al \cite{AltMin} proposed the AltMinPhase  algorithm based on a technique known as {\em spectral initialization}. They proved that the algorithm linearly converges to the true solution with $O(n \log^3 n)$ resampling Gaussian random measurements. This  work led 
further to several other non-convex algorithms based on spectral initialization. 
A common thread  is  first choosing a good initial guess through spectral initialization, and then solving an optimization model through gradient descent, such as \cite{WF,TWF,Gaoxu,TAF,RWF,huangwang,tan2019phase}. We refer the reader to survey papers  \cite{shechtman2015phase,Chinonconvex,jaganathan2016phase} for accounts of recent developments in the theory, algorithms and applications of phase retrieval.

\subsection{Prior arts and motivation}
As stated earlier,  producing a good initial guess using carefully-designed initialization seems to be a prerequisite for prototypical  non-convex algorithms to succeed with  good theoretical guarantee. A natural and fundamental  question is:

{\em Is it possible for non-convex algorithms to achieve successful recovery with a random initialization }?  

Recently, Ju Sun et al.  carried out a deep study of the global geometric structure of the loss function:
\begin{equation}\label{eq:mod1}
F(\vz)=\frac{1}{m}\sum_{j=1}^m \xkh{\abs{\nj{\va_j,\vz}}^2-y_j^2}^2,
\end{equation}
where $y_j$ are measurements given in \eqref{eq:meayi}. They proved that the loss function does not possess any spurious local minima under $O(n \log^3 n)$ Gaussian random measurements. More specifically, it was shown in \cite{Sun18} that all 
minimizers of $F(\vz)$ coincide with the target signal $\vx$ up to a global phase, and $F(\vz)$ has a negative directional curvature around each saddle point. Thanks to this benign geometric landscape any algorithm which can avoid strict saddle points  converges to the true solution with high probability. 
A trust-region method was employed in \cite{Sun18} to find the global minimizers with random initialization. The results in  \cite{Sun18} require $m\ge O(n\log^3 n)$ samples to guarantee the favorable geometric property and efficient recovery.  On the other hand, based on ample numerical evidences, the authors of \cite{Sun18} conjectured that the optimal sampling complexity could be  $O(n\log n)$ or even $O(n)$ to guarantee the benign landscape of the loss function $F(\vz)$ (cf.  p. 1160 therein).

In this paper, we focus on this conjecture and prove that the loss function $F(\vz)$ possesses the favorable geometric property, as long as the measurement number $m\ge O(n\log n)$, by some sophisticated analysis. In other words, we prove that (1) all local minimizers of the loss function $F(\vz)$ are global; and (2) the objective function $F(\vz)$ has a negative curvature around each saddle point and local maximizer. This is a step toward proving the open problem.

We shall emphasize that if allowing some modifications to the loss function $F(\vz)$, the sampling complexity can be reduced to the optimal bound $O(n)$ \cite{2020b,2020c,cai2019}. In \cite{cai2019} the authors show that a  combination of the loss function (\ref{eq:mod1}) with a judiciously chosen activation function also has the benign geometry structure  under $O(n)$ Gaussian random measurements. Furthermore,  in our recent work \cite{2020a},  we consider another new smoothed amplitude flow estimator which is based
on a piece-wise smooth modification to the loss function
\begin{equation}\label{eq:mod2}
F(\vz)=\sum_{j=1}^m \xkh{\abs{\nj{\va_j,\vz}}-y_j}^2,
\end{equation}
and we could also prove that the loss function \eqref{eq:mod2} after some modifications has a benign geometric landscape under the
optimal sampling threshold $m=O(n)$.

The emerging concept of a benign geometric landscape has also recently been   explored in many other applications of signal processing and machine learning, e.g. matrix sensing \cite{bhojanapalli2016global,park2016non}, tensor decomposition \cite{ge2016matrix}, dictionary learning\cite{sun2016complete} and matrix completion \cite{ge2015escaping}. 
For general optimization problems there exist a plethora of loss functions with 
well-behaved geometric landscapes such that all local optima are also global optima and 
each saddle point has a negative direction curvature in its vincinity. 
Correspondingly several techniques have been developed to guarantee that the standard gradient based optimization algorithms can escape such saddle points efficiently, see e.g. \cite{jin2017escape,du2017gradient,jin2017accelerated}.

\subsection{Our contributions}
In this paper, we focus on the open problem:  {\em what is the optimal sampling complexity to guarantee the loss function $F(\vz)$ given in \eqref{eq:mod1} has favorable geometric landscape?}
We develop several new techniques and prove that $m\ge O(n\log n)$ Gaussian random 
measurements are enough. While we can not prove the optimality of this bound, it is an improvement over the result of $m\ge O(n \log^3 n)$ given in  \cite{Sun18}. 
The main result of our paper is the following theorem.

\begin{theorem} \label{thmA}
Assume that $\va_j \in \R^n, j=1,\ldots,m $ are i.i.d.  standard Gaussian random vectors and $0\ne \vx \in \R^n$ is a fixed vector.
There exist positive absolute constants $C$, $c$ and $c'$, such that if $m\ge C n \log n $, then
with probability at least $1- cm^{-1}-7\exp(-c' m)$ the loss function $F(\vz)$ 
defined by \eqref{eq:mod1} has no spurious local minimizers. In other words, the only local minimizer is $ \vx$ up to a global phase and all saddle points are strict, i.e., each saddle point has a neighborhood
where the function has negative directional curvature. Moreover,  the loss function is strongly convex in a neighborhood of $\pm \vx$, and the point $\vz=0$ is a local maximum point where the Hessian  is strictly
negative-definite.
\end{theorem}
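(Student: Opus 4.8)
The plan is to follow the population‑to‑empirical scheme but with substantially sharper concentration than in \cite{Sun18}. By rotational invariance and scaling of the Gaussian ensemble I may assume $\vx = e_1$ with $\|\vx\| = 1$. Writing $u_j = \langle \va_j, \vz\rangle$ and $v_j = \langle \va_j, e_1\rangle$, a direct differentiation gives
\[
\nabla F(\vz) = \frac{4}{m}\sum_{j=1}^m (u_j^2 - v_j^2)\,u_j\,\va_j, \qquad \nabla^2 F(\vz) = \frac{4}{m}\sum_{j=1}^m (3u_j^2 - v_j^2)\,\va_j\va_j^\T .
\]
Using $\E[\langle\va,\vp\rangle\langle\va,\vq\rangle\langle\va,\vr\rangle\va] = \langle\vp,\vq\rangle\vr + \langle\vp,\vr\rangle\vq + \langle\vq,\vr\rangle\vp$ and $\E[\langle\va,\vw\rangle^2\va\va^\T] = \|\vw\|^2 I + 2\vw\vw^\T$, the population gradient and Hessian are
\[
G(\vz) := \E\nabla F(\vz) = 4\big((3\|\vz\|^2-1)\vz - 2z_1 e_1\big), \qquad H(\vz) := \E\nabla^2 F(\vz) = 4\big((3\|\vz\|^2-1)I + 6\vz\vz^\T - 2 e_1 e_1^\T\big).
\]

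Next I would analyze the population landscape. Decomposing $\vz = z_1 e_1 + \vz_\perp$, the equation $G(\vz) = 0$ forces $z_1(\|\vz\|^2-1) = 0$ and $(3\|\vz\|^2-1)\vz_\perp = 0$, so the critical points of $\E F$ are exactly $\pm e_1$, the origin, and the sphere $\mathcal S = \{z_1 = 0,\ \|\vz\| = 1/\sqrt3\}$. One checks $H(\pm e_1) = 8I + 16 e_1 e_1^\T \succeq 8I$, $H(0) = -4I - 8 e_1 e_1^\T \preceq -4I$, and $e_1^\T H(\vz) e_1 = -8$ for $\vz\in\mathcal S$. Making these quantitative by continuity and compactness, fix a small radius $\rho$ and a large absolute constant $R$; there are constants $c_1,\dots,c_4 > 0$ such that (i) $H(\vz)\succeq c_1 I$ on $\mathcal R_1 := B(e_1,\rho)\cup B(-e_1,\rho)$; (ii) $H(\vz)\preceq -c_2 I$ on $\mathcal R_2 := B(0,\rho)$; (iii) $e_1^\T H(\vz) e_1 \le -c_3$ on a neighborhood $\mathcal R_3$ of $\mathcal S$; and (iv) $\|G(\vz)\| \ge c_4$ on $\mathcal R_4 := B(0,R)\setminus(\mathcal R_1\cup\mathcal R_2\cup\mathcal R_3)$, the last being a finite case analysis on $(|z_1|,\|\vz\|)$ via $\|G(\vz)\|^2 = 144\,z_1^2(\|\vz\|^2-1)^2 + 16\,(3\|\vz\|^2-1)^2\|\vz_\perp\|^2$. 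A separate elementary argument shows $\langle\nabla F(\vz),\vz\rangle \ge c\|\vz\|^4 > 0$ once $\|\vz\| > R$, using one‑sided concentration of $\frac1m\sum u_j^4$ and the operator norm of $\frac1m\sum v_j^2\va_j\va_j^\T$, so there are no critical points with $\|\vz\| > R$.

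To transfer, I would establish, on an event of probability at least $1 - cm^{-1} - 7e^{-c'm}$ when $m\ge Cn\log n$, the uniform estimates $\sup_{\|\vz\|\le R}\|\nabla F(\vz)-G(\vz)\| \le \epsilon$ and $\sup_{\|\vz\|\le R}\|\nabla^2 F(\vz)-H(\vz)\|_{\mathrm{op}} \le \epsilon$ with $\epsilon < \tfrac{1}{2}\min(c_1,c_2,c_3,c_4)$. Then on $\mathcal R_1$ we get $\nabla^2 F \succeq (c_1-\epsilon)I \succ 0$, so $F$ is strongly convex there and its only critical point is $\pm\vx$ (and $\nabla F(\pm\vx)=0$ holds exactly); on $\mathcal R_2$ we get $\nabla^2 F \preceq -(c_2-\epsilon)I \prec 0$, so $\vz = 0$ is a strict local maximum with negative‑definite Hessian; on $\mathcal R_3$ we get $e_1^\T\nabla^2 F(\vz)e_1 \le -(c_3-\epsilon) < 0$, so every critical point there is a strict saddle with negative curvature along $e_1$ in a neighborhood; and on $\mathcal R_4\cup\{\|\vz\|>R\}$ we have $\|\nabla F\|>0$, so no critical points. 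Since $\mathcal R_1\cup\mathcal R_2\cup\mathcal R_3\cup\mathcal R_4 = B(0,R)$, this gives exactly the conclusion of Theorem~\ref{thmA}.

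The main obstacle is the uniform concentration step itself. The summands of $\nabla F$ and $\nabla^2 F$ are degree‑three and degree‑four polynomials in the Gaussian entries, so their heaviest terms (e.g.\ $u_j^4$) have only $\psi_{1/2}$ tails; combined with the $\mathrm{poly}(n)$ Lipschitz constant of these empirical maps — which forces an $\epsilon$‑net of cardinality $e^{\Theta(n\log n)}$ over $B(0,R)$ — a crude Bernstein‑plus‑net bound would require $m \gtrsim n^2\log^2 n$, worse even than the $n\log^3 n$ of \cite{Sun18}. The resolution is a careful truncation: split each term at a level $T$, control the bounded part by scalar and matrix Bernstein over the net, and control the heavy tail \emph{without} bounding it by $\max_j\|\va_j\|^4 \sim n^2$ (which is precisely what injects extra logarithmic factors), instead using a sharp tail estimate for the relevant low‑degree Gaussian chaos; optimizing $T$ against the net size yields the $O(n\log n)$ threshold, while the heavy‑tailed remainder, estimated through a moment/Markov bound rather than an exponential one, is what produces the polynomial probability term $cm^{-1}$. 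I expect essentially all of the difficulty to lie in engineering this truncation so that it delivers the clean $n\log n$ bound with fully uniform control in $\vz$.
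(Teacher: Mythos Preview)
Your plan has a genuine gap: the uniform operator-norm concentration $\sup_{\|\vz\|\le R}\|\nabla^2 F(\vz)-H(\vz)\|_{\mathrm{op}}\le\epsilon$ that you propose to establish is provably \emph{false} when $m=O(n\log n)$. Take $\hat\vz=\va_1/\|\va_1\|$ and $\vu=\hat\vz$; then $\vu^\T\nabla^2 F(R\hat\vz)\vu$ contains the term $\tfrac{12R^2}{m}\sum_j(\va_j^\T\hat\vz)^4$, whose $j=1$ summand alone is $\tfrac{12R^2}{m}\|\va_1\|^4\sim n^2/m$. For $m=Cn\log n$ this single term is of order $n/\log n$, while the population quantity $\vu^\T H(R\hat\vz)\vu$ is $O(1)$. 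The same blow-up hits $\langle\nabla F(\vz)-G(\vz),\vz\rangle$ through the term $\tfrac{1}{m}\sum_j(\va_j^\T\vz)^4$. So no amount of truncation engineering will rescue a uniform-in-$\vz$ bound: the quantity you want to control simply is not close to its mean on the whole ball unless $m\gtrsim n^2$. This is not the bottleneck that produced the extra $\log$ factors in \cite{Sun18}; it is a hard obstruction, and your neighborhoods $\mathcal{R}_2,\mathcal{R}_3$ (fixed-radius balls around $0$ and around the saddle sphere) are already large enough to contain such bad $\vz$.

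The paper sidesteps this by never asking for two-sided uniform control of $\tfrac{1}{m}\sum_j(\va_j^\T\hat\vz)^4$. In the ``small $|\sigma|$'' region it evaluates the Hessian only along the \emph{fixed} direction $\vx$ and only \emph{at critical points}: the critical-point equation $\langle\nabla F(\vz),\vz\rangle=0$ lets one write $R=\|\vz\|^2$ as a ratio with $\tfrac{1}{m}\sum_j(\va_j^\T\hat\vz)^4$ in the denominator; substituting this back, $\tfrac{1}{m}\sum_j(\va_j^\T\hat\vz)^4$ appears as a positive multiplicative factor on $\vx^\T\nabla^2 F(\vz)\vx$, so only a \emph{lower} bound on it is needed, and lower tails of sums of nonnegative variables are light. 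In the ``large $|\sigma|$, far from $\pm\vx$'' region it combines two scalar critical-point equations algebraically to force $\vz$ close to $\pm\vx$, again without any Hessian bound. Only in the strong-convexity region near $\pm\vx$ does a genuine upper bound appear, but there the relevant quantity is $\tfrac{1}{m}\sum_j(\va_j^\T\vu)^2(\va_j^\T\vx)^2$ with $\vx$ \emph{fixed} and only $\vu$ varying; decoupling the $\vx$-component via independence of $\va_j^\T\vx$ and $\va_j^\T\vu^\perp$ is exactly what buys the $n\log n$ rate (and is the sole place that logarithm enters). To salvage your outline you would have to abandon the uniform Hessian bound and instead work conditionally on the critical-point equations in each region.
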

\begin{remark}
 For simplicity we consider here only the real-valued case and will investigate the complex-valued case elsewhere.  In 
Theorem \ref{thmA} the probability bound $O(m^{-1})$ can be refined.
\end{remark}
\begin{remark}
Another interesting issue is to show the measurements are non-adaptive, i.e., a single realization
of measurement vectors $\va_j \in \R^n$ can be used to reconstruct all $0\ne \vx \in \mathbb R^n$. However
we shall not dwell on this refinement here for simplicity.
\end{remark}

\subsection{Notations}
Throughout the paper,  we write $\vz \in \mathbb S^{n-1}$ if $\vz \in \mathbb R^n$ and 
$\|\vz \|_2 =1$.  We use $\chi$ to denote the usual characteristic function. For example  $\chi_A (x)=1$ if $x \in A$ and $\chi_A(x)=0$ if $x\notin A$. 
 For any quantity $X$, we shall write $X=O(Y)$ if $|X| \le C Y$ for some constant
$C>0$.  We write $X\lesssim Y$ if $X\le CY$ for some universal constant $C>0$.  
We shall write $X \ll Y$ if $ X \le c Y$ where the constant $c>0$ will be sufficiently
small.  We use $m\gtrsim n $ to denote $m\ge C n$ where $C>0$ is a universal constant.
In this paper, we use $C, c$ and the subscript (superscript) form of them to denote universal constants whose values vary with the context.

\subsection{Organization}
The rest of the papers are organized as follows.  In Section 2, we divide the whole space $\R^n$ into several regions and investigate the geometric property of $F(\vz)$ on each region.
 In Section 3, we present the detailed justification for the technical lemmas given in Section 2. Finally, the appendix  collects some auxiliary estimates needed in the proof.

\section{Proof of the main result} \label{sec:mainlemma}
In the rest of this section we shall carry out the proof of Theorem \ref{thmA} in several steps.
More specifically, we decompose $\R^n$ into several regions (not necessarily non-overlapping), on each of which $F(\vz)$ has certain property that
 will allow us to show that with high probability $F(\vz)$ has no local minimizers other than $\pm \vx$.  
 Furthermore, we show $F(\vz)$ is strongly convex in a neighborhood of $\pm \vx$.

Without loss of generality, we assume $\norm{\vx}=1$. Denote $\sigma =\sigma(\vz):=\nj{\vz,\vx}/\|\vz\|\|\vx\| $. Then we can decompose $\R^n$ into three regions as shown below.

\begin{itemize}
\item  $\mathcal{R}_1:=\dkh{\vz\in \R^n:  \abs{\sigma} \le  \sqrt{\frac{\sqrt{3}-1}{2}}-\varepsilon_0}$, \vspace{1em}
\item  $\mathcal{R}_2:=\dkh{\vz\in \R^n: \abs{\sigma} \ge 0.5  \quad \mbox{and} \quad \mbox{dist}(\vz,\vx) \ge \delta_0}$, \vspace{1em}
\item  $\mathcal{R}_3:=\dkh{\vz\in \R^n: \mbox{dist}(\vz,\vx) \le \delta_0} $,
\end{itemize}
 where  $\varepsilon_0$ is arbitrary small positive constant and $0< \delta_0< 1/4$ is a universal constant.
 Figure \ref{figure:partition}  visualizes the partitioning regions described above and gives the idea of how they cover the whole space.
 \begin{figure}[H]
\centering
    \subfigure[]{
     \includegraphics[width=0.45\textwidth]{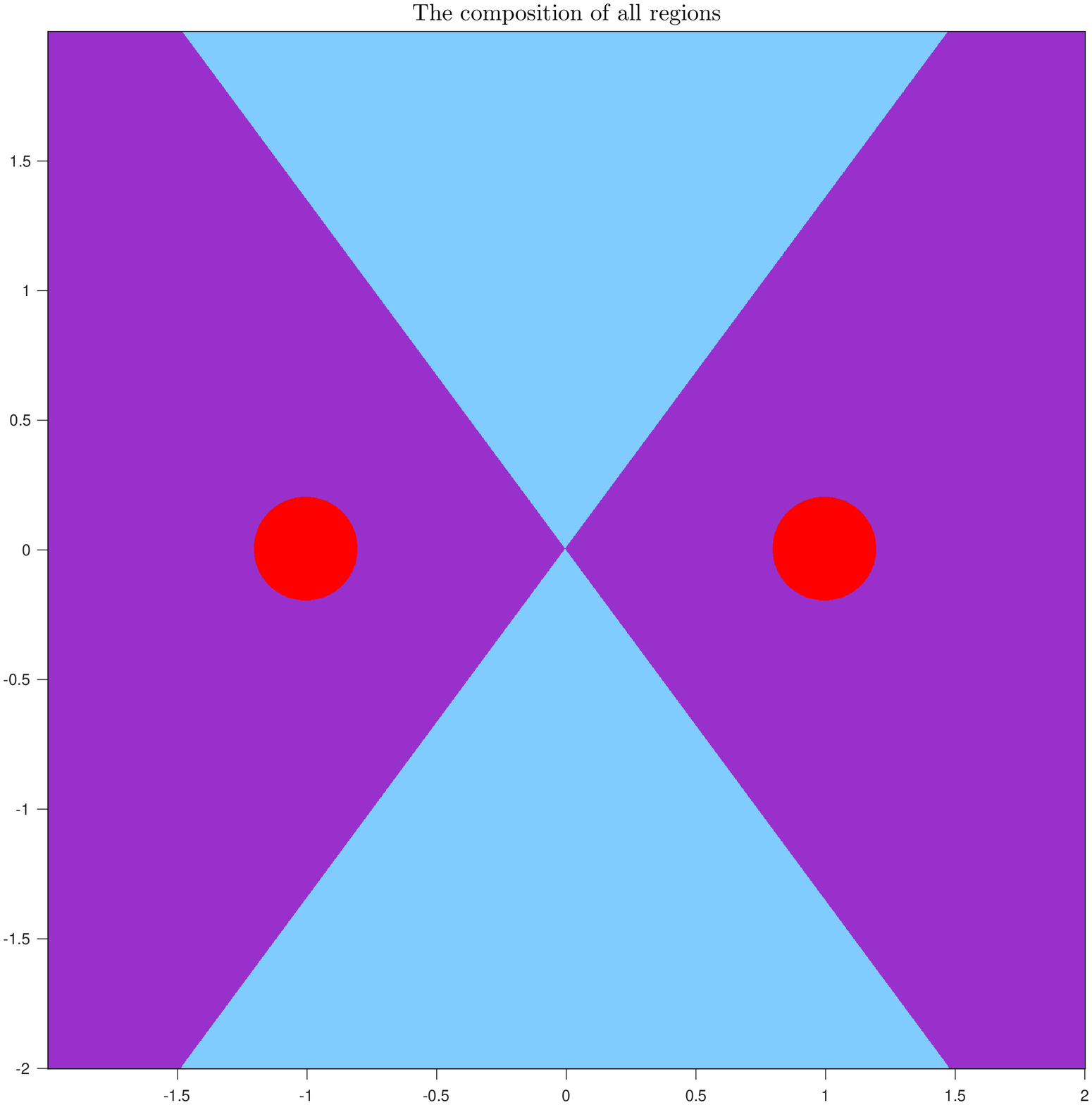}}
\subfigure[]{
     \includegraphics[width=0.45\textwidth]{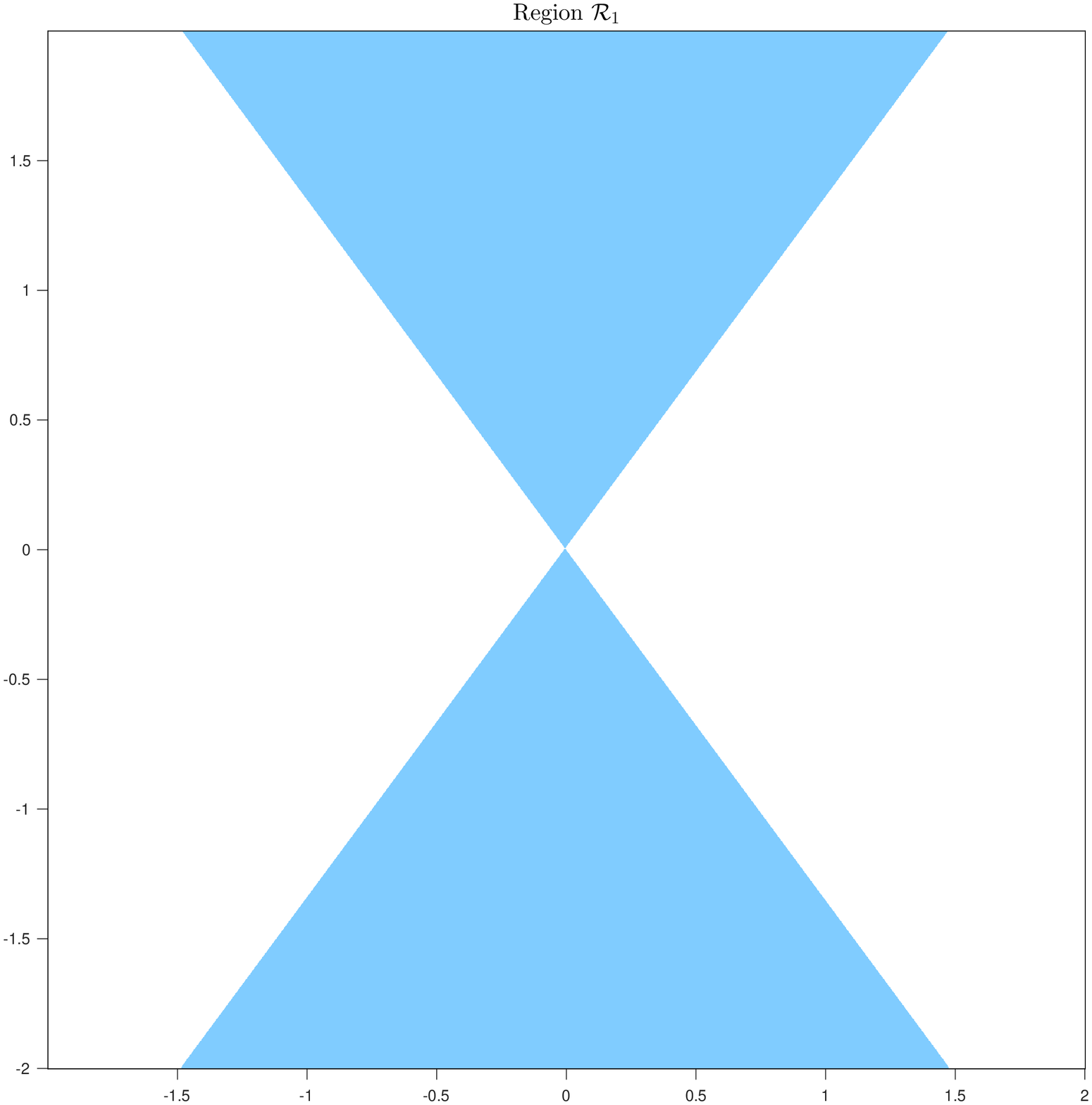}}
\subfigure[]{
     \includegraphics[width=0.45\textwidth]{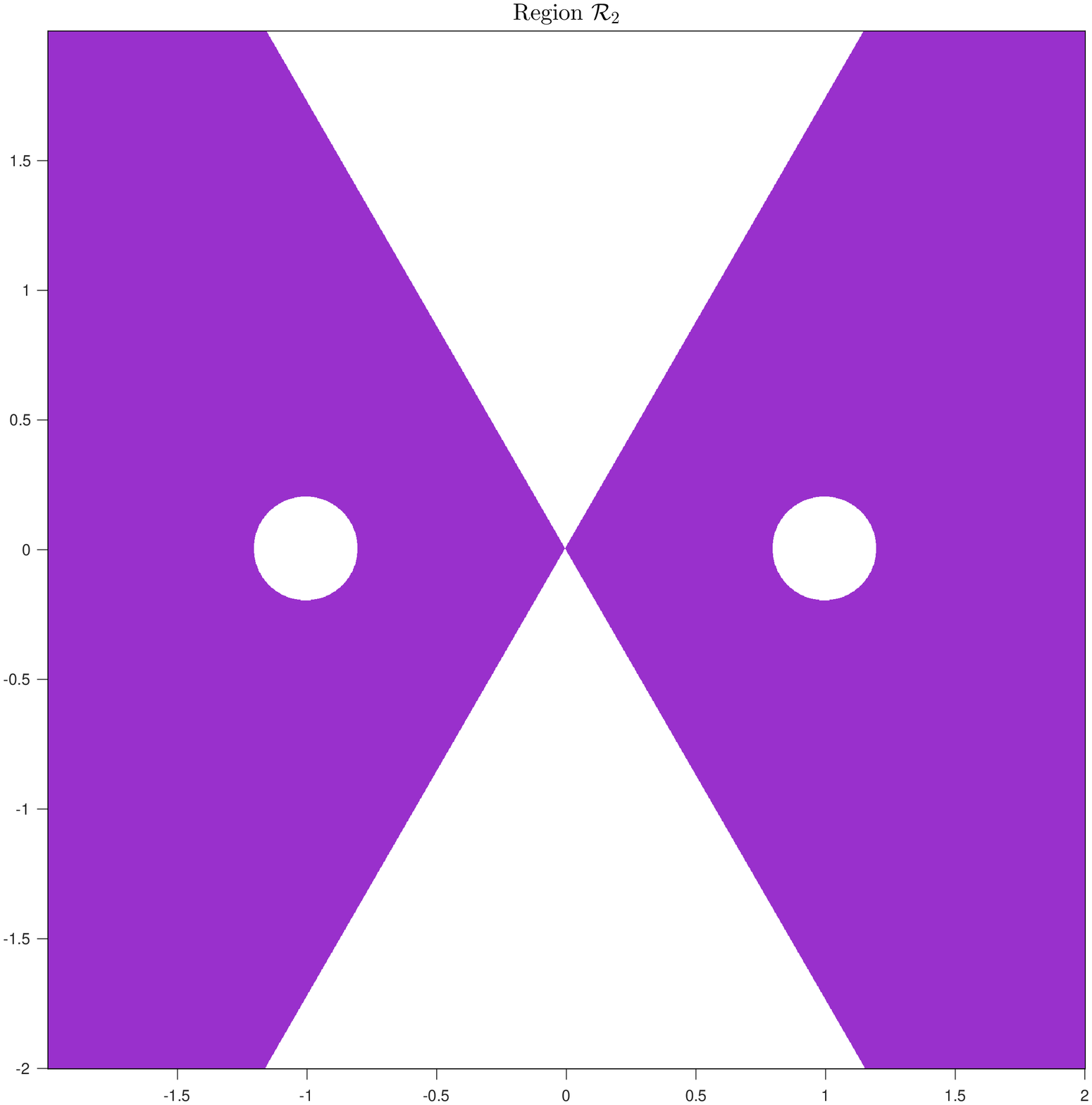}}
     \subfigure[]{
     \includegraphics[width=0.45\textwidth]{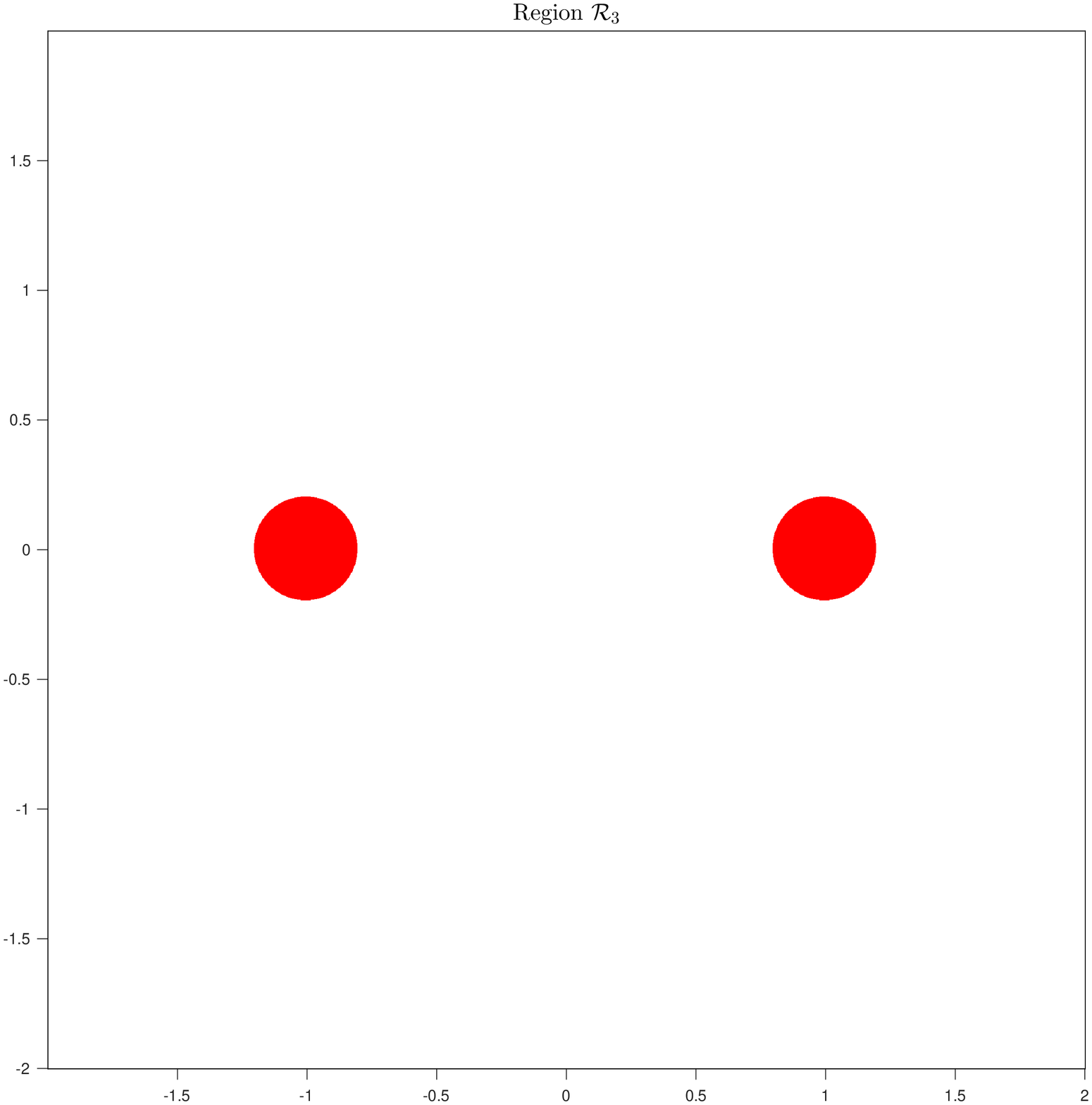}}
\caption{Partition of $\R^2$: The target signal is $\vx=[1,0]$ with constants $\varepsilon_0=0.1$ and $\delta_0=0.2$. (a): The composition of all regions; (b): The region $\mathcal{R}_1$; (c): The region $\mathcal{R}_2$; (d): The region $\mathcal{R}_3$.}
\label{figure:partition}
\end{figure}
The properties of $F(\vz)$ over these regions are summarized in the following three lemmas.

\begin{lemma}  \label{lem:non-vanishing-gradfZ}
 For any $\varepsilon_0>0$ there exists a constant $\delta_1>0$ such that with probability at least $1-c m^{-2}-5\exp(-c(\varepsilon_0) m)$ it holds:  any critical point $\vz \in \mathcal{R}_1:=\dkh{\vz\in \R^n:  \abs{\sigma} \le  \sqrt{\frac{\sqrt{3}-1}{2}}-\varepsilon_0}$ obeys
 \[
 \vx^\T \nabla^2 F(\vz) \vx \le -\delta_1
 \]
provided $m \geq C(\varepsilon_0) n$.  Here, $C(\varepsilon_0)$ and $c(\varepsilon_0)$ are positive constants depending only on $\varepsilon_0$, and $c>0$ is a universal constant.
\end{lemma}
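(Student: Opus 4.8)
The plan is to reduce the claim to a one–variable inequality for the \emph{expected} Hessian and then transfer it to the empirical Hessian by uniform concentration over the unit sphere. First I would record the derivatives of $F$: a direct computation gives $\nabla F(\vz)=\frac{4}{m}\sum_{j=1}^m(\nj{\va_j,\vz}^2-y_j^2)\nj{\va_j,\vz}\va_j$ and $\vx^\T\nabla^2F(\vz)\vx=\frac{4}{m}\sum_{j=1}^m(3\nj{\va_j,\vz}^2-y_j^2)\nj{\va_j,\vx}^2$. Write $\vz=r\vu$ with $\vu\in\mathbb S^{n-1}$, $r=\norm{\vz}$, and recall $y_j^2=\nj{\va_j,\vx}^2$. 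At any critical point $\nj{\vz,\nabla F(\vz)}=0$, which for $r\neq0$ reads $r^2A(\vu)=B(\vu)$, where $A(\vu):=\tfrac1m\sum_j\nj{\va_j,\vu}^4$ and $B(\vu):=\tfrac1m\sum_j\nj{\va_j,\vx}^2\nj{\va_j,\vu}^2$; moreover $\vx^\T\nabla^2F(\vz)\vx=12\,r^2B(\vu)-4E$ with $E:=\tfrac1m\sum_j\nj{\va_j,\vx}^4$. Eliminating $r$ gives the key identity
\[
\vx^\T\nabla^2F(\vz)\,\vx=\frac{12\,B(\vu)^2}{A(\vu)}-4E\qquad\text{at any critical point }\vz=r\vu\neq0 ,
\]
while the case $\vz=0$ (if one counts it in $\mathcal R_1$) is immediate since there the quantity equals $-\tfrac4m\sum_j\nj{\va_j,\vx}^4<0$. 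By the Gaussian fourth–moment formulas $\E A(\vu)=3$, $\E E=3$, $\E B(\vu)=1+2\nj{\vu,\vx}^2=1+2\sigma^2$ (note $\sigma=\nj{\vu,\vx}$ since $\norm{\vx}=1$), so the population version of the right–hand side is $4\big[(1+2\sigma^2)^2-3\big]$. Since $\sigma^2=\tfrac{\sqrt3-1}{2}$ is exactly the value at which $(1+2\sigma^2)^2=3$, on $\mathcal R_1$ — where $\sigma^2\le(\sqrt{(\sqrt3-1)/2}-\varepsilon_0)^2$ — this is bounded above by the strictly negative constant $-2\delta_1:=4\big[(1+2(\sqrt{(\sqrt3-1)/2}-\varepsilon_0)^2)^2-3\big]$, which depends only on $\varepsilon_0$.

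It remains to pass from population to empirical quantities uniformly in $\vu$. I would establish that for every $\eta>0$ there are $C(\eta),c(\eta)>0$ such that, provided $m\ge C(\eta)n$, with probability at least $1-cm^{-2}-5\exp(-c(\eta)m)$,
\[
\sup_{\vu\in\mathbb S^{n-1}}\big|A(\vu)-3\big|\le\eta,\qquad \sup_{\vu\in\mathbb S^{n-1}}\big|B(\vu)-(1+2\nj{\vu,\vx}^2)\big|\le\eta,\qquad |E-3|\le\eta .
\]
On this event $A(\vu)\ge3-\eta>0$, so $r^2=B(\vu)/A(\vu)$ is well defined and $O(1)$, and for every critical point $\vz=r\vu\in\mathcal R_1$ the key identity gives $\vx^\T\nabla^2F(\vz)\vx\le\frac{12(1+2\sigma^2+\eta)^2}{3-\eta}-4(3-\eta)$, whose right–hand side converges to $4[(1+2\sigma^2)^2-3]\le-2\delta_1$ as $\eta\to0$, uniformly over $\sigma^2\in[0,(\sqrt{(\sqrt3-1)/2}-\varepsilon_0)^2]$. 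Choosing $\eta=\eta(\varepsilon_0)$ small enough that the discrepancy stays below $\delta_1$ then yields $\vx^\T\nabla^2F(\vz)\vx\le-\delta_1$, with $m\ge C(\varepsilon_0)n$ and the advertised probability.

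The \emph{main obstacle} is the uniform concentration of $A(\vu)$ and $B(\vu)$ over the sphere. The summands $\nj{\va_j,\vu}^4$ are \emph{not} sub-exponential — $\nj{\va_j,\vu}^2$ is sub-exponential, so its square has only a stretched–exponential tail of the type $\exp(-c\sqrt t)$ — so a plain Bernstein bound followed by an $\varepsilon$-net over $\mathbb S^{n-1}$ does not close the estimate. I would instead truncate the linear forms $\nj{\va_j,\vu}$ at a suitable level, control the truncated sums by Bernstein together with a Lipschitz/net argument over the sphere, and absorb the (rare) contribution of the discarded terms by a separate union bound; the latter is what produces the polynomial term $cm^{-2}$ in the failure probability, while the sampling complexity stays at $m\gtrsim n$. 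This estimate — most naturally isolated as an auxiliary lemma, consistent with the paper collecting such bounds in the appendix — is the technical heart; once it is available, everything else reduces to the elementary algebra above.
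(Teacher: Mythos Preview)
Your algebraic reduction is exactly the one the paper uses: at a nonzero critical point one has $r^2=B(\vu)/A(\vu)$ and hence $\vx^\T\nabla^2 F(\vz)\vx=12\,B(\vu)^2/A(\vu)-4E$, and the threshold $\sigma^2=(\sqrt3-1)/2$ is precisely the root of $(1+2\sigma^2)^2=3$. The problem is the concentration step.

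The claim $\sup_{\vu\in\mathbb S^{n-1}}|A(\vu)-3|\le\eta$ with $m\ge C(\eta)n$ is \emph{false}, and no truncation-plus-union-bound scheme can rescue it. Take $\vu=\va_k/\|\va_k\|_2$ for any fixed index $k$; then the single summand $(\va_k^\T\vu)^4=\|\va_k\|_2^4$ already contributes $\|\va_k\|_2^4/m\approx n^2/m$ to $A(\vu)$, so $\sup_{\vu}A(\vu)\gtrsim n^2/m$, which diverges when $m\asymp n$ (indeed even when $m\asymp n\log n$). Truncating $\langle\va_j,\vu\rangle$ at a fixed level $N$ does not help, because for this particular $\vu$ the event $|\va_k^\T\vu|>N$ is not rare --- it holds deterministically once $\|\va_k\|_2>N$ --- so the ``discarded'' tail is not uniformly small over the sphere and cannot be absorbed by a union bound. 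A similar (milder) obstruction hits the uniform upper bound on $B(\vu)$: the paper's Lemma~\ref{bp14} obtains it only under $m\gtrsim n\log n$, not $m\gtrsim n$.

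The paper's proof avoids ever needing an upper bound on $A(\vu)$. It uses only the \emph{lower} bound $A(\vu)\ge 3-\epsilon$ (Lemma~\ref{bp8.02}, obtained by truncating from above, which is harmless for a lower bound), together with a one-sided bound on $B(\vu)$ whose error is allowed to grow with $A$: namely $B(\vu)\le 2\sigma^2+1+2\epsilon+2\sqrt{\epsilon}\,\sqrt{A(\vu)}$, proved by splitting off a smooth truncation and controlling the remainder via Cauchy--Schwarz against $A(\vu)^{1/2}$. Substituting into the identity and multiplying through by $A(\vu)$ gives
\[
\tfrac14\,\vx^\T\nabla^2 F(\vz)\,\vx\cdot A(\vu)\;\le\;3(2\sigma^2+1)^2-(3-O(\epsilon))\,A(\vu)+O(\sqrt{\epsilon})\sqrt{A(\vu)}+O(\epsilon),
\]
and now the potentially large $A(\vu)$ works \emph{in your favour} on the right-hand side. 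Using only $A(\vu)\ge 3-\epsilon$ then yields $3(2\sigma^2+1)^2-9+o_\epsilon(1)$, which is strictly negative on $\mathcal R_1$. This rebalancing --- letting the $B$-error scale with $\sqrt{A}$ rather than insisting on an absolute bound --- is the missing idea in your plan.
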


\begin{lemma}  \label{lem:negativeGradfX}
  Assume that $m \geq C(\delta_0) n$.  Then with probability at least $1-c(\delta_0) m^{-2}-\exp(-c'(\delta_0)m)$ there is no critical point in the region $\vz \in \mathcal{R}_2:=\dkh{\vz\in \R^n: \abs{\sigma} \ge 0.5  \quad \mbox{and} \quad \mbox{dist}(\vz,\vx) \ge \delta_0}$.  Here, $C(\delta_0), c(\delta_0)$ and $c'(\delta_0)$ are constants depending only on $\delta_0$. 
\end{lemma}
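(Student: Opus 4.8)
The plan is to show that on the region $\mathcal{R}_2$, where $\abs{\sigma} \ge 0.5$ but $\vz$ is bounded away from $\vx$, the gradient $\nabla F(\vz)$ cannot vanish, by producing an explicit test direction $\vu$ along which $\nj{\nabla F(\vz), \vu}$ is bounded away from zero with high probability. First I would compute the population (expected) gradient: writing $f(\vz) = \E[ (\abs{\nj{\va,\vz}}^2 - \abs{\nj{\va,\vx}}^2)^2 ]$, a standard Gaussian moment computation gives $\nabla f(\vz) = 4\big( (\|\vz\|^2 + 2\|\vz\|^2 ) \vz - \dots \big)$; more precisely $\nabla f(\vz) = 4\big( (\norm{\vz}^2)\vz + 2\nj{\vz,\vz}\vz - (\norm{\vx}^2)\vz - 2\nj{\vz,\vx}\vx \big)$ up to the correct constants, so that the critical points of $f$ are exactly $\vz = 0$ and $\vz = \pm\vx$ (together with a sphere of spurious critical points with $\sigma = 0$, which is precisely why $\mathcal{R}_2$ restricts to $\abs{\sigma}\ge 1/2$). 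On $\mathcal{R}_2$ one checks that $\norm{\nabla f(\vz)} \ge \gamma(\delta_0) > 0$ for a constant depending only on $\delta_0$: the only zeros of $\nabla f$ with $\abs{\sigma}\ge 1/2$ are $\pm\vx$, and $\mathrm{dist}(\vz,\vx)\ge\delta_0$ excludes a neighborhood of each, so compactness (after reducing to the relevant two-dimensional $\mathrm{span}\{\vz,\vx\}$ picture and the scaling $\norm{\vz}$, which is itself controlled since $\norm{\nabla f}$ blows up for large $\norm{\vz}$) yields the uniform lower bound.

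Next I would pass from the population gradient to the empirical gradient via a uniform concentration argument. Fix the natural test direction $\vu = \vu(\vz) := \nabla f(\vz)/\norm{\nabla f(\vz)}$ (or, to keep the direction in a fixed low-dimensional family, take $\vu \in \mathrm{span}\{\vz, \vx\}$ suitably chosen). Then
\[
\nj{\nabla F(\vz), \vu} = \nj{\nabla f(\vz), \vu} + \nj{\nabla F(\vz) - \nabla f(\vz), \vu} \ge \gamma(\delta_0) - \sup_{\vz} \norm{\nabla F(\vz) - \nabla f(\vz)}.
\]
To bound the supremum I would use a net argument over $\mathcal{R}_2$: the summands $(\abs{\nj{\va_j,\vz}}^2 - y_j^2)\nj{\va_j,\vz}\va_j$ are sub-exponential (fourth-order in Gaussians), so for a fixed $\vz$, a Bernstein-type inequality gives concentration of $\nabla F(\vz)$ around $\nabla f(\vz)$ at rate $\exp(-c' m)$ once $m \gtrsim n$; a standard covering of the compact (modulo rescaling) region $\mathcal{R}_2$ by an $\epsilon$-net of cardinality $e^{O(n)}$, plus a Lipschitz bound on $\vz \mapsto \nabla F(\vz)$ (controlled by $\frac1m\sum \norm{\va_j}^3\abs{\nj{\va_j,\vz}}$-type quantities, themselves concentrated), upgrades this to the uniform bound $\sup_{\mathcal{R}_2}\norm{\nabla F(\vz) - \nabla f(\vz)} \le \gamma(\delta_0)/2$ with probability $1 - c(\delta_0)m^{-2} - \exp(-c'(\delta_0)m)$. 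Combining, $\nj{\nabla F(\vz),\vu} \ge \gamma(\delta_0)/2 > 0$ on all of $\mathcal{R}_2$, so no critical point lies there.

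I expect the main obstacle to be the uniform control of the Lipschitz/net step: unlike $\mathbb{S}^{n-1}$, the region $\mathcal{R}_2$ is unbounded (it allows $\norm{\vz}$ large), so one must either truncate at $\norm{\vz} \le R(\delta_0)$ — justified because $\norm{\nabla f(\vz)}$ grows like $\norm{\vz}^3$ while the deviation $\norm{\nabla F - \nabla f}$ only grows polynomially, so large-norm $\vz$ are handled separately and easily — or work with the normalized variable. The sub-exponential (rather than sub-Gaussian) tails of the quartic summands also mean the Bernstein bound has the usual two regimes, and keeping the constants' dependence on $\delta_0$ transparent through the net cardinality and the truncation radius requires some care, but is routine. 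The moment computation for $\nabla f$ and the verification that its only zeros in $\{\abs{\sigma}\ge 1/2\}$ are $\pm\vx$ are elementary.
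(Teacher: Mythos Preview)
Your strategy --- compute the population gradient, show it has no zeros in $\mathcal{R}_2$, then transfer to the empirical gradient via uniform concentration over an $\epsilon$-net --- is the natural first thought, but the concentration step does not go through at sampling complexity $m = O(n)$. The summands in $\langle \nabla F(\vz), \vu\rangle$ are degree-four polynomials in Gaussian variables, e.g.\ $(\va_j^\T \vz)^3(\va_j^\T \vu)$ and $(\va_j^\T \vx)^2(\va_j^\T \vz)(\va_j^\T \vu)$. Such variables are \emph{not} sub-exponential: their $p$-th moments grow like $p^{2}$ rather than $p$, so the $\psi_1$ norm is infinite and Bernstein's inequality in the form you invoke does not apply. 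A direct moment/net argument on these quartics is essentially what Sun--Qu--Wright do, and it is what costs them the extra $\log$ factors; improving on that is the whole point of the present paper. Your remark that ``the Bernstein bound has the usual two regimes'' misidentifies the tail behaviour, and the unexplained $m^{-2}$ probability in your conclusion has no source in the argument you outline.

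The paper takes a different route that avoids ever bounding $\|\nabla F-\nabla f\|$ uniformly. It assumes a critical point $\vz=\sqrt{R}\,\hat\vz\in\mathcal{R}_2$ exists and uses the \emph{two} scalar equations $\langle\nabla F(\vz),\vz\rangle=0$ and $\langle\nabla F(\vz),\vx\rangle=0$ to eliminate $R$, obtaining an algebraic identity among $A=\tfrac1m\sum(\va_j^\T\hat\vz)^4$, $B=\tfrac1m\sum(\va_j^\T\hat\vz)^2(\va_j^\T\vx)^2$, $A_1=\tfrac1m\sum(\va_j^\T\hat\vz)^3(\va_j^\T\vx)$, and $\tfrac1m\sum(\va_j^\T\hat\vz)(\va_j^\T\vx)^3$. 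The last quantity has three copies of the fixed $\vx$ and only one of the running $\hat\vz$, so it concentrates at $m=O(n)$ by a conditional sub-Gaussian argument (this is where the $m^{-2}$ term comes from, via a fourth-moment Markov bound). For $B$ and $A_1$ the paper does not attempt two-sided uniform concentration; instead it uses smooth truncation plus H\"older to prove $|B-(2\sigma^2+1)|\lesssim\epsilon+\epsilon^{2/3}A^{1/3}$ and $|A_1-3\sigma|\lesssim\epsilon+\epsilon^{1/3}A^{2/3}$, with the heavy-tailed $A$ appearing only as an error factor. Crucially, $A$ needs only a uniform \emph{lower} bound $A\ge 3-\epsilon$ (easy: truncate and use that the summands are nonnegative), and it also sits as a multiplicative factor in the critical-point identity, so its upper tail is absorbed rather than controlled. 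The resulting relation $3\sigma(2\sigma^2+1)\approx A\cdot 3\sigma\ge(3-\epsilon)\cdot 3\sigma$ then forces $\sigma\to 1$ and $R\to 1$, contradicting $\operatorname{dist}(\vz,\vx)\ge\delta_0$. The idea you are missing is precisely this asymmetric handling of the quartic term.
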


\begin{lemma}  \label{lem:convex}
 Assume that $m \geq C(\delta_0) n \log n$. There exists a constant $\delta_2$ such that with probability at least  $1-c(\delta_0) m^{-2}-\exp(-c'(\delta_0)m)$ we have $ \vu^\T \nabla^2 F(\vz) \vu \geq \delta_2$ for all $\vz \in  \mathcal{R}_3:=\dkh{\vz\in \R^n: \mbox{dist}(\vz,\vx) \le \delta_0}$ and unit vectors $\vv \in\Rn$. In other words, $F(\vz)$ is strongly convex in $\mathcal{R}_3$.  Here, $C(\delta_0), c(\delta_0)$ and $c'(\delta_0)$ are constants depending only on $\delta_0$. 
 \end{lemma}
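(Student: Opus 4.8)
We outline a proof. The plan is to compute the Hessian of $F$ in closed form, use the closeness of $\vz$ to $\vx$ to split the quadratic form $\vu^\T\nabla^2F(\vz)\vu$ into a dominant positive term, a small cross term, and a nonnegative remainder, and then establish the needed uniform (over the sphere) estimates for degree-four Gaussian polynomials; the logarithmic factor will enter through the uniform lower bound on the dominant term.

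\textbf{Step 1 (Hessian and reduction).} Since the $\va_j$ and $\vz$ are real, $\abs{\nj{\va_j,\vz}}^2=\nj{\va_j,\vz}^2$, so $F$ in \eqref{eq:mod1} is a degree-four polynomial and hence $C^\infty$; differentiating twice gives
\[
\nabla^2F(\vz)=\frac{4}{m}\sum_{j=1}^m\bigl(3\nj{\va_j,\vz}^2-\nj{\va_j,\vx}^2\bigr)\,\va_j\va_j^\T .
\]
By the $\pm$ symmetry of $\mathcal R_3$ we may assume $\|\vz-\vx\|\le\delta_0$ and write $\vz=\vx+\vh$ with $\|\vh\|\le\delta_0$. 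Expanding $\nj{\va_j,\vz}^2=\nj{\va_j,\vx}^2+2\nj{\va_j,\vx}\nj{\va_j,\vh}+\nj{\va_j,\vh}^2$, we get for every unit $\vu$
\[
\vu^\T\nabla^2F(\vz)\vu=\frac{8}{m}\sum_j\nj{\va_j,\vx}^2\nj{\va_j,\vu}^2+\frac{24}{m}\sum_j\nj{\va_j,\vx}\nj{\va_j,\vh}\nj{\va_j,\vu}^2+\frac{12}{m}\sum_j\nj{\va_j,\vh}^2\nj{\va_j,\vu}^2 .
\]
The third sum is nonnegative and is discarded; writing $\vh=\|\vh\|\widehat{\vh}$ with $\widehat{\vh}\in\mathbb S^{n-1}$, the second sum is at least $-24\delta_0\,W$, where $W:=\sup_{\vv_1,\vv_2,\vv_3\in\mathbb S^{n-1}}\bigl|\frac1m\sum_j\nj{\va_j,\vv_1}\nj{\va_j,\vv_2}\nj{\va_j,\vv_3}^2\bigr|$. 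Thus it suffices to lower bound the first sum uniformly in $\vu$ and to upper bound $W$ by an absolute constant.

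\textbf{Step 2 (uniform concentration — the crux).} Using the auxiliary tail estimates collected in the appendix, I would prove: (i) with probability $1-O(m^{-2})-\exp(-cm)$, provided $m\ge C(\epsilon)\,n\log n$, one has $\inf_{\vu\in\mathbb S^{n-1}}\frac1m\sum_{j}\nj{\va_j,\vx}^2\nj{\va_j,\vu}^2\ge 1-\epsilon$ (note $\E\,\nj{\va,\vx}^2\nj{\va,\vu}^2=1+2\nj{\vx,\vu}^2\ge1$); and (ii) with probability $1-\exp(-cm)$, provided $m\ge Cn$, one has $W\le C_0$ for an absolute constant $C_0$ — a standard $\gamma$-net argument suffices for (ii) since only an $O(1)$ bound is needed. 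Bound (i) is where $\log n$ appears: truncate at level $T=C_1\sqrt{\log m}$, so that $\frac1m\sum_j\nj{\va_j,\vx}^2\nj{\va_j,\vu}^2\chi_{\{\abs{\nj{\va_j,\vx}}\le T\}}$ has expectation within $\epsilon/2$ of $1+2\nj{\vx,\vu}^2$ uniformly in $\vu$, while each truncated summand is $\le T^2\nj{\va_j,\vu}^2$, i.e. sub-exponential of scale $O(\log m)$; Bernstein's inequality then gives a per-$\vu$ lower-deviation probability $\exp(-cm/\log m)$, which survives a union bound over a net of $\mathbb S^{n-1}$ of cardinality $e^{O(n\log(1/\gamma))}$ exactly when $m/\log m\gtrsim n$, i.e. $m\gtrsim n\log n$. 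Lipschitz continuity of the truncated quadratic form in $\vu$, together with a uniform tail bound controlling the discarded terms $\nj{\va_j,\vx}^2\nj{\va_j,\vu}^2\chi_{\{\abs{\nj{\va_j,\vu}}>T\}}$, upgrades the net estimate to all of $\mathbb S^{n-1}$.

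\textbf{Step 3 (combination).} On the intersection of the two events, for every $\vz\in\mathcal R_3$ and unit $\vu$,
\[
\vu^\T\nabla^2F(\vz)\vu\ \ge\ 8(1-\epsilon)-24\,C_0\,\delta_0 .
\]
Fixing $\epsilon=\tfrac14$ and requiring the universal constant $\delta_0$ to be small enough that $24C_0\delta_0\le1$ (this is the place where smallness of $\delta_0$ is used), the right-hand side is a positive absolute constant $\delta_2$; taking $m\ge C(\delta_0)n\log n$ and tallying the probabilities yields the claim. The main obstacle is step (i): obtaining the uniform-over-the-sphere lower bound on $\frac1m\sum_j\nj{\va_j,\vx}^2\nj{\va_j,\vu}^2$ at the correct $n\log n$ sample size requires carefully balancing the truncation level against the Bernstein exponent and the metric entropy of the sphere — precisely the step for which $O(n)$ is not known to suffice for the unmodified loss \eqref{eq:mod1}, and the reason Lemma \ref{lem:convex} carries a $\log n$ while Lemmas \ref{lem:non-vanishing-gradfZ} and \ref{lem:negativeGradfX} do not.
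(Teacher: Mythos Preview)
Your decomposition $\vz=\vx+\vh$ is natural, but Step~(ii) has a genuine gap: the quantity $W=\sup_{\vv_1,\vv_2,\vv_3\in\mathbb S^{n-1}}\bigl|\tfrac1m\sum_j\nj{\va_j,\vv_1}\nj{\va_j,\vv_2}\nj{\va_j,\vv_3}^2\bigr|$ is \emph{not} bounded by an absolute constant when $m=O(n)$, nor even when $m=O(n\log n)$. Indeed, for any index $k$, choosing $\vv_1=\vv_2=\vv_3=\va_k/\|\va_k\|$ makes the single term $j=k$ contribute $\|\va_k\|^4/m\approx n^2/m$, and the remaining terms do not cancel it. Even if you keep $\vv_1=\vx$ fixed (which is all you actually need), taking $\vv_2=\vv_3=\va_k/\|\va_k\|$ for the index $k$ maximizing $|\va_k^\T\vx|$ still yields a contribution of order $|\va_k^\T\vx|\,\|\va_k\|^3/m\approx\sqrt{\log m}\cdot n^{3/2}/m\gtrsim\sqrt{n/\log n}$ at $m=Cn\log n$. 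No net argument can rescue a bound that is false, and the cross term cannot be absorbed into the discarded third term either, since the quadratic form $2a^2+6ab+3b^2$ (with $a=\nj{\va_j,\vx}$, $b=\nj{\va_j,\vh}$) is indefinite.

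The paper's proof sidesteps this obstruction by \emph{not} expanding around $\vx$. It keeps the two pieces of the Hessian separate: a uniform \emph{lower} bound $\tfrac1m\sum_j\nj{\va_j,\vu}^2\nj{\va_j,\hz}^2\ge 1+2(\vu^\T\hz)^2-2\epsilon$ over all $\vu,\hz\in\mathbb S^{n-1}$, obtained by truncation at a \emph{constant} level and requiring only $m\gtrsim n$; and a uniform \emph{upper} bound $\tfrac1m\sum_j\nj{\va_j,\vu}^2\nj{\va_j,\vx}^2\le 1+2(\vu^\T\vx)^2+\epsilon$ (Lemma~\ref{bp14}), which is where $n\log n$ enters. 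The key is that $\vx$ is \emph{fixed}: one decomposes $\vu=(\vu^\T\vx)\vx+\vu^\perp$, exploits the independence of $\va_j^\T\vx$ and $\va_j^\T\vu^\perp$, and conditions on $b_j=(\va_j^\T\vx)^2$, which satisfy $\max_j b_j\lesssim\log m$ with high probability. The resulting $(\vu^\T\vz)^2-(\vu^\T\vx)^2$ difference is then handled deterministically via $|(\vu^\T\vz)^2-(\vu^\T\vx)^2|\le\|\vz+\vx\|\,\|\vz-\vx\|$. So you have the roles of (i) and (ii) reversed: the uniform-in-$\vu$ lower bound is the easy step (achievable with $O(n)$ samples via a fixed-level truncation, cf.\ Corollary~\ref{bp9.01}), while the $\log n$ is forced by the \emph{upper} bound on the term with the fixed vector $\vx$, precisely because controlling fourth-order Gaussian sums from above uniformly over one free direction already requires the $\max_j(\va_j^\T\vx)^2\lesssim\log m$ input.
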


The proofs of the above lemmas are given in Section 3.  Lemma \ref{lem:negativeGradfX} guarantee the gradient of $F(\vz)$ does not vanish in $\mathcal{R}_2$. Thus the critical points of $F(\vz)$ can only occur in $\mathcal{R}_1$ and $\mathcal{R}_3$. However, Lemma \ref{lem:non-vanishing-gradfZ} shows that at any critical point in $\mathcal{R}_1$, $F(\vz)$ has a negative directional curvature. Finally, Lemma \ref{lem:convex} implies that  $F(\vz)$ is strongly convex in $\mathcal{R}_3$. Recognizing that  $\nabla F( \vx)=0$ and $ \vx \in \mathcal{R}_3$, thus $ \vx$ is the local minimizer. 
Putting it all together, we can establish Theorem \ref{thmA} as shown below.

{\noindent\it \textbf{Proof of Theorem} \ref{thmA}}~
For any $\vz $ being a possible critical point and satisfying $ |\sigma | \le \sqrt{\frac{\sqrt 3-1} 2} -0.01$, Lemma \ref{lem:non-vanishing-gradfZ} shows that $F(\vz)$ has a negative directional curvature. For any $\vz$ satisfying $ |\sigma | \ge 0.5$ and $\mbox{dist}(\vz,\vx) \ge 0.01$,   Lemma \ref{lem:negativeGradfX} demonstrates that the gradient $\nabla F(\vz)\neq 0$. Finally, when $\vz$ is very close to the target solutions $\pm \vx$, $F(\vz)$ is strongly convex and $\pm \vx$ are the global solutions.
\qed

\section{Proofs of  technical results in Section \ref{sec:mainlemma}}
The basic idea of the proof is to show for each critical point except $\pm \vx$ there is a negative curvature direction. 

\subsection{Proof of Lemma \ref{lem:non-vanishing-gradfZ}}
{\noindent\it \textbf{Proof}}~
For any $\vz \ne 0$, denote 
\begin{align} \label{eq:nota}
&\vz =\sqrt R \hat \vz \quad \mbox{with} \quad
\hat \vz \in \mathbb S^{n-1}.
\end{align}
Recall the function 
\begin{equation} \label{eq:loss1}
F(\vz)=\frac 1m \sum_{j=1}^m \xkh{\abs{\nj{\va_j,\vz}}^2-\abs{\nj{\va_j,\vx}}^2}^2.
\end{equation}
Through a simple calculation,   the Hessian  of the function $F(\vz)$ along the $\xi$ direction can be denote by
\begin{align*}
H_{\xi\xi}(\vz):= \xi^\T \nabla^2 F(\vz) \xi  & = \frac 4m \sum_{j=1}^m (\va_j^\T \xi)^2 \xkh{3(\va_j^\T \vz)^2 - (\va_j^\T \vx)^2}.
\end{align*}
We first show that $\vz=0$ is a local maximum.  Indeed, 
by Corollary \ref{bp9.01} in the appendix,
if $m\gtrsim n$ then it holds with probability at least $1-\exp(-cm)$ that 
\begin{align*}
H_{\xi\xi}(0) = - \frac 4m \sum_{j=1}^m (\va_j^\T \xi)^2 (\va_j^\T \vx)^2  \le -c_1<0, \qquad\forall\, \xi \in \mathbb S^{n-1}.
\end{align*}
Here, $c$ and $c_1$ are universal positive constants. This means that with high probability the Hessian $\nabla^2 f(0)$
is strictly negative definite.  

Next, we consider the case where $\vz \neq 0$ and prove the loss function \eqref{eq:loss1} has a negative curvature at each critical point in the regime
$\mathcal{R}_1=|\nj{\hat \vz, \vx} | \le  \sqrt{\frac{\sqrt 3-1} 2} -\varepsilon_0 $. Through a simple calculation, we have
\[
\nabla F(\vz)=\frac 2m \sum_{j=1}^m \xkh{\abs{\nj{\va_j,\vz}}^2-\abs{\nj{\va_j,\vx}}^2} \va_j \va_j^\T \vz.
\]
If at some $\vz \neq 0$ we have a critical point, then 
\[
\nj{\nabla F(\vz), \vz}=\frac {2R}m \sum_{j=1}^m (\va_j^\T \hz)^4 - \frac 2m  \sum_{j=1}^m (\va_j^\T \hz)^2 (\va_j^\T \vx)^2=0,
\]
where $R$ is defined by \eqref{eq:nota}.  By Lemma \ref{bp8.02}, if $m\gtrsim n$ then it holds with probability at least $1-\exp(-c_2 m)$ that $\frac {1}m \sum_{j=1}^m (\va_j^\T \hz)^4 \ge 1$. Here, $c_2>0$ is a universal constant. 
 Consequently, if at some $\vz \ne 0$ we have a critical point, then it holds
\begin{equation} \label{Au26e0-1}
R=\frac{\frac 1 m  \sum_{j=1}^m (\va_j^\T \hz)^2 (\va_j^\T \vx)^2}{\frac {1}m \sum_{j=1}^m (\va_j^\T \hz)^4}.
\end{equation}
On the other hand, the Hessian at this point along the direction $\vx$ is
\[
\frac 14 H_{\vx \vx}(\vz)= 3 R \cdot \frac 1m \sum_{j=1}^m  (\va_j^\T \hat \vz)^2 (\va_j^\T \vx)^2
- \frac 1 m \sum_{j=1}^m (\va_j^\T \vx)^4.
\]
Using the equation \eqref{Au26e0-1}, we obtain
\begin{equation} \label{eq:hee1}
   \frac 14 H_{\vx \vx}(\vz)\cdot \frac {1}m \sum_{j=1}^m (\va_j^\T \hz)^4 =3\Big(\frac 1 m  \sum_{j=1}^m (\va_j^\T \hz)^2 (\va_j^\T \vx)^2\Big)^2-\frac {1}m \sum_{j=1}^m (\va_j^\T \hz)^4  \cdot \frac 1 m \sum_{j=1}^m (\va_j^\T \vx)^4.
\end{equation}
We claim that  for any $0<\epsilon<1$, when $m\ge C (\epsilon)  n$,  with probability at least $1-\frac{c}{m^2}-\exp(-c(\epsilon) m)$,  the following holds:
\begin{equation} \label{eq:claimB}
\frac 1 m  \sum_{j=1}^m (\va_j^\T \hz)^2 (\va_j^\T \vx)^2 \le 2\sigma^2+1+2\epsilon+2\sqrt{\epsilon}\sqrt{\frac {1}m \sum_{j=1}^m (\va_j^\T \hz)^4},
\end{equation}
where $C(\epsilon), c(\epsilon)>0$ are constants depending only on $\epsilon$ and $c>0$ is a universal constant.
On the other hand, by Lemma \ref{bp8.02}, when $m\ge C(\epsilon) n$,  with probability at least $1-3\exp(-c(\epsilon)m)$ it holds
\begin{equation} \label{eq:lowax4}
\frac {1}m \sum_{j=1}^m (\va_j^\T \vx)^4 \ge 3-\epsilon.
\end{equation}
Putting \eqref{eq:claimB} and \eqref{eq:lowax4} into \eqref{eq:hee1}, we obtain that for $m\ge C(\epsilon) n$,  with probability at least $1-\frac c {m^2}-4\exp(-c(\epsilon)m)$, it holds
\[
\frac 14 H_{\vx \vx}(\vz) \cdot \frac {1}m \sum_{j=1}^m (\va_j^\T \hz)^4 \le 3(2\sigma^2+1)^2 -  \frac {3-13\epsilon}m \sum_{j=1}^m (\va_j^\T \hz)^4 +60\sqrt\epsilon \sqrt{\frac {1}m \sum_{j=1}^m (\va_j^\T \hz)^4}+24\epsilon.
\]
Since the term $\frac {1}m \sum_{j=1}^m (\va_j^\T \hz)^4$ is the sum of nonnegative random variables, the deviation below its expectation is bounded and the lower-tail
is well-behaved. More concretely, by Lemma \ref{bp8.02}, if $m\ge C(\epsilon) n$ then  with probability at least $1-3\exp(-c(\epsilon) m)$ we have 
\begin{align*}
\frac {1}m \sum_{j=1}^m (\va_j^\T \hz)^4 \ge 3-\epsilon >1 , \qquad\forall\, \hat \vz  \in \mathbb S^{n-1}.
\end{align*}
It immediately gives 
\begin{eqnarray*}
\frac 14 H_{\vx \vx}(\vz) \cdot \frac {1}m \sum_{j=1}^m (\va_j^\T \hz)^4 &\le& 3(2\sigma^2+1)^2 -(3-13\epsilon-60\sqrt\epsilon)\cdot \frac {1}m \sum_{j=1}^m (\va_j^\T \hz)^4 +24\epsilon\\
&\le &3(2\sigma^2+1)^2-9 + 66\epsilon+180\sqrt\epsilon \\
&\le & - c_0 \varepsilon_0 + 66\epsilon+180\sqrt\epsilon \\
&< & -\delta_1
\end{eqnarray*}
for some constant $\delta_1>0$
by taking $\epsilon$ to be sufficiently small (depending on $\varepsilon_0$), provided $\abs{\sigma}\le \sqrt{\frac{\sqrt 3-1} 2}-\varepsilon_0$. Here,  $c_0$ is an absolute constant.
This means the Hessian matrix has a negative curvature along the direction $\vx$,  which  proves the lemma.

Finally, it remains to prove the claim \eqref{eq:claimB}.  Due to the heavy tail of fourth powers of Gaussian random variables, to prove the result with sampling complexity $m\gtrsim n$, we need to decompose it into several parts by a Lipschitz continuous truncated function. To do this,   take $\phi \in C_c^{\infty}(\mathbb R)$,  $0\le \phi\le 1$ for all
$z\in \mathbb R$,  $\phi\equiv 1$ for $|z|\le 1$
and $\phi \equiv 0$ for $|z|>2$. We can write
\begin{eqnarray*}
 \frac 1 m  \sum_{j=1}^m (\va_j^\T \hz)^2 (\va_j^\T \vx)^2   &=&  \frac 1m \sum_{j=1}^m (\va_j^\T \vx)^2 
\phi(\frac{\va_j^\T \vx} N)
(\va_j^\T \hz)^2 \phi(\frac{\va_j^\T \hz} N)  \\
&& + \frac 1m \sum_{j=1}^m (\va_j^\T \vx)^2 
(1-\phi(\frac{\va_j^\T \vx} N))
(\va_j^\T \hz)^2 \phi(\frac{\va_j^\T \hz} N)  \\
&&  +\frac 1m \sum_{j=1}^m (\va_j^\T \vx)^2 
(\va_j^\T \hz)^2 (1-\phi(\frac{\va_j^\T \hz} N) )\\
& :=& B_1 +B_2+B_3.
\end{eqnarray*}

Next, we give upper bounds for the terms $B_1, B_2$ and $B_3$. Thanks to the smooth cut-off, $B_1$ can be well bounded. By Lemma \ref{bp9},   
for any $0<\epsilon<1/2$,  there exist constants $C',c'>0$ depending on $N$ such that if  $m\ge C' \epsilon^{-2} \log(1/\epsilon) n$ then with probability at least $1-\exp(-c' \epsilon^2 m)$ it holds that 
\begin{align*}
\Abs{\frac 1m \sum_{j=1}^m (\va_j^\T \vx)^2 
\phi(\frac{\va_j^\T \vx} N)
(\va_j^\T \hz)^2 \phi(\frac{\va_j^\T \hz} N)  -\mathbb E  (\va_1^\T \vx)^2 
\phi(\frac{\va_1^\T \vx} N)
(\va_1^\T \hz)^2 \phi(\frac{\va_1^\T \hz} N) } \le \epsilon, \qquad \forall\, \hat \vz \in \mathbb S^{n-1}.
\end{align*}
Moreover,  note that 
\begin{eqnarray*}
&& \Abs{ \E (\va_1^\T \vx)^2 
\phi(\frac{\va_1^\T \vx} N)
(\va_1^\T \hz)^2 \phi(\frac{\va_1^\T \hz} N)- \E (\va_1^\T \vx)^2 (\va_1^\T \hz)^2 } \\
&=& \Abs{\mathbb E  (\va_1^\T \vx)^2 (\va_1^\T \hz)^2
\cdot \xkh{ ( \phi(\frac {\va_1^\T \vx} N) -1) \phi(\frac {\va_1^\T \hz} N) + 
\phi(\frac{\va_1^\T \hz} N) -1} } \\
&\le & \mathbb E (\va_1^\T \vx)^2 (\va_1^\T \hz)^2
\cdot (\chi_{|\va_1^\T \vx| \ge N} + \chi_{|\va_1^\T \hz|\ge N} ).
\end{eqnarray*}
Since $\E (\va_1^\T \vx)^2 (\va_1^\T \hz)^2 = 2 \sigma^2 +1$, it then follows from Lemma \ref{bp4} that  for $N$ sufficiently large (depending only
on $\epsilon$) we have
\[
\E (\va_1^\T \vx)^2 
\phi(\frac{\va_1^\T \vx} N)
(\va_1^\T \hz)^2 \phi(\frac{\va_1^\T \hz} N) \le \E (\va_1^\T \vx)^2 (\va_1^\T \hz)^2+\epsilon=2 \sigma^2 +1+\epsilon,
\]
which means 
\[
B_1:=\frac 1m \sum_{j=1}^m (\va_j^\T \vx)^2 
\phi(\frac{\va_j^\T \vx} N)
(\va_j^\T \hz)^2 \phi(\frac{\va_j^\T \hz} N)  \le 2\sigma^2+1+2\epsilon.
\]
For the terms $B_2$ and $B_3$,  when $N$ is sufficiently large depending only
on $\epsilon$, applying Lemma \ref{bp8} gives 
 \begin{align*}
&\frac 1m \sum_{j=1}^m (\va_j^\T \vx )^4 \chi_{|\va_j^\T \hz |>N} \le  \epsilon,
\qquad \forall\, \hat \vz \in \mathbb S^{n-1}
\end{align*}
with probability at least $1-\frac{c''}{m^2}-\exp(-c'''\epsilon^2 m)$ provided $m\ge C \epsilon^{-4} \log(1/\epsilon) n$. Here, $C,c''$ and $c'''$ are universal positive constants.
Thus for $B_2$ and $B_3$,  by Cauchy-Schwarz inequality,  we have 
\begin{align*}
B_2 &\le \;  \sqrt{\frac 1 m \sum_{j=1}^m (\va_j^\T \hz)^4}
\sqrt{ \frac 1m \sum_{j=1}^m (\va_j^\T \vx )^4 \chi_{|\va_j^\T \vx|>N} }
\le  \sqrt \epsilon \sqrt{\frac 1 m \sum_{j=1}^m (\va_j^\T \hz)^4} ; \\
B_3 &\le \; \sqrt{\frac 1 m \sum_{j=1}^m (\va_j^\T \hz)^4}
( \frac 1m \sum_{j=1}^m (\va_j^\T \vx)^4 \chi_{|\va_j^\T \hz|>N} )
\le  \sqrt\epsilon  \sqrt{\frac 1 m \sum_{j=1}^m (\va_j^\T \hz)^4} .
\end{align*}
Collecting the above estimators together gives that when $m\ge C (\epsilon)  n$,  with probability at least $1-\frac{c}{m^2}-\exp(-c(\epsilon) m)$, it holds
\[
\frac 1 m  \sum_{j=1}^m (\va_j^\T \hz)^2 (\va_j^\T \vx)^2 \le 2\sigma^2+1+2\epsilon+2\sqrt{\epsilon}\sqrt{\frac {1}m \sum_{j=1}^m (\va_j^\T \hz)^4},
\]
which completes the proof of claim \eqref{eq:claimB}.
\qed

\subsection{Proof of Lemma \ref{lem:negativeGradfX}}
{\noindent\it \textbf{Proof}}~
Without loss of generality we can assume $\sigma:=\nj{\hz,\vx}\ge 0$.
For any $\vz \ne 0$, denote 
\begin{align} \label{eq:nota1}
&\vz =\sqrt R \hat \vz \quad \mbox{with} \quad
\hat \vz \in \mathbb S^{n-1}.
\end{align}
Recognize that 
\[
\nabla F(\vz)=\frac 2m \sum_{j=1}^m \xkh{\abs{\nj{\va_j,\vz}}^2-\abs{\nj{\va_j,\vx}}^2} \va_j \va_j^\T \vz.
\]
At any potential critical point $\vz = \sqrt R \hat \vz$, we should have $\nabla F(\vz)  =0$. Thus $\nj{\nabla F(\vz),\vz}=0$ gives
\[
 \frac Rm \sum_{j=1}^m (\va_j^\T \hz )^4 =
  \frac 1 m \sum_{j=1}^m (\va_j^\T \hz)^2 (\va_j^\T \vx)^2.
\]
Similarly, according to $\nj{\nabla F(\vz),\vz}=0$ we have
\[
\frac R m \sum_{j=1}^m (\va_j^\T \hz)^3 (\va_j^\T \vx)
= \frac 1 m \sum_{j=1}^m (\va_j^\T \hz)  (\va_j^\T \vx)^3 
\]
Combining the above two equations leads to the following fundamental relation for any critical point $\vz=\sqrt R \hat \vz$:
\begin{align} \label{c_fun1}
 \frac 1 m \sum_{j=1}^m (\va_j^\T \hz)^2 (\va_j^\T \vx)^2 
\cdot  \frac 1m \sum_{j=1}^m (\va_j^\T \vx) (\va_j^\T \hz)^3 
 =  \frac 1m \sum_{j=1}^m (\va_j^\T \hz )^4 
\cdot   \frac 1 m \sum_{j=1}^m (\va_j^\T \hz)  (\va_j^\T \vx)^3 .
\end{align}
Observe that 
\[
 \mathbb E \frac 1 m \sum_{j=1}^m (\va_j^\T \hz)  (\va_j^\T \vx)^3= 3\sigma^2,
\]
where $\sigma:=\nj{\hz,\vx}\ge 0$. By Corollary \ref{bp11.01}, for any $0<\epsilon<1/2$, if $m\ge C(\epsilon) n$ then with probability at least $1-\frac{c(\epsilon)}{m^2}$ it holds
\begin{equation} \label{eq:F13}
\Abs{ \frac 1 m \sum_{j=1}^m (\va_j^\T \hz)  (\va_j^\T \vx)^3 -  3\sigma}\le \epsilon.
\end{equation}
For the convenience, we denote 
\[
A:=\frac 1m \sum_{j=1}^m (\va_j^\T \hz )^4,\qquad B:= \frac 1 m \sum_{j=1}^m (\va_j^\T \hz)^2 (\va_j^\T \vx)^2 \quad \mbox{and} \quad A_1:=\frac 1m \sum_{j=1}^m (\va_j^\T \vx) (\va_j^\T \hz)^3 .
\]
We claim that for any $0<\epsilon<1/2$, if $m\ge C(\epsilon) n$ then with probability at least $1-\frac{c(\epsilon)}{m^2}-\exp(-c'(\epsilon)m)$ the following holds:
\begin{equation} \label{eq:claB}
 \abs{B -  2\sigma^2-1} \le 2\epsilon+2 \epsilon^{\frac23} A^{\frac13}
\end{equation}
and 
\begin{equation} \label{eq:clamA1}
 \abs{A_1 -  3\sigma} \le 2\epsilon+ \epsilon^{\frac13} A^{\frac23}.
\end{equation}
Putting \eqref{eq:F13}, \eqref{eq:claB} and \eqref{eq:clamA1} into \eqref{c_fun1}, we immediately have 
\begin{align} \label{Au26_tmp002}
(2\sigma^2+1 +O(\epsilon) A^{\frac 13} ) \cdot(3\sigma+O(\epsilon) A^{\frac 23} )=
 A (3\sigma+ O(\epsilon)).
\end{align}
with probability at least $1-\frac{c(\epsilon)}{m^2}-\exp(-c'(\epsilon)m)$, provided $m\ge C(\epsilon) n$. 
By Lemma \ref{bp8.02}, for $m\ge C(\epsilon) n$ with probability at least $1-\exp(-c'(\epsilon)m )$ it holds that
\begin{align} \label{boundA}
A\ge 3-\epsilon.
\end{align}
In particular, we have $A\ge 1$. Then we can simplify
\eqref{Au26_tmp002} as
\begin{align} \label{Au26_tmp003}
3\sigma(2\sigma^2+1) = A \cdot (3 \sigma +O(\epsilon) )
\ge (3-\epsilon) (3\sigma +O(\epsilon) ).
\end{align}
Note that $\vz \in \mathcal{R}_2$, which means $\sigma \ge 0.5$. On the other hand, recall that $\sigma \le 1$. By taking $\epsilon>0$ to be sufficiently small, it then follows from
\eqref{Au26_tmp003} that $\sigma$ must be sufficiently close to $1$. It implies that for any $0<\eta<1$, if $m\ge C(\eta) n$ then with probability at least $1-\frac{c(\eta)}{m^2}-\exp(-c'(\eta)m )$ it holds
\begin{equation} \label{eq:sigmbound}
\sqrt{1-\eta} \le \sigma \le \sqrt{1-\eta}.
\end{equation}
Furthermore,  it follows from the equality in \eqref{Au26_tmp003} that
\begin{align*}
A \le  2\sigma^2+1 + \frac 1{\sigma} O(\epsilon) \le 3 +\eta. 
\end{align*}
Combining with \eqref{boundA} gives the  desired two-way bound for $A$ that 
\[
3-\eta \le A \le 3+\eta.
\]
On the other hand, it follows from \eqref{eq:claB}  that if $m\ge C(\epsilon) n$ then with probability at least $1-\frac{c(\epsilon)}{m^2}-\exp(-c'(\epsilon)m)$ it holds
\begin{align*}
B= 2\sigma^2+1 +O(\epsilon)A^{\frac 13}.
\end{align*}
This immediately means that the term $B$ also has the desired two-way bounds 
\[
3-\eta \le B \le 3+\eta.
\]
Finally if $0\ne \vz=\sqrt R \hat \vz $ is a critical point, then by \eqref{Au26e0-1}, we have
\begin{align*}
R= \frac B A.
\end{align*}
Since we have already shown that $B$ and $A$ are well-bounded, it then follows that 
\begin{align} \label{eq:Rbound}
1-\eta \le R \le 1+\eta.
\end{align}
Combining \eqref{eq:sigmbound} and \eqref{eq:Rbound}, we obtain that if $\vz:=\sqrt R \hz$ is a critical point then it holds
\[
\mbox{dist}(\vz,\vx)=\sqrt{R+1-2\sqrt{R}\sigma} \le \sqrt{3\eta} \le \delta_0
\]
by taking $\eta:=\delta_0^2/3$. This contradicts to the condition that $\mbox{dist}(\vz,\vx) \ge \delta_0$ for all $\vz \in \mathcal{R}_2$. 
Thus, the loss function $F(\vz)$ has no critical point on $ \mathcal{R}_2$. We arrive at the conclusion.

Finally, it remains to prove the claims \eqref{eq:claB} and \eqref{eq:clamA1}.
Let $\phi\in C_c^{\infty}(\mathbb R)$ be such that $0\le \phi(x)\le 1$ for all
$x\in \mathbb R$, $\phi(x)=1$ for $|x|\le 1$ and $\phi(x)=0$ for $|x|\ge 2$. 
Then we can write
\begin{align*}
& B=  \frac 1m \sum_{j=1}^m (\va_j^\T \hz)^2 (\va_j^\T \vx)^2  
\phi(\frac {\va_j^\T \hz} N) \phi(\frac{\va_j^\T \vx} N) + r_B, \\
& A_1=\frac 1m \sum_{j=1}^m (\va_j^\T \hz)^3 (\va_j^\T \vx)
\phi(\frac {\va_j^\T \hz} N) +r_1,
\end{align*}
where
\begin{align*}
& |r_B| \le \frac 1m \sum_{j=1}^m (\va_j^\T \hz)^2 (\va_j^\T \vx)^2  
(\chi_{|\va_j^\T \hz|\ge N} + \chi_{|\va_j^\T \vx|\ge N} ) ; \\
& |r_1| \le \frac 1m \sum_{j=1}^m |\va_j^\T \hz |^3 |\va_j^\T \vx|
\cdot \chi_{|\va_j^\T \hz | \ge N}.
\end{align*}
Through a simple calculation, we have
\[
\mathbb E  \frac 1 m \sum_{j=1}^m (\va_j^\T \hz)^2 (\va_j^\T \vx)^2 = 
2\sigma^2+1 \quad \mbox{and}\quad \mathbb E \frac 1m \sum_{j=1}^m (\va_j^\T \vx) (\va_j^\T \hz)^3
=3\sigma.
\]
Using the same procedure as the claim \eqref{eq:claimB}, it is easy to derive from Lemma \ref{bp9}, Lemma \ref{bp4} and Lemma \ref{bp12} that 
for any $0<\epsilon<1$ if $m\ge C(\epsilon)n$ then with probability at least $1-\frac{c(\epsilon)}{m^2}-\exp(-c'(\epsilon)m)$ it holds
\[
\Abs{\frac 1m \sum_{j=1}^m (\va_j^\T \hz)^2 (\va_j^\T \vx)^2  
\phi(\frac {\va_j^\T \hz} N) \phi(\frac{\va_j^\T \vx} N)-2\sigma^2-1} \le 2\epsilon
\]
and 
\[
\Abs{\frac 1m \sum_{j=1}^m (\va_j^\T \hz)^3 (\va_j^\T \vx)
\phi(\frac {\va_j^\T \hz} N)-3\sigma} \le 2\epsilon.
\]
To deal with the error terms,  observe that 
\begin{align*}
\sum_{j=1}^m (\va_j^\T \hz)^2 \chi_{|\va_j^\T \hz|>N}
(\va_j^\T \vx)^2 
& =\sum_{j=1}^m |\va_j^\T \hz |^{\frac 23} \chi_{|\va_j^\T \hz|>N}
\cdot (\va_j^\T \vx)^2 \cdot |\va_j^\T \hz|^{\frac 43} \notag \\
&\le ( \sum_{j=1}^m |\va_j^\T \hz| 
\chi_{|\va_j^\T \hz|>N}
|\va_j^\T \vx|^3)^{\frac 23}
( \sum_{j=1}^m |\va_j^\T \hz|^4)^{\frac 13} ; \\
 \sum_{j=1}^m |\va_j^\T \hz|^3 \chi_{|\va_j^\T \hz|>N}
|\va_j^\T \vx|  
& = \sum_{j=1}^m |\va_j^\T \hz|^{\frac 13} \chi_{|\va_j^\T \hz|>N}
|\va_j^\T \vx| \cdot |\va_j^\T \hz|^{\frac 83} \notag \\
&\le ( \sum_{j=1}^m |\va_j^\T \hz| 
\chi_{|\va_j^\T \hz|>N}
|\va_j^\T \vx|^3)^{\frac 13}
( \sum_{j=1}^m |\va_j^\T \hz|^4)^{\frac 23} ; \\
\sum_{j=1}^m (\va_j^\T \hz)^2 \chi_{|\va_j^\T \vx |>N}
(\va_j^\T \vx)^2 
&=\sum_{j=1}^m |\va_j^\T \hz|^{\frac 23} \chi_{|\va_j^\T \vx|>N}
(\va_j^\T \vx)^2 \cdot |\va_j^\T \hz|^{\frac 43} \notag \\
&\le ( \sum_{j=1}^m |\va_j^\T \hz| 
\chi_{|\va_j^\T \vx|>N}
|\va_j^\T \vx|^3)^{\frac 23}
( \sum_{j=1}^m |\va_j^\T \hz|^4)^{\frac 13}.
\end{align*}
Using the  Lemma \ref{bp12} again, we obtain that when $m\ge C(\epsilon)n$, with probability at least $1-\frac{c(\epsilon)}{m^2}-\exp(-c'(\epsilon)m)$ it holds
\[
|r_B| \le 2\epsilon^{\frac23} A^{\frac13} \quad \mbox{and} \quad |r_1| \le \epsilon^{\frac13} A^{\frac23}.
\]
Thus,  we complete the proofs of claim \eqref{eq:claB} and \eqref{eq:clamA1}.
\qed

\subsection{Proof of Lemma \ref{lem:convex}}
This section goes in the direction of showing the loss function is strongly convex in a neighborhood of $\pm \vx$, as demonstrated in Lemma \ref{lem:convex}.

\vspace{2em}
{\noindent\it \textbf{Proof}}~
Recall that along any direction $\vu \in \mathbb S^{n-1}$,
\begin{align*}
\vu^\T \nabla^2 F(\vz) \vu &= \frac 4m \sum_{j=1}^m (\va_j^\T  \vu)^2 \xkh{ 3(\va_j^\T \vz)^2 - (\va_j^\T \vx)^2}.
\end{align*}
To prove the lemma, it suffices to give a lower bound for the first term and an upper bound for the second term. Indeed, for the second term
$\frac 1m \sum_{j=1}^m (\va_j^\T  \vu)^2(\va_j^\T \vx)^2 $, by Lemma \ref{bp14}, for any $0<\epsilon< 1/2$ if $m\ge C(\epsilon) n \log n$ then with probability at least
$1-\frac{c(\epsilon)}{m^2}$ it holds
\[
\frac 1m \sum_{j=1}^m (\va_j^\T  \vu)^2(\va_j^\T \vx)^2 \le  1+2(\vu^\T \vx)^2 +\epsilon \quad \mbox{for any} \quad \vu\in \mathbb S^{n-1}.
\]
Here, we use the fact that $ \mathbb E (\va_1^\T \vu)^2 (\va_1^\T \vx)^2 =1+2 (\vu^\T \vx)^2$.
For the first term $\frac 1m \sum_{j=1}^m (\va_j^\T  \vu)^2(\va_j^\T \vz)^2 $, denote $\vz:=\sqrt R \hz$ where $\hz\in \mathbb S^{n-1}$.
Take $\phi \in C_c^{\infty}(\mathbb R)$ such that
$0\le \phi(x)\le 1$ for all $x$, $\phi(x)=1$ for $|x|\le 1$ and $\phi(x)=0$ for $|x|\ge 2$. It is easy to see
\[
\frac 1m \sum_{j=1}^m (\va_j^\T  \vu)^2(\va_j^\T \vz)^2 \ge \frac R m\sum_{j=1}^m (\va_j^\T \vu)^2 \phi(\frac {\va_j^\T \vu} N)
(\va_j^\T \hz)^2 \phi(\frac {\va_j^\T \hz} N).
\]
By Lemma \ref{bp9},  for any $0<\epsilon<1/2$ when $m\ge C(\epsilon) n$ with probability at least $1-\exp(-c'(\epsilon)m)$, it holds that
\begin{align*}
\Abs{ \frac 1 m\sum_{j=1}^m (\va_j^\T \vu)^2 \phi(\frac {\va_j^\T \vu} N)
(\va_j^\T \hz)^2 \phi(\frac {\va_j^\T \hz} N) -
\mathbb E\xkh{ (\va_1^\T \vu)^2 \phi(\frac {\va_1^\T \vu} N)
(\va_1^\T \hz)^2 \phi(\frac {\va_1^\T \hz} N} } \le \epsilon,
\quad \forall\, \vu, \hz \in \mathbb S^{n-1}.
\end{align*}
On the other hand, it follows from Lemma \ref{bp4} that there exists $N>0$ sufficiently large (depending
only on $\epsilon$) such that
\begin{eqnarray*}
&&\Bigl|\mathbb E( (\va_1^\T \vu)^2 \phi(\frac {\va_1^\T \vu} N)(\va_1^\T \hz)^2 \phi(\frac {\va_1^\T \hz} N) 
- \mathbb E (\va_1^\T \vu)^2 (\va_1^\T \hz)^2 \Bigr| \\
&=& \Abs{\mathbb E  (\va_1^\T \vu)^2 (\va_1^\T \hz)^2
\cdot \xkh{\Big( \phi(\frac {\va_1^\T \vu} N) -1\Big) \phi(\frac {\va_1^\T \hz} N) + 
\phi(\frac{\va_1^\T \hz} N) -1} } \notag\\
&\le & \mathbb E (\va_1^\T \vu )^2 (\va_1^\T \hz)^2
\cdot (\chi_{|\va_1^\T \vu | \ge N} + \chi_{|\va_1^\T \hz|\ge N} )\\
&\le & \epsilon.
\end{eqnarray*}
Collecting the above estimators, we have that  when $m\ge C(\epsilon) n \log n$,  with probability at least
$1-\frac{c(\epsilon)}{m^2}-\exp(-c'(\epsilon ) m)$, it holds
\begin{eqnarray}
\frac 1 4 \vu^\T \nabla^2 F(\vz) \vu & \ge &  3 R \xkh{  1+2(\vu^\T \hz)^2
-2 \epsilon}  - 1- 2(\vu^\T \vx)^2 -\epsilon \nonumber\\
& = & (\vu^\T \vz)^2 -(\vu^\T \vx)^2 + 3R-1 -(6R+1)\epsilon \nonumber\\
&\ge & -\norm{\vz+\vx} \norm{\vz-\vx}+3R-1 -(6R+1)\epsilon \label{eq:nabla2}
\end{eqnarray}
for all $\vu,\hz \in \mathbb{S}^{n-1}$.
Here, we use the fact that $E(\va_1^\T \vu)^2 (\va_1^\T \hz)^2=1+2(\vu^\T \hz)^2$ in the first inequality.
Recall that $\vz \in \mathcal{R}_3$. It means 
\begin{equation} \label{eq:distance}
\mbox{dist}(\vz,\vx) \le \delta_0.
\end{equation}
Without loss of generality we assume $\sigma:=\nj{\hz,\vx}\ge 0$.  It then follows from \eqref{eq:distance} that 
\[
\abs{\sqrt R-1}\le \norm{\vz-\vx}\le \delta_0\quad \mbox{and} \quad \norm{\vz+\vx}  \le 2+\delta_0.
\]
Putting it into \eqref{eq:nabla2}, we have 
\[
\frac 1 4 \vu^\T \nabla^2 F(\vz) \vu \ge  3(1-\delta_0)^2-1-6 \xkh{(1+\delta_0)^2+1}\epsilon - (2+\delta_0) \delta_0.
\]
Note that $\delta_0 \le 1/4$. By taking $\epsilon>0$ sufficiently small we arrive at the conclusion.
\qed


\section{Appendix: Preliminaries and supporting lemmas}
 In this section we shall adopt the following convention. 
\begin{itemize}
\item  For a random variable
$Y$, we shall sometimes use  ``mean" to denote $\mathbb E Y$. This notation is particularly handy when $Y$ is given by a sum of random variables involving
various truncations and modifications. 

\item For a random variable $Y$,  the sub-exponential orlicz norm $\|Y\|_{\psi_1}$ is defined as
$$\|Y\|_{\psi_1} = \inf\{t>0: \mathbb E e^{\frac {|X|} t} \le 2 \}.$$ In particular
$\|Y\|_{\psi_1} \le K \Leftrightarrow \mathbb E e^{\frac {|Y|} K} \le 2$.  Similarly the 
sub-gaussian orlicz norm $\| Y\|_{\psi_2}$ is
\begin{align*}
\|Y\|_{\psi_2} = \inf\{t>0: \mathbb E e^{\frac {|X|^2} t} \le 2 \}.
\end{align*}

\item We denote $\{\va_j\}_{j=1}^m$ as a sequence of i.i.d. random vectors which are
copies of a standard Gaussian random vector $\va:\; \Omega \to \mathbb R^n$ satisfying
$\va\sim \mathcal N(0, I_n)$. 
\end{itemize}

\begin{lemma}[Hoeffding's inequality, \cite{Vershynin2018}] \label{bp0}
Let $X_i$, $1\le i \le m$ be independent, mean zero, sub-gaussian random variables.
Let $b=(b_1,\cdots,b_m) \in \mathbb R^m$.
Then for every $t\ge 0$, we have
\begin{align*}
\mathbb P ( \Bigl| \sum_{i=1}^m b_i X_i \Bigr|\ge t)
\le 2 \exp \Bigl( - \frac {ct^2} { K^2 \|b\|_2^2 } \Bigr),
\end{align*}
where $\max_{i} \| X_i\|_{\psi_2} = K$.
\end{lemma}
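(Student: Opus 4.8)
The plan is to prove this by the classical Chernoff (exponential moment) method. The one genuinely nontrivial ingredient is the standard characterization of sub-gaussianity in terms of the moment generating function: if $X$ is mean zero with $\|X\|_{\psi_2}\le K$, then
\[
\E\, e^{\lambda X}\le e^{c_0\lambda^2 K^2}\qquad\text{for every }\lambda\in\R,
\]
where $c_0>0$ is an absolute constant. I expect establishing (or invoking) this MGF bound to be the main obstacle; it follows by expanding $e^{\lambda X}$ in a power series, using the moment estimate $\E|X|^p\le (CK)^p\,p^{p/2}$ that the Orlicz-norm bound provides, and summing the resulting series, as in \cite{Vershynin2018}. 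Everything else is routine.

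Granting the MGF bound, fix $\lambda>0$. Markov's inequality applied to the nonnegative random variable $\exp(\lambda\sum_{i=1}^m b_iX_i)$ gives
\[
\PP\Bigl(\sum_{i=1}^m b_iX_i\ge t\Bigr)\le e^{-\lambda t}\,\E\exp\Bigl(\lambda\sum_{i=1}^m b_iX_i\Bigr).
\]
By independence the expectation factorizes as $\prod_{i=1}^m \E\,e^{\lambda b_iX_i}$, and since each $b_iX_i$ is mean zero with $\|b_iX_i\|_{\psi_2}=|b_i|\,\|X_i\|_{\psi_2}\le |b_i|K$, applying the MGF bound to each factor yields
\[
\E\exp\Bigl(\lambda\sum_{i=1}^m b_iX_i\Bigr)\le\prod_{i=1}^m e^{c_0\lambda^2 b_i^2K^2}=\exp\bigl(c_0\lambda^2 K^2\|b\|_2^2\bigr).
\]
Hence $\PP(\sum_i b_iX_i\ge t)\le\exp\bigl(-\lambda t+c_0\lambda^2 K^2\|b\|_2^2\bigr)$, and minimizing the exponent over $\lambda>0$ with the choice $\lambda=t/(2c_0K^2\|b\|_2^2)$ produces the bound $\exp\bigl(-t^2/(4c_0K^2\|b\|_2^2)\bigr)=\exp\bigl(-ct^2/(K^2\|b\|_2^2)\bigr)$ with $c=1/(4c_0)$.

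Finally, to get the two-sided estimate I would run the identical argument with $-X_i$ in place of $X_i$: these are again independent, mean zero, and sub-gaussian with the same Orlicz norms, so the same computation gives $\PP(\sum_i b_iX_i\le -t)\le\exp\bigl(-ct^2/(K^2\|b\|_2^2)\bigr)$. A union bound over the two tails introduces the factor $2$ and yields the claimed inequality.
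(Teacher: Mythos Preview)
Your argument is correct and is exactly the standard Chernoff/exponential-moment proof of Hoeffding's inequality for sub-gaussian variables. Note, however, that the paper does not actually prove this lemma: it is stated with a citation to \cite{Vershynin2018} and used as a black box, so there is no ``paper's own proof'' to compare against---your write-up simply supplies the argument that the cited reference contains.
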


\begin{lemma}[Bernstein's inequality, \cite{Vershynin2018}] \label{bp1}
Let $X_i$, $1\le i\le m$ be independent, mean zero, sub-exponential random variables.
Let $b=(b_1,\cdots, b_m) \in \mathbb R^m$. 
Then for every $t\ge 0$, we have
\begin{align*}
\mathbb P ( \Bigl|\frac 1m \sum_{i=1}^m  b_i X_i \Bigr| >t)
\le 2 \exp\left[ -c \cdot \min\bigl(\frac {m^2 t^2} {K^2\|b\|_2^2}, \frac {mt}
{K\|b\|_{\infty} }\bigr) \right],
\end{align*}
where $ \max_i \| X_i \|_{\psi_1} = K$. 
\end{lemma}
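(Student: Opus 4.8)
The plan is to establish Bernstein's inequality by the classical Chernoff (exponential-moment) method, the only non-trivial ingredient being a bound on the moment generating function of a centered sub-exponential random variable in a neighborhood of the origin.

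First I would record the standard auxiliary estimate: if $X$ is mean zero with $\|X\|_{\psi_1}\le K$, then there is an absolute constant $C_0$ with $\mathbb E\, e^{\lambda X}\le \exp(C_0 K^2\lambda^2)$ whenever $|\lambda|\le 1/(C_0 K)$. This follows from the moment growth implied by the $\psi_1$-norm, namely $\mathbb E|X|^p\le (C_1 K p)^p$ for all $p\ge 1$: expanding $e^{\lambda X}=1+\lambda X+\sum_{p\ge 2}\lambda^p X^p/p!$, taking expectations (the linear term vanishes since $\mathbb E X=0$), using $p!\ge (p/e)^p$, and summing the resulting geometric series in $C_1 e K|\lambda|$ gives the claimed bound on the stated range of $\lambda$.

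Next, for $t\ge 0$ and $\lambda>0$, Markov's inequality applied to $\exp(\lambda\sum_i b_i X_i)$ together with independence yields $\mathbb P(\sum_i b_i X_i> mt)\le e^{-\lambda m t}\prod_{i=1}^m \mathbb E\, e^{\lambda b_i X_i}$. Applying the MGF estimate with $\lambda b_i$ in place of $\lambda$ — legitimate as long as $|\lambda b_i|\le 1/(C_0 K)$ for every $i$, i.e. $\lambda\le 1/(C_0 K\|b\|_\infty)$ — gives $\prod_i \mathbb E\, e^{\lambda b_i X_i}\le \exp(C_0 K^2\lambda^2\|b\|_2^2)$, hence $\mathbb P(\sum_i b_i X_i>mt)\le \exp(-\lambda m t+C_0 K^2\lambda^2\|b\|_2^2)$ for every $0<\lambda\le 1/(C_0 K\|b\|_\infty)$. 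I would then optimize in $\lambda$ in two cases: the unconstrained minimizer is $\lambda^\ast=mt/(2C_0 K^2\|b\|_2^2)$, giving exponent $-m^2t^2/(4C_0 K^2\|b\|_2^2)$, and if $\lambda^\ast$ lies in the admissible range I use it directly; otherwise $\lambda^\ast>1/(C_0 K\|b\|_\infty)$, I take $\lambda=1/(C_0 K\|b\|_\infty)$, and since $C_0 K^2\lambda\|b\|_2^2<C_0 K^2\lambda^\ast\|b\|_2^2=mt/2$ the quadratic term is at most half the linear one, so the exponent is at most $-\lambda mt/2=-mt/(2C_0 K\|b\|_\infty)$. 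In both cases the one-sided bound is $\exp(-c\min(m^2t^2/(K^2\|b\|_2^2),\,mt/(K\|b\|_\infty)))$ for a suitable absolute $c>0$. Replacing each $X_i$ by $-X_i$ (same $\psi_1$-norm, still mean zero) gives the matching lower-tail bound; adding the two produces the absolute value on the left and the factor $2$ on the right, and the identity $\tfrac1m\sum_i b_i X_i>t \iff \sum_i b_i X_i>mt$ finishes the proof.

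I expect the moment-generating-function estimate for centered sub-exponential variables to be the main technical point; once it is available the Chernoff argument and the two-regime optimization in $\lambda$ are routine. The one subtlety worth tracking carefully is the admissibility constraint $\lambda\le 1/(C_0 K\|b\|_\infty)$, since it is exactly what forces the crossover to the linear-in-$t$ tail and thus the $\min$ in the exponent.
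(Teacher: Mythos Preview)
Your proof is correct and follows the standard Chernoff-method argument for Bernstein's inequality. Note, however, that the paper does not actually prove this lemma: it is stated with a citation to Vershynin's book \cite{Vershynin2018} and used as a black box. Your argument is essentially the one given in that reference, so there is nothing to compare against in the paper itself.
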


\begin{lemma} \label{bp4}
For any $\epsilon>0$, there exists $N_0=N_0(\epsilon)>0$, such that for any $N\ge N_0$,
we have
\begin{align*}
\frac 1m \sum_{j=1}^m \mathbb E
\left( (\va_j^\T \vu)^2 (\va_j^\T \vv)^2 \chi_{|\va_j^\T \vv| \ge N} \right) \le \epsilon,
\end{align*}
where $\vu\in \mathbb S^{n-1}$, $\vv\in \mathbb S^{n-1}$.

\end{lemma}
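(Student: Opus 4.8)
The plan is to exploit the i.i.d.\ structure together with the rotational invariance of the standard Gaussian, so that the whole average over $j$ collapses to a single scalar tail expectation that can be made small uniformly in $n$, $\vu$, and $\vv$.

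\textbf{Step 1 (reduce to a single term).} Since $\va_1,\ldots,\va_m$ are i.i.d.\ copies of one vector $\va\sim\mathcal N(0,I_n)$, each summand $\E\big((\va_j^\T\vu)^2(\va_j^\T\vv)^2\chi_{|\va_j^\T\vv|\ge N}\big)$ equals $\E\big((\va^\T\vu)^2(\va^\T\vv)^2\chi_{|\va^\T\vv|\ge N}\big)$, so the average over $j$ is exactly this one quantity and both $m$ and $n$ drop out of the problem.

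\textbf{Step 2 (reduce to a scalar Gaussian tail).} Put $g:=\va^\T\vu$ and $h:=\va^\T\vv$. Because $\vu,\vv\in\mathbb S^{n-1}$, the pair $(g,h)$ is centered jointly Gaussian with $g\sim\mathcal N(0,1)$ and $h\sim\mathcal N(0,1)$ (the correlation $\vu^\T\vv$ will play no role). By the Cauchy--Schwarz inequality and $\chi_{|h|\ge N}^2=\chi_{|h|\ge N}$,
\[
\E\big(g^2h^2\chi_{|h|\ge N}\big)\le\sqrt{\E g^4}\cdot\sqrt{\E\big(h^4\chi_{|h|\ge N}\big)}=\sqrt{3}\,\sqrt{\E\big(h^4\chi_{|h|\ge N}\big)},
\]
using $\E g^4=3$ for a standard normal.

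\textbf{Step 3 (kill the tail).} Since $h\sim\mathcal N(0,1)$ has finite fourth moment, the map $N\mapsto\E\big(h^4\chi_{|h|\ge N}\big)$ decreases monotonically to $0$ as $N\to\infty$ (dominated convergence, or the explicit estimate $\E(h^4\chi_{|h|\ge N})\lesssim N^3e^{-N^2/2}$). Hence I can pick $N_0=N_0(\epsilon)$, depending on $\epsilon$ alone, with $\sqrt{3}\,\sqrt{\E\big(h^4\chi_{|h|\ge N_0}\big)}\le\epsilon$; monotonicity gives the same bound for every $N\ge N_0$, and combined with Steps 1--2 this is the lemma.

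The main point to watch — rather than a genuine obstacle — is that $N_0$ must be independent of $n$ and of the particular unit vectors $\vu,\vv$. This is automatic: after Step 1 the quantity to bound depends only on the joint law of $(\va^\T\vu,\va^\T\vv)$, and the Cauchy--Schwarz estimate in Step 2 depends only on the marginal laws of $g$ and $h$, each of which is $\mathcal N(0,1)$ regardless of $\vu,\vv,n$.
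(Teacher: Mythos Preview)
Your proof is correct and follows essentially the same approach as the paper: reduce to a single $\va\sim\mathcal N(0,I_n)$ by i.i.d., then split the product to isolate the tail $\E\big(h^4\chi_{|h|\ge N}\big)$. The only cosmetic difference is that you use Cauchy--Schwarz where the paper uses the Young-type inequality $xy\le\frac{\epsilon}{4}x^2+\frac{1}{\epsilon}y^2$ with $x=(\va^\T\vu)^2$ and $y=(\va^\T\vv)^2\chi_{|\va^\T\vv|\ge N}$, arriving at the same tail integral.
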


{\noindent\it \textbf{Proof}}~
Since $\va_j$ are i.i.d., it suffices to prove the statement for a single random vector $\va
\sim \mathcal N(0,I_n)$. Noting that $\va^\T \vu \sim \mathcal N(0,1)$ and $\va^\T \vv \sim \mathcal 
N(0,1)$, we have
\begin{align*}
&\mathbb E\left(  (\va^\T \vu)^2 (\va^\T \vv)^2 \chi_{|\va^\T \vv|\ge N}  \right) \notag \\
\le &\; 
\frac 14 \epsilon \cdot   \mathbb E (\va^\T \vu)^4 + \frac 1{\epsilon} \cdot
\mathbb E ( (\va^\T \vv)^4 \chi_{|\va^\T \vv|\ge N }) \notag \\
\le & \frac 34 \epsilon+ \frac 1 {\epsilon}  
\cdot \frac 1 {\sqrt{2\pi}} \int_{|y|\ge N}
e^{-\frac {y^2}2 } y^4 dy \le \epsilon,
\end{align*}
if $N$ is sufficiently large.  Note that one can easily quantify $N_0$ in terms of $\epsilon$.
However we shall not dwell on this here.
\qed

\begin{lemma} \label{bp5}
Let $A=\frac 1m \sum_{j=1}^m \va_j \va_j^\T $. For any $0<\epsilon \le 1$,  if $m\ge C \epsilon^{-2} n$ then
\begin{align*}
\mathbb P ( \| A- \operatorname{I} \|_{2} >\epsilon)
\le \exp(- c m\epsilon^2).
\end{align*}
In particular, for $m\ge C \epsilon^{-2} n$, with probability at least
$1-\exp(-c \epsilon^2 m)$, we have
\begin{align*}
\Bigl | \frac 1 m\sum_{j=1}^m (\va_j^\T \vu) (\va_j^\T \vv) - 
\operatorname{mean} \Bigr| \le \epsilon, \qquad\forall\,
\vu, \vv \in \mathbb S^{n-1}.
\end{align*}
Here, $C$ and $c$ are universal positive constants.
\end{lemma}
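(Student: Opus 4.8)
The statement is the classical concentration of the sample second-moment matrix of i.i.d.\ Gaussian vectors, so the plan is the standard one: an $\epsilon$-net argument on the sphere together with Bernstein's inequality for the scalar quadratic forms, followed by polarization to pass to the bilinear form.

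First I would reduce to a quadratic-form estimate. Since $A-\operatorname{I}$ is symmetric, $\|A-\operatorname{I}\|_2=\sup_{\vu\in\mathbb S^{n-1}}\abs{\vu^\T(A-\operatorname{I})\vu}$; and if $\mathcal N$ is a $\frac14$-net of $\mathbb S^{n-1}$, which can be chosen with $\abs{\mathcal N}\le 9^n$, the usual covering bound gives $\|A-\operatorname{I}\|_2\le 2\max_{\vu\in\mathcal N}\abs{\vu^\T(A-\operatorname{I})\vu}$. For a fixed $\vu\in\mathbb S^{n-1}$ we have $\vu^\T(A-\operatorname{I})\vu=\frac1m\sum_{j=1}^m\big((\va_j^\T\vu)^2-1\big)$, and the summands are i.i.d., mean zero, and sub-exponential with $\|\cdot\|_{\psi_1}$ bounded by a universal constant, because $\va_j^\T\vu\sim\mathcal N(0,1)$ regardless of $\vu$. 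Bernstein's inequality (Lemma \ref{bp1}) with weights $b=(1,\dots,1)$, using that $\min(t^2,t)=t^2$ for $0<t\le1$, gives
\[
\PP\Big(\Big|\frac1m\sum_{j=1}^m\big((\va_j^\T\vu)^2-1\big)\Big|>\frac\epsilon2\Big)\le 2\exp(-cm\epsilon^2).
\]

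Second, a union bound over $\mathcal N$ yields $\PP(\|A-\operatorname{I}\|_2>\epsilon)\le 9^n\cdot 2\exp(-cm\epsilon^2/4)$. Choosing the absolute constant $C$ in the hypothesis $m\ge C\epsilon^{-2}n$ large enough that $cm\epsilon^2/4\ge n\log 9+c'm\epsilon^2$ absorbs the factor $9^n=e^{n\log 9}$, we obtain $\PP(\|A-\operatorname{I}\|_2>\epsilon)\le\exp(-c'm\epsilon^2)$, which is the first claim. For the second claim, observe that $\E\,\frac1m\sum_{j=1}^m(\va_j^\T\vu)(\va_j^\T\vv)=\vu^\T\vv$ is exactly the quantity written $\operatorname{mean}$, so that
\[
\frac1m\sum_{j=1}^m(\va_j^\T\vu)(\va_j^\T\vv)-\operatorname{mean}=\vu^\T(A-\operatorname{I})\vv,
\]
and on the event $\{\|A-\operatorname{I}\|_2\le\epsilon\}$ this is at most $\|A-\operatorname{I}\|_2\,\|\vu\|_2\|\vv\|_2\le\epsilon$ simultaneously for all $\vu,\vv\in\mathbb S^{n-1}$; this gives the ``in particular'' part with the same probability.

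The argument is entirely routine; the only points needing a little care are confirming that the sub-exponential norm of $(\va^\T\vu)^2-1$ is a universal constant independent of $n$ and $\vu$ (so the Bernstein constant does not degrade), and the constant bookkeeping in the union bound — which is exactly where the scaling $m\gtrsim\epsilon^{-2}n$, rather than merely $m\gtrsim\epsilon^{-2}$, is consumed.
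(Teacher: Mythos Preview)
Your proof is correct and is essentially the same $\epsilon$-net plus Bernstein argument as the paper's. The only cosmetic difference is that you exploit symmetry to net the quadratic form $\vu^\T(A-\operatorname{I})\vu$ over a single $\frac14$-net, whereas the paper nets the bilinear form $\langle(A-\operatorname{I})\vx,\vy\rangle$ directly over $S_\delta\times S_\delta$; the resulting bounds and constants are the same up to harmless factors.
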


{\noindent\it \textbf{Proof}}~
We briefly sketch the standard proof here for the sake of completeness. By using a 
$\delta$-net $S_{\delta}$ on $\mathbb S^{n-1}$  with $0<\delta<\frac 12$ and
$\operatorname{Card}(
S_{\delta})\le (1+\frac 2 {\delta})^n$, we have
\begin{align*}
\| A- \operatorname{I} \|_{\operatorname{op}}
\le \frac 1 {1-2\delta} \sup_{\vx, \vy \in S_{\delta}} 
\langle (A-\operatorname{I}) \vx, \vy \rangle.
\end{align*}

Now for a pair of fixed $\vx_0$, $\vy_0 \in S_{\delta}$, since $\| (\va_j^\T \vx) (\va_j^\T \vy) \|_{\psi_1}
\le \| \va_j^\T \vx_0\|_{\psi_2} \| \va_j^\T \vy_0\|_{\psi_2} \lesssim 1$, by using
Lemma \ref{bp1}, we have for any $0<\epsilon_1\le 1$,
\begin{align*}
\mathbb P ( \langle (A-\operatorname{I} ) \vx_0, \vy_0 \rangle  > \epsilon_1)
\le 2 \exp(-c' \cdot m \cdot \epsilon_1^2).
\end{align*}
For any $0< \epsilon<1$,  taking $\epsilon_1:=(1-2\delta) \epsilon$ and $\delta=\frac 14$, we have 
\begin{align*}
\mathbb P ( \| A- \operatorname{I} \|_{2 } >\epsilon)
\le 2(1+\frac 2 {\delta})^{2n} \exp
\bigl[ - c' m (1-2\delta)^2 \epsilon^2 \bigr] \le 2 \exp(-c \epsilon^2 m),
\end{align*}
provided $m\ge C \epsilon^{-2}n$. Here, $C$ and $c$ are universal positive constants.
\qed

\begin{lemma} \label{bp6}
Suppose $h:\, \mathbb R\to \mathbb R$ is a locally Lipschitz continuous function such that
\begin{align*}
|h(z) - h(\tilde z)| \lesssim (1+|z|+|\tilde z|) |z-\tilde z|, \qquad \forall\, z,\tilde z \in \mathbb R.
\end{align*}
Assume that  $\| h( Z) \|_{\psi_1} \lesssim 1$ for a standard Gaussian random variable $Z\sim N(0,1)$.
For any $0<\epsilon \le \frac 12$, if $m\ge  C  \epsilon^{-2}  \log(1/\epsilon) n$, then 
with probability at least $1- 3\exp(-c \epsilon^2 m)$, it holds
\begin{align*}
\Bigl|
\frac 1m \sum_{j=1}^m h(\va_j^\T \vu) - \operatorname{mean}
\Bigr| \le \epsilon, \qquad \forall\, \vu \in \mathbb S^{n-1}.
\end{align*}
Here, $C$ and $c$ are universal positive constants.
\end{lemma}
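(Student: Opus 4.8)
The plan is to establish the uniform estimate by the classical two-ingredient recipe: a sharp pointwise (fixed-$\vu$) concentration bound coming from sub-exponential tails, upgraded to a statement uniform over $\mathbb{S}^{n-1}$ via an $\epsilon$-net. The one nonstandard feature is that $h$ is only \emph{locally} Lipschitz, with modulus growing linearly in the argument, so the passage from the net to the full sphere must be handled with care; this is where the bulk of the work lies.

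\emph{Pointwise step and union bound.} Fix $\vu \in \mathbb{S}^{n-1}$. Since $\va_j^\T \vu \sim \mathcal{N}(0,1)$, the hypothesis $\|h(Z)\|_{\psi_1} \lesssim 1$ yields $\|h(\va_j^\T\vu)\|_{\psi_1} \lesssim 1$, so $X_j := h(\va_j^\T\vu) - \mathbb{E}h(\va_j^\T\vu)$ are i.i.d., mean zero, sub-exponential with $\psi_1$-norm $\lesssim 1$; moreover $\mathbb{E}h(\va_j^\T\vu) = \mathbb{E}h(Z) = \operatorname{mean}$ does not depend on $\vu$. Bernstein's inequality (Lemma \ref{bp1}), applied with $b = (1,\dots,1)$ and $t = \epsilon/2 \le 1/4$, then gives $\mathbb{P}\bigl(|\tfrac1m\sum_{j=1}^m h(\va_j^\T\vu) - \operatorname{mean}| > \epsilon/2\bigr) \le 2\exp(-c_1\epsilon^2 m)$. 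Choosing a $\delta$-net $S_\delta \subset \mathbb{S}^{n-1}$ with $\delta = c_0\epsilon$ and $\operatorname{Card}(S_\delta) \le (1+2/\delta)^n$, a union bound shows that the above holds simultaneously for all $\vu_0 \in S_\delta$ with probability at least $1 - 2(1+2/\delta)^n\exp(-c_1\epsilon^2 m)$, which is $\ge 1 - 2\exp(-\tfrac{c_1}{2}\epsilon^2 m)$ once $m \ge C\epsilon^{-2}\log(1/\epsilon)\,n$ (the factor $\log(1/\epsilon)$ is exactly $\log\operatorname{Card}(S_\delta)/n$ up to constants).

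\emph{From the net to the sphere.} By Lemma \ref{bp5} with accuracy $1/2$, for $m \gtrsim n$ we have, with probability $1 - \exp(-cm)$, that $\|A - \operatorname{I}\|_2 \le 1/2$ where $A := \tfrac1m\sum_j \va_j\va_j^\T$; in particular $\tfrac1m\sum_j(\va_j^\T\vw)^2 \le \tfrac32\norm{\vw}^2$ for every $\vw \in \Rn$. On this event, given an arbitrary $\vu \in \mathbb{S}^{n-1}$, choose $\vu_0 \in S_\delta$ with $\norm{\vu-\vu_0} \le \delta$; the growth bound on $h$ and Cauchy--Schwarz give $\tfrac1m\sum_j|h(\va_j^\T\vu) - h(\va_j^\T\vu_0)| \lesssim \bigl(\tfrac1m\sum_j(1+|\va_j^\T\vu|+|\va_j^\T\vu_0|)^2\bigr)^{1/2}\bigl(\tfrac1m\sum_j(\va_j^\T(\vu-\vu_0))^2\bigr)^{1/2}$, where the first factor is $\le (3 + 6\|A\|_2)^{1/2} \le \sqrt{12}$ and the second is $\le \sqrt{3/2}\,\delta$. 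Hence this difference is at most $C'\delta \le \epsilon/2$ once $c_0$ is taken small enough. Combining with the net estimate, on the intersection of the two events — which has probability at least $1 - 3\exp(-c\epsilon^2 m)$ after absorbing $\exp(-cm)$ into $\exp(-c\epsilon^2 m)$ using $\epsilon \le 1/2$ — we obtain $|\tfrac1m\sum_j h(\va_j^\T\vu) - \operatorname{mean}| \le \epsilon/2 + \epsilon/2 = \epsilon$ for every $\vu \in \mathbb{S}^{n-1}$, as claimed.

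\emph{Main obstacle.} The delicate point is the net-to-sphere passage: because the Lipschitz modulus of $h$ grows like $1+|z|+|\tilde z|$, the increment $h(\va_j^\T\vu) - h(\va_j^\T\vu_0)$ is not uniformly small on a fixed-resolution net, and a crude bound would force $\delta$ to depend on a high power of $\epsilon$ or even introduce a dimension-dependent loss. The resolution is to factor the increment as $(1+|\va_j^\T\vu|+|\va_j^\T\vu_0|)\cdot|\va_j^\T(\vu-\vu_0)|$, apply Cauchy--Schwarz, and absorb the growing first factor using the uniform operator-norm bound $\|A\|_2 \le 3/2$ from Lemma \ref{bp5}; this keeps the off-net error linear in $\delta$, so $\delta \sim \epsilon$ suffices and the sample complexity picks up only the harmless factor $\log(1/\epsilon)$.
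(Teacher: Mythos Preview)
Your proposal is correct and follows essentially the same approach as the paper: pointwise Bernstein, union bound over a $\delta$-net with $\delta\sim\epsilon$, and the net-to-sphere step handled via Cauchy--Schwarz together with the uniform quadratic-form bound from Lemma~\ref{bp5}. The paper's write-up differs only cosmetically (it splits the increment as a linear plus a bilinear term before applying Cauchy--Schwarz, and uses the specific bound $\tfrac1m\sum_j|\va_j^\T\vu|^2\le 2$), but the structure and the key idea---absorbing the growing Lipschitz modulus via $\|A\|_2\lesssim 1$ so that $\delta=O(\epsilon)$ suffices---are the same.
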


{\noindent\it \textbf{Proof}}~
Introduce a $\delta$-net $S_{\delta}$ on $\mathbb S^{n-1}$ with
$\operatorname{Card}(S_{\delta}) \le (1+\frac 2 {\delta} )^n$. Observe that
by Lemma \ref{bp1}, for any $0<\epsilon_1\le 1$,  it holds
\begin{align*}
\mathbb P ( \sup_{u \in S_{\delta} }\Bigl| \frac 1m \sum_{j=1}^m h(\va_j^\T \vu)-
\operatorname{mean} \Bigr|  > \epsilon_1)
\le (1+\frac 2 {\delta} )^n \cdot 2 \cdot \exp( -c_1 \cdot m\cdot \epsilon_1^2).
\end{align*}
By Lemma \ref{bp5}, we have for $m\ge Cn$  with probability at least $1-\exp(-c_2 m)$,
it holds that
\begin{align*}
\frac 1m \sum_{j=1}^m |\va_j^\T \vu|^2 \le 2, \quad\forall\, \vu \in \mathbb S^{n-1}.
\end{align*}
 Thus 
with probability at least $1-\exp(-c_2 m)$ and uniformly
for $\vu$, $\vv\in \mathbb S^{n-1}$, we have 
\begin{eqnarray*}
& &\Bigl| \frac 1m \sum_{j=1}^m h(\va_j^\T \vu) - \frac 1m \sum_{j=1}^m h(\va_j^\T \vv) \Bigr| \notag \\
& \lesssim &  \frac 1m \sum_{j=1}^m |\va_j^\T (\vu-\vv)|  
+  \frac 1m \sum_{j=1}^m |\va_j^\T (\vu-\vv)| (|\va_j^\T \vu|+|\va_j^\T \vv|) \notag \\
& \le & \sqrt{ \frac 1m \sum_{j=1}^m |\va_j^\T (\vu-\vv)|^2  } \cdot \xkh{1+  \sqrt{\frac 1m \sum_{j=1}^m |\va_j^\T \vu|^2}  + \sqrt{\frac 1m \sum_{j=1}^m |\va_j^\T \vv|^2}  }\\
& \le & 10 \norm{\vu-\vv} \le 10 \delta.
\end{eqnarray*}
Now we take $\epsilon_1=\frac {\epsilon} 2$ and $\delta= \frac {\epsilon}{20}$.
It follows that for $m\ge  C  \epsilon^{-2} \log(1/\epsilon) n$ with probability at least
\begin{align*}
1- \exp(-c_2 m) - (1+\frac {40} {\epsilon} )^n \cdot 2 \cdot \exp(-c_1 \cdot m \cdot \epsilon^2) \ge 1-3\exp(-c\epsilon^2 m),
\end{align*}
the desired inequality holds uniformly for all $\vu\in \mathbb S^{n-1}$. 
\qed

\begin{corollary} \label{bp6.01}
Let $0<\epsilon \le 1/2$.  Assume $m \ge C \epsilon^{-2} \log(1/\epsilon)  n$.
There exists $N_0(\epsilon)=C \sqrt{\log(1/\epsilon) }>0$ such that for any $N\ge N_0$,
 with probability at least $1- \exp(-c  \epsilon^2 m)$, it holds
\begin{align*}
\frac 1 m \sum_{j=1}^m \chi_{|\va_j^\T \vu|>N} \le \epsilon,
\qquad\forall\, \vu \in \mathbb S^{n-1}.
\end{align*}
\end{corollary}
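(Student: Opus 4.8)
The plan is to reduce Corollary~\ref{bp6.01} to Lemma~\ref{bp6} by dominating the indicator $\chi_{|z|>N}$, which is discontinuous and therefore outside the scope of Lemma~\ref{bp6}, by a continuous majorant with controlled Lipschitz norm. First I would fix a continuous, piecewise-linear ``ramp'' $h_N:\mathbb R\to[0,1]$ with $h_N(z)=1$ for $|z|\ge N$, $h_N(z)=0$ for $|z|\le N/2$, and $h_N$ affine on each of the two transition intervals (e.g., $h_N(z)=1-\phi(2z/N)$ with $\phi$ the cut-off used in the earlier proofs). Then $\chi_{|z|>N}\le h_N(z)$ pointwise; $h_N$ is globally Lipschitz with constant $2/N$, and since $N\ge N_0(\epsilon)\gtrsim 1$ this constant is bounded by an absolute constant, so $h_N$ satisfies the growth bound $|h_N(z)-h_N(\tilde z)|\lesssim(1+|z|+|\tilde z|)\,|z-\tilde z|$; finally $0\le h_N\le 1$ gives $\|h_N(Z)\|_{\psi_1}\lesssim 1$ for $Z\sim N(0,1)$. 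Hence all hypotheses of Lemma~\ref{bp6} hold with $h=h_N$.

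Next I would apply Lemma~\ref{bp6} with $\epsilon$ replaced by $\epsilon/2$. Its sample-size requirement $m\ge C(\epsilon/2)^{-2}\log(2/\epsilon)\,n$ follows from $m\ge C'\epsilon^{-2}\log(1/\epsilon)\,n$ for $\epsilon\le 1/2$ (using $\log(2/\epsilon)\le 2\log(1/\epsilon)$), so with probability at least $1-3\exp(-c\epsilon^2 m)$ it holds, uniformly over $\vu\in\mathbb S^{n-1}$, that
\[
\frac{1}{m}\sum_{j=1}^m \chi_{|\va_j^\T\vu|>N}\;\le\;\frac{1}{m}\sum_{j=1}^m h_N(\va_j^\T\vu)\;\le\;\mathbb E\,h_N(Z)+\frac{\epsilon}{2}.
\]

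It then remains to choose $N_0$ so that $\mathbb E\,h_N(Z)\le\epsilon/2$ for all $N\ge N_0$. Since $h_N$ is supported in $\{|z|\ge N/2\}$ and bounded by $1$, the standard Gaussian tail estimate gives $\mathbb E\,h_N(Z)\le\mathbb P(|Z|\ge N/2)\le 2\exp(-N^2/8)$, which is at most $\epsilon/2$ whenever $N\ge\sqrt{8\log(4/\epsilon)}$; thus $N_0(\epsilon):=C\sqrt{\log(1/\epsilon)}$ works for a sufficiently large absolute constant $C$ (and indeed $N_0\gtrsim 1$ for $\epsilon\le 1/2$, as used above). Combining this with the previous display yields $\frac{1}{m}\sum_{j=1}^m\chi_{|\va_j^\T\vu|>N}\le\epsilon$ for all $\vu\in\mathbb S^{n-1}$, and absorbing the factor $3$ into the exponent --- legitimate because $\epsilon^2 m\gtrsim\log(1/\epsilon)\gtrsim 1$ under our sample-size hypothesis --- gives the stated probability $1-\exp(-c\epsilon^2 m)$.

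The argument is short; the one point that genuinely requires care, and what I would flag as the main obstacle, is the two roles of $\epsilon$: it simultaneously sets the uniform concentration tolerance (through Lemma~\ref{bp6}, which forces $m\gtrsim\epsilon^{-2}\log(1/\epsilon)\,n$) and the depth of the Gaussian tail being discarded (through $N_0\sim\sqrt{\log(1/\epsilon)}$). One must check that a single ramp width of order $N$ reconciles these requirements uniformly over $\epsilon\le 1/2$ --- in particular that $N_0\gtrsim 1$, so that $h_N$ has an absolute Lipschitz bound and Lemma~\ref{bp6} indeed applies.
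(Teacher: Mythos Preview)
Your proof is correct and follows essentially the same approach as the paper: dominate $\chi_{|z|>N}$ by a Lipschitz majorant supported on $\{|z|\ge N/2\}$ (the paper uses $h(z)=1-\phi(z/N)$ with a smooth bump $\phi$, you use a piecewise-linear ramp or equivalently $1-\phi(2z/N)$), apply Lemma~\ref{bp6}, and bound the mean by the Gaussian tail $\mathbb P(|Z|\ge N/2)$. Your write-up is in fact more careful than the paper's in tracking the Lipschitz constant $O(1/N)\lesssim 1$ and in splitting $\epsilon$ between the concentration error and the tail; the paper leaves these points implicit.
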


{\noindent\it \textbf{Proof}}~
We choose $\phi \in C_c^{\infty}(\mathbb R)$ such that $\phi(z)\equiv 1$ for 
$|z|\le \frac 12$, $\phi(z)=0$ for $|z|\ge 1$, and $0\le \phi(z)\le 1$ for all $z\in \mathbb R$. 
Clearly then
\begin{align*}
\frac 1 m \sum_{j=1}^m \chi_{|\va_j^\T \vu|>N} \le
\frac 1m \sum_{j=1}^m (1- \phi( \frac {\va_j^\T \vu} N)) .
\end{align*}
We can then apply Lemma \ref{bp6} with $h(z) =1- \phi(\frac z N)$.  Note that
(below $Z\sim \mathcal N(0,1)$ is a standard normal random variable)
\begin{align*}
\operatorname{mean}= \mathbb E ( 1- \phi (\frac Z N))
\le \mathbb E \chi_{|Z|\ge \frac 12 N} \le O(e^{-N^2}) \le \frac 12\epsilon,
\end{align*}
if $N\ge N_0(\epsilon)$. 
\qed

\begin{lemma} \label{bp7}
Let $X_i$: $1\le i\le m$ be independent random variables with 
\begin{align*}
\max_{1\le i \le m} \mathbb E |X_i|^4 \lesssim 1.
\end{align*}
Then for any $t>0$, 
\begin{align*}
\mathbb P ( \Bigl| \frac 1m \sum_{j=1}^m X_i  - 
\operatorname{mean} \Bigr| > t)  \lesssim \frac 1{m^2 t^4}.
\end{align*}
\end{lemma}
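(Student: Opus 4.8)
The plan is to prove this by the classical fourth-moment method: bound $\mathbb E|S|^4$ for the centered average $S$ and then invoke Markov's inequality, which is precisely what produces a tail of order $1/(m^2t^4)$. Set $Y_i := X_i - \mathbb E X_i$, so that the $Y_i$ are independent with $\mathbb E Y_i = 0$, and write $S := \frac1m\sum_{i=1}^m X_i - \operatorname{mean} = \frac1m\sum_{i=1}^m Y_i$. Markov's inequality applied to the nonnegative random variable $|S|^4$ gives
\[
\mathbb P\bigl(|S| > t\bigr) \le \frac{\mathbb E|S|^4}{t^4} = \frac{1}{m^4 t^4}\,\mathbb E\Bigl(\sum_{i=1}^m Y_i\Bigr)^{4},
\]
so the whole task reduces to showing $\mathbb E\bigl(\sum_{i=1}^m Y_i\bigr)^4 \lesssim m^2$.

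First I would expand $\mathbb E(\sum_i Y_i)^4 = \sum_{i,j,k,l} \mathbb E[Y_iY_jY_kY_l]$ and use independence together with $\mathbb E Y_i=0$: a monomial $\mathbb E[Y_iY_jY_kY_l]$ vanishes whenever some index occurs exactly once, so only two types of terms survive — those with all four indices equal, contributing $\sum_{i} \mathbb E Y_i^4$, and those whose indices split into two distinct pairs, contributing $3\sum_{i\ne j}\mathbb E Y_i^2\,\mathbb E Y_j^2$, the factor $3$ being the number of ways to pair up four positions. Hence
\[
\mathbb E\Bigl(\sum_{i=1}^m Y_i\Bigr)^4 = \sum_{i=1}^m \mathbb E Y_i^4 \;+\; 3\!\!\sum_{1\le i\ne j\le m}\!\! \mathbb E Y_i^2\,\mathbb E Y_j^2 .
\]
Next I would control the individual moments from the hypothesis: by $|a-b|^4\le 8(|a|^4+|b|^4)$ and Jensen's inequality $(\mathbb E X_i)^4\le \mathbb E X_i^4$ one gets $\mathbb E Y_i^4 \le 16\,\mathbb E X_i^4 \lesssim 1$, and then $\mathbb E Y_i^2 \le (\mathbb E Y_i^4)^{1/2} \lesssim 1$. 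Consequently the single sum is $O(m)$ and the double sum is $O(m^2)$, so $\mathbb E(\sum_i Y_i)^4 \lesssim m^2$; substituting into the displayed Markov bound yields $\mathbb P(|S|>t)\lesssim \frac{1}{m^2t^4}$.

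I do not expect any genuine obstacle here, since the estimate is elementary. The only point requiring a little care is the combinatorial bookkeeping in the fourth-moment expansion: verifying that all mixed-index terms with an unpaired index vanish by the mean-zero/independence property, and getting the multiplicity of the two-pair terms right. Everything else is a direct application of Markov's inequality and the assumed bound $\max_i \mathbb E|X_i|^4\lesssim 1$.
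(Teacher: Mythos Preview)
Your proposal is correct and follows exactly the paper's approach: reduce to mean-zero variables, apply Markov's inequality with the fourth moment, and expand $\mathbb E(\sum_i Y_i)^4$ using independence so that only the diagonal and two-pair terms survive, giving the $O(m^2)$ bound. The paper's proof is terser but identical in substance.
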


\begin{proof}
Without loss of generality we can assume $X_i$ has zero mean. The result then follows from Markov's inequality and the observation
\begin{align*}
\mathbb E (\sum_{j=1}^m X_j )^4 \lesssim \sum_{i<j} \mathbb E X_i^2 X_j^2 +
\sum_{i} \mathbb E X_i^4 \lesssim m^2.
\end{align*}
\end{proof}

\begin{lemma} \label{bp8}
Let $0<\epsilon \le  1/2$.  Assume $m \ge C \epsilon^{-4} \log(1/\epsilon) n$.
There exists $N_0(\epsilon)=C \sqrt{\log(1/\epsilon) }>0$  such that for any $N\ge N_0$,  with probability
at least $1- \frac {c' }{m^2}- \exp(-c'' \epsilon^2 m)$, we have
\begin{align*}
\frac 1m
\sum_{j=1}^m (\va_j^\T \vx)^4 \chi_{|\va_j^\T \vu |\ge N} \le \epsilon,
\qquad\forall\, \vu \in \mathbb S^{n-1}.
\end{align*}
Here, $C, c'$ and $c''$ are universal positive constants.
\end{lemma}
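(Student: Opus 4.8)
The plan is to control $\frac1m\sum_j (\va_j^\T\vx)^4\chi_{|\va_j^\T\vu|\ge N}$ uniformly in $\vu$ by splitting off the heavy fourth power $(\va_j^\T\vx)^4$ and treating the truncation indicator with the Lipschitz machinery of Lemma \ref{bp6} / Corollary \ref{bp6.01}. The obstacle is that $(\va_j^\T\vx)^4$ is not sub-exponential, so a direct union bound over a net fails; the remedy is a second truncation of $(\va_j^\T\vx)^4$ at a large level, say $M$, handling the tail by a fourth-moment (Markov) estimate à la Lemma \ref{bp7}, and the bulk by the bounded-Lipschitz empirical-process bound. First I would write, with $\phi$ the usual smooth bump ($\phi\equiv1$ on $|z|\le\frac12$, $\phi\equiv0$ on $|z|\ge1$),
\[
(\va_j^\T\vx)^4\chi_{|\va_j^\T\vu|\ge N}\le (\va_j^\T\vx)^4\bigl(1-\phi(\tfrac{\va_j^\T\vu}{N})\bigr),
\]
so it suffices to bound $\frac1m\sum_j (\va_j^\T\vx)^4 g(\va_j^\T\vu)$ with $g(z)=1-\phi(z/N)$, a bounded function supported on $|z|\ge N/2$.

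Next I would split $(\va_j^\T\vx)^4 = (\va_j^\T\vx)^4\chi_{|\va_j^\T\vx|\le M} + (\va_j^\T\vx)^4\chi_{|\va_j^\T\vx|>M}$. For the second (tail) piece, the factor $g\le 1$ can be dropped and one is left with $\frac1m\sum_j (\va_j^\T\vx)^4\chi_{|\va_j^\T\vx|>M}$, which involves only the \emph{fixed} direction $\vx$ — no uniformity needed. Since $\mathbb E\bigl((\va_1^\T\vx)^4\chi_{|\va_1^\T\vx|>M}\bigr)^2\lesssim 1$, Lemma \ref{bp7} gives that this term is $\le \frac12\epsilon$ except with probability $O(1/(m^2))$, once $M=M(\epsilon)$ is large enough that the mean is $\le\frac14\epsilon$. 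For the first (bulk) piece, the summand $(\va_j^\T\vx)^4\chi_{|\va_j^\T\vx|\le M}\,g(\va_j^\T\vu)$ is bounded by $M^2$ in absolute value and, as a function of $\vu\in\mathbb S^{n-1}$, is Lipschitz with constant $O(M^2/N)$ on the region where it is active; a $\delta$-net argument exactly as in the proof of Lemma \ref{bp6} (Bernstein for each net point, then a Lipschitz continuity bound to fill in between net points, using $\frac1m\sum_j|\va_j^\T\vu|^2\le 2$ from Lemma \ref{bp5}) shows this empirical average concentrates around its mean uniformly in $\vu$ within $\frac14\epsilon$, with probability at least $1-\exp(-c\epsilon^2 m)$, provided $m\ge C\epsilon^{-4}\log(1/\epsilon)\,n$ (the extra $\epsilon^{-2}$ absorbing the $M^2$ factor and the net cardinality).

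Finally I would bound the mean of the bulk piece: $\mathbb E\bigl((\va_1^\T\vx)^4\chi_{|\va_1^\T\vx|\le M}\,g(\va_1^\T\vu)\bigr)\le \mathbb E\bigl((\va_1^\T\vx)^4\,\chi_{|\va_1^\T\vu|\ge N/2}\bigr)$, which by Lemma \ref{bp4} (applied with the roles of $\vu,\vv$ suitably arranged, or by a direct Cauchy–Schwarz/Gaussian tail estimate since $\mathbb E(\va_1^\T\vx)^8\lesssim1$ and $\mathbb P(|\va_1^\T\vu|\ge N/2)$ is exponentially small) is $\le\frac14\epsilon$ once $N\ge N_0(\epsilon)=C\sqrt{\log(1/\epsilon)}$. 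Adding the three contributions — tail of $(\va_j^\T\vx)^4$, fluctuation of the bulk empirical average, and its mean — yields the claimed bound $\le\epsilon$ uniformly over $\vu\in\mathbb S^{n-1}$ with probability at least $1-\frac{c'}{m^2}-\exp(-c''\epsilon^2 m)$. The only delicate point, and the one I would write out carefully, is the bookkeeping of how large $M$ and $N$ must be in terms of $\epsilon$ and how the resulting net cardinality $(1+CM^2/(N\epsilon))^n$ is still absorbed by the $m\ge C\epsilon^{-4}\log(1/\epsilon)\,n$ hypothesis.
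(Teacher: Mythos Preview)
Your approach is correct, but it takes a considerably more laborious route than the paper. The paper's proof is essentially three lines: by Cauchy--Schwarz,
\[
\frac 1m \sum_{j=1}^m (\va_j^\T \vx)^4 \chi_{|\va_j^\T \vu |\ge N} \le \sqrt{\frac 1m \sum_{j=1}^m (\va_j^\T \vx)^8}\;\sqrt{\frac 1{m} \sum_{j=1}^m \chi_{|\va_j^\T \vu|\ge N}},
\]
after which the first factor is $O(1)$ with probability $1-O(m^{-2})$ by Lemma~\ref{bp7} (this is the only place $\vx$ enters, and since $\vx$ is fixed no uniformity is needed), while the second factor is $\le \epsilon^2$ uniformly in $\vu$ by Corollary~\ref{bp6.01} applied with parameter $\epsilon^2$ in place of $\epsilon$ --- which is exactly where the $\epsilon^{-4}$ in the sample-size hypothesis comes from. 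Your plan --- truncating $(\va_j^\T\vx)^4$ at a level $M=M(\epsilon)$, handling the tail by Lemma~\ref{bp7}, and running a full $\delta$-net/Bernstein argument on the bounded bulk --- does work, but it essentially reproves Corollary~\ref{bp6.01} by hand and forces you to track how the net mesh and the Bernstein constants depend on $M$. (Incidentally, the bulk summand is bounded by $M^4$, not $M^2$, and the pointwise Lipschitz constant is $O(M^4/N)$; since $M\sim\sqrt{\log(1/\epsilon)}$ this is still comfortably absorbed by the stated hypothesis, so the slip is harmless.) The Cauchy--Schwarz decoupling buys a much cleaner separation: the heavy-tailed factor is dealt with for a single fixed direction, and the part requiring uniformity over $\vu$ is reduced to a bounded indicator, for which the needed empirical-process bound is already packaged in Corollary~\ref{bp6.01}.
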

\begin{proof}
By using Cauchy-Schwartz, we have
\[
\frac 1m \sum_{j=1}^m (\va_j^\T \vx)^4 \chi_{|\va_j^\T \vu |\ge N} \le  \sqrt{\frac 1m  \sum_{j=1}^m (\va_j^\T \vx)^8} \sqrt{ \frac 1 { m} \sum_{j=1}^m \chi_{|\va_j^\T \vu|\ge N }}.
\]
By using Lemma \ref{bp7}, we have for any $t_1>0$,
\begin{align*}
\mathbb P
\Bigl( \Bigl| \frac 1m \sum_{j=1}^m (\va_j^\T \vx)^8 -
\operatorname{mean } \Bigr|> t_1 \Bigr)
\lesssim \frac 1 {m^2 t_1^4}.
\end{align*}
Choosing $t_1$ to be an absolute constant.   The desired result then follows from Corollary \ref{bp6.01}.
\end{proof}

\begin{lemma} \label{bp8.02}
For any $0<\epsilon \le  1/2$,  there exist constants $C(\epsilon), c(\epsilon)>0$ only depending on $\epsilon$  such that if $m\ge C(\epsilon)  n$, then the following holds with probability at least $1-3\exp(-c(\epsilon) m)$:
\begin{align*}
\frac 1m \sum_{j=1}^m (\va_j^\T \vu)^4 \ge 3 -\epsilon, \qquad\forall\, \vu \in \mathbb S^{n-1}.
\end{align*}
\end{lemma}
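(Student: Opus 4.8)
The plan is to get a dimension-free uniform lower bound by truncating the heavy-tailed quartic from above, so that a net argument based on Hoeffding's inequality only costs a sample complexity $C(\epsilon)n$. Since $\va_1^\T\vu\sim\mathcal N(0,1)$ for every $\vu\in\mathbb S^{n-1}$ we have $\E(\va_1^\T\vu)^4=3$; the obstruction to a naive union bound is that $Z^4$ has a sub-Weibull (non sub-exponential) tail, which is also precisely why Lemma~\ref{bp6} cannot be applied to $h(z)=z^4$ directly. To circumvent this, fix a large $N=N(\epsilon)$ and set $h_N(z):=\min(z^4,N^4)$, so that trivially $\frac1m\sum_j(\va_j^\T\vu)^4\ge\frac1m\sum_j h_N(\va_j^\T\vu)$. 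Since $h_N$ increases to $z^4$ pointwise, monotone convergence gives $\E h_N(Z)\uparrow 3$ as $N\to\infty$, so one can choose $N=N(\epsilon)$ (indeed of order $\sqrt{\log(1/\epsilon)}$, as in Corollary~\ref{bp6.01}) with $\E h_N(Z)\ge 3-\tfrac\epsilon2$.

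The gain from the truncation is that $h_N$ is now bounded, $0\le h_N\le N^4$, and globally Lipschitz with constant $4N^3$. For a fixed $\vu$ the variables $h_N(\va_j^\T\vu)-\E h_N(Z)$ are mean zero and bounded by $N^4$, hence sub-gaussian with $\psi_2$ norm $\lesssim N^4$, so Hoeffding's inequality (Lemma~\ref{bp0}) yields $\PP\big(|\frac1m\sum_j h_N(\va_j^\T\vu)-\E h_N(Z)|>\tfrac\epsilon4\big)\le 2\exp(-c\epsilon^2 m/N^8)$. I then take a $\delta$-net $S_\delta\subset\mathbb S^{n-1}$ with $\delta:=\epsilon/(CN^3)$ and $\operatorname{Card}(S_\delta)\le(1+2/\delta)^n$, and union-bound the previous estimate over $S_\delta$.

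To pass from $S_\delta$ to all of $\mathbb S^{n-1}$ I use the Lipschitz bound directly, routed through the Gram matrix rather than through $\|\va_j\|_2$: writing $A=\frac1m\sum_j\va_j\va_j^\T$,
\[
\Big|\tfrac1m\sum_j h_N(\va_j^\T\vu)-\tfrac1m\sum_j h_N(\va_j^\T\vv)\Big|\le 4N^3\cdot\tfrac1m\sum_j|\va_j^\T(\vu-\vv)|\le 4N^3\sqrt{\tfrac1m\sum_j|\va_j^\T(\vu-\vv)|^2}\le 4N^3\sqrt{\|A\|_2}\,\|\vu-\vv\|,
\]
and Lemma~\ref{bp5} gives $\|A\|_2\le 2$ for $m\ge Cn$ with probability at least $1-\exp(-c'm)$. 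This estimate is dimension-free, so the oscillation of $\vu\mapsto\frac1m\sum_j h_N(\va_j^\T\vu)$ over each net cell is at most $4\sqrt2\,N^3\delta\le\tfrac\epsilon4$ once the constant $C$ in $\delta$ is fixed suitably. Combining the three inputs — $\E h_N(Z)\ge 3-\tfrac\epsilon2$, the $\tfrac\epsilon4$ deviation on $S_\delta$, and the $\tfrac\epsilon4$ net oscillation — gives $\frac1m\sum_j(\va_j^\T\vu)^4\ge\frac1m\sum_j h_N(\va_j^\T\vu)\ge 3-\epsilon$ uniformly on $\mathbb S^{n-1}$. Since $N=N(\epsilon)$, the net satisfies $\operatorname{Card}(S_\delta)\le(1+2CN^3/\epsilon)^n=\exp(O_\epsilon(n))$, so for $m\ge C(\epsilon)n$ the union bound over $S_\delta$ together with the event $\{\|A\|_2\le 2\}$ fails with probability at most $2\exp(-c(\epsilon)m)+\exp(-c'm)\le 3\exp(-c(\epsilon)m)$, which is the claim. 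The one place requiring care is exactly this net-to-sphere step: a Lipschitz bound passing through $\max_j\|\va_j\|_2\approx\sqrt n$ would force $\delta$ to be polynomially small in $n$ and degrade the threshold to $n\log n$; funneling it through the operator-norm control of Lemma~\ref{bp5} is what keeps it at $C(\epsilon)n$.
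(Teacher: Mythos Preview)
Your proof is correct and follows essentially the same route as the paper: truncate the quartic so the mean stays at least $3-\epsilon/2$, concentrate the bounded surrogate on a $\delta$-net, and control the net-to-sphere oscillation via the Lipschitz constant of the truncated function together with the operator-norm bound of Lemma~\ref{bp5}. The only cosmetic differences are that the paper uses a smooth cutoff $z^4\phi(z/N)$ rather than your hard truncation $\min(z^4,N^4)$, and packages the net-plus-concentration step as a direct call to Lemma~\ref{bp6} (via Bernstein) rather than spelling it out with Hoeffding as you do.
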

\begin{proof}
Let $\phi \in C_c^{\infty}(\mathbb R)$ be such that $0\le \phi(x) \le 1$ for all $x \in \mathbb R$,
$\phi(x) =1$ for $|x| \le 1$ and $\phi(x)=0$ for $|x|\ge 2$. Then  for any $N\ge 1$, 
\begin{align*}
\frac 1m \sum_{j=1}^m (\va_j^\T \vu)^4 \ge \frac 1m
\sum_{j=1}^m (\va_j^\T \vu)^4 \phi(\frac {\va_j^\T \vu} N).
\end{align*}
We first take $N$ sufficiently large (depending only on $\epsilon$) such that
\begin{align*}
\frac 1m \sum_{j=1}^m\mathbb E (\va_j^\T \vu)^4 \phi(\frac {\va_j^\T \vu} N)
=\mathbb E Z^4 \phi(\frac {Z} N) \ge 3-\frac {\epsilon}2,
\end{align*}
where $Z \sim \mathcal N(0,1)$. 
Then by using Lemma \ref{bp6}, we have for $m\ge C(\epsilon) \cdot n$, 
\begin{align*}
\mathbb P ( \Bigl| \frac 1m \sum_{j=1}^m  (\va_j^\T \vu)^4 \phi(\frac {\va_j^\T \vu} N)
-\operatorname{mean} \Bigr|>\frac {\epsilon}2 ) \le  3 \exp(- c(\epsilon) m).
\end{align*}
The desired result then easily follows.
\end{proof}

\begin{lemma} \label{bp9}
Suppose $f_1, f_2:\, \mathbb R\to \mathbb R$ are  Lipschitz continuous functions such that
\begin{align*}
&|f_1(z)|+|f_2(z)| \le L \cdot (1+|z|), \qquad\forall\, z \in \mathbb R;\\
&|f_k(z) - f_k(\tilde z)| \le L\cdot |z-\tilde z|, \qquad \forall\, z,\tilde z \in \mathbb R,
\; k=1,2,
\end{align*}
where $L>0$ is a constant. 
Assume that $\| f_1( Z) \|_{\psi_2} +\|f_2(Z)\|_{\psi_2} \lesssim 1$
 for a standard Gaussian random variable $Z\sim N(0,1)$. For any $0<\epsilon \le 1/2$, there exist constant $C_1 >0$ only depending on $L$ and universal constant $c>0$ such that if $m\ge  C_1 \epsilon^{-2} \log(1/\epsilon)  n$, then 
with probability at least $1- \exp(-c \epsilon^2 m)$, we have
\begin{align*}
\Bigl|
\frac 1m \sum_{j=1}^m f_1(\va_j^\T \vu)f_2(\va_j^\T \vv) - \operatorname{mean}
\Bigr| \le \epsilon, \qquad \forall\, \vu,\, \vv \in \mathbb S^{n-1}.
\end{align*}
\end{lemma}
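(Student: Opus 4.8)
The plan is to run an $\epsilon$-net argument on the product sphere $\mathbb S^{n-1}\times\mathbb S^{n-1}$, in direct analogy with the proof of Lemma \ref{bp6}, the only new feature being that the summand depends bilinearly on the pair $(\vu,\vv)$. First I would fix a $\delta$-net $S_\delta$ of $\mathbb S^{n-1}$ with $\operatorname{Card}(S_\delta)\le(1+2/\delta)^n$, so that $S_\delta\times S_\delta$ is a net of the product of cardinality at most $(1+2/\delta)^{2n}$, and then establish the bound on the net and transfer it to the whole product sphere by Lipschitz continuity.

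For a fixed pair $(\vu_0,\vv_0)$, note $\va_1^\T\vu_0$ and $\va_1^\T\vv_0$ are each $\mathcal N(0,1)$, so $f_1(\va_1^\T\vu_0)$ and $f_2(\va_1^\T\vv_0)$ are sub-gaussian with $\psi_2$-norm $\lesssim 1$ by hypothesis; hence the product $f_1(\va_1^\T\vu_0)f_2(\va_1^\T\vv_0)$ is sub-exponential with $\psi_1$-norm $\lesssim\|f_1(\va_1^\T\vu_0)\|_{\psi_2}\|f_2(\va_1^\T\vv_0)\|_{\psi_2}\lesssim 1$, and the centered variable has the same property up to an absolute factor. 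Bernstein's inequality (Lemma \ref{bp1}) with $b=(1,\dots,1)$ then gives, for $0<\epsilon_1\le 1$,
\[
\mathbb P\Bigl(\Bigl|\tfrac1m\textstyle\sum_{j=1}^m f_1(\va_j^\T\vu_0)f_2(\va_j^\T\vv_0)-\operatorname{mean}\Bigr|>\epsilon_1\Bigr)\le 2\exp(-c_1 m\epsilon_1^2),
\]
and a union bound over $S_\delta\times S_\delta$ costs a factor $(1+2/\delta)^{2n}$.

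Next I would pass from the net to all of $\mathbb S^{n-1}\times\mathbb S^{n-1}$ using the Lipschitz and linear-growth hypotheses. Writing $f_1(a)f_2(b)-f_1(a')f_2(b')=(f_1(a)-f_1(a'))f_2(b)+f_1(a')(f_2(b)-f_2(b'))$ and applying $|f_k(z)-f_k(\tilde z)|\le L|z-\tilde z|$ together with $|f_k(z)|\le L(1+|z|)$, one gets, for $\vu,\vv$ near $\vu',\vv'$ in the net,
\[
\Bigl|\tfrac1m\textstyle\sum_j f_1(\va_j^\T\vu)f_2(\va_j^\T\vv)-\tfrac1m\sum_j f_1(\va_j^\T\vu')f_2(\va_j^\T\vv')\Bigr|\le L^2\cdot\tfrac1m\textstyle\sum_j\bigl(|\va_j^\T(\vu-\vu')|(1+|\va_j^\T\vv|)+(1+|\va_j^\T\vu'|)|\va_j^\T(\vv-\vv')|\bigr),
\]
and Cauchy--Schwarz reduces the right-hand side to quantities of the form $\tfrac1m\sum_j|\va_j^\T\vw|^2$ for unit $\vw$ and for $\vw\in\{\vu-\vu',\vv-\vv'\}$. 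By Lemma \ref{bp5}, with probability at least $1-\exp(-c_2 m)$ (for $m\gtrsim n$) one has $\tfrac1m\sum_j|\va_j^\T\vw|^2\le 2\|\vw\|^2$ for all $\vw$ simultaneously, so the displayed difference is $\le C(L)\,\delta$ uniformly. Choosing $\epsilon_1=\epsilon/2$ and $\delta\asymp\epsilon/L^2$ small enough, the discretization error is at most $\epsilon/2$, and combining with the net estimate the failure probability is at most $\exp(-c_2m)+2(1+2/\delta)^{2n}\exp(-c_1m\epsilon^2/4)$. Since $\log$ of the net size is $\lesssim n\log(1/\epsilon)$ with the $L$-dependence absorbed into the constant, taking $m\ge C_1\epsilon^{-2}\log(1/\epsilon)n$ makes this at most $\exp(-c\epsilon^2 m)$; here we use $\epsilon\le 1/2$, so that $\exp(-c_2 m)\le\exp(-c_2\epsilon^2 m)$ is absorbed as well.

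I do not anticipate a real obstacle: this is the standard net-plus-Lipschitz scheme already used for Lemma \ref{bp6}. The only bookkeeping points are (i) the net on the product has ``dimension'' $2n$ rather than $n$, which merely changes the constant $C_1$, and (ii) the $\exp(-c_2 m)$ event coming from Lemma \ref{bp5} must be checked to be dominated by $\exp(-c\epsilon^2 m)$, which holds since $\epsilon\le 1/2$. The mild care needed is in handling the bilinearity, which is dispatched by the product rule $\Delta(f_1 f_2)=(\Delta f_1)f_2+f_1(\Delta f_2)$ and Cauchy--Schwarz as above.
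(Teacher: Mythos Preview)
Your proposal is correct and follows essentially the same argument as the paper: a $\delta$-net on $\mathbb S^{n-1}\times\mathbb S^{n-1}$, Bernstein's inequality on the net (with the $(1+2/\delta)^{2n}$ union-bound cost), and the product-rule decomposition combined with the uniform second-moment bound from Lemma~\ref{bp5} to pass from the net to the full sphere. The only cosmetic differences are that the paper uses Young's inequality rather than Cauchy--Schwarz in the discretization step and additionally invokes Lemma~\ref{bp6} for first moments, but the structure and resulting constants are the same.
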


\begin{proof}
Introduce a $\delta$-net $S_{\delta}$ on $\mathbb S^{n-1}$ with
$\operatorname{Card}(S_{\delta}) \le (1+\frac 2 {\delta} )^n$. 
By Lemma \ref{bp1}, for any $0<\epsilon_1\le 1$,  we have
\begin{align*}
\mathbb P ( \sup_{\vu,\vv \in S_{\delta} }\Bigl| \frac 1m \sum_{j=1}^m f_1(\va_j^\T \vu)
f_2(\va_j^\T \vv)-
\operatorname{mean} \Bigr|  > \epsilon_1)
\le (1+\frac 2 {\delta} )^{2n} \cdot 2 \cdot \exp( - c_1 \epsilon_1^2  m),
\end{align*}
where $ c_1>0$ is a universal constant. 
Next by Lemma \ref{bp5} and \ref{bp6},  if $m\ge C n$ then  with probability at least $1-\exp(-c_2 m)$, it holds 
\begin{align*}
&\Bigl| \frac 1 m \sum_{k=1}^m |\va_j^\T \vw| - \operatorname{mean}
\Bigr| \le 0.01, \qquad\forall\, \vw \in \mathbb S^{n-1}; \\
&\Bigl| \frac 1 m \sum_{k=1}^m |\va_j^\T \vw|^2 - \operatorname{mean}
\Bigr| \le 0.01, \qquad\forall\, \vw \in \mathbb S^{n-1},
\end{align*}
where $C,c_2$ are universal positive constants.
Consequently,  for any $\vu, \vv\in \mathbb S^{n-1}$, there exist $\tilde \vu, \tilde \vv \in S_{\delta}$ such that $\|\vu-\tilde \vu\|_2\le \delta$, $\|\vv-\tilde \vv\|_2 \le \delta$, and then
\begin{align*}
&\frac 1m \sum_{j=1}^m |f_1(\va_j^\T \vu) f_2(\va_j^\T \vv)
-f_1(\va_j^\T \tilde \vu) f_2(\va_j^\T \tilde \vv) | \notag \\
\le & \;
\frac 1m \sum_{j=1}^m |f_1(\va_j^\T \vu) -f_1(\va_j^\T \tilde \vu) | |f_2(\va_j^\T \vv)|
+ \frac 1m \sum_{j=1}^m |f_1(\va_j^\T \tilde \vu)| |f_2(\va_j^\T \vv) -f_2(\va_j^\T \tilde \vv) | \notag \\
\le & \; 
\frac 1m \sum_{j=1}^m L^2 |\va_j^\T (\vu-\tilde \vu) |  (1+|\va_j^\T \vv|)
+ \frac 1m \sum_{j=1}^m L^2 (1+|\va_j^\T \tilde \vu|) |\va_j^\T (\vv-\tilde \vv) |\notag \\
\le & \; \frac {L^2}m \sum_{j=1}^m ( |\va_j^\T (\vu-\tilde \vu) | +|\va_j^\T (\vv-\tilde \vv)|)+
\frac {L^2}m \sum_{j=1}^m \frac 1{4{\delta} }
|\va_j^\T (\vu-\tilde \vu)|^2 +
\frac {L^2}m \sum_{j=1}^m \delta |\va_j^\T \vv|^2  \notag \\
&\quad+\frac {L^2}m \sum_{j=1}^m  
\frac 1 {4\delta} |\va_j^\T (\vv-\tilde \vv)|^2 +
\frac {L^2}m \sum_{j=1}^m \delta |\va_j^\T \tilde \vu|^2 \notag \\
\le & \; 3 L^2 (\|\vu-\tilde \vu\|_2 + \|\vv-\tilde \vv\|_2) +
\frac 1 {\delta} L^2(\|\vu-\tilde \vu\|_2^2 +\|\vv-\tilde \vv\|_2^2)+ 2\delta L^2
\le 10 \delta L^2.
\end{align*}
Now set $\delta=\frac {\epsilon}{15 L^2}$ and $\epsilon_1=\frac{\epsilon}3$. 
The desired conclusion then follows  with probability at least
\begin{align*}
1- \exp(-c_2 m)-
(1+\frac 2 {\delta} )^{2n} \cdot 2 \cdot \exp( -c_1 \epsilon_1^2 m)
\ge 1-\exp(-c \epsilon^2 m),
\end{align*}
provided $m\ge C_1 \epsilon^{-2} \log(1/\epsilon) n$. Here, $c$ is a  positive universal constant and 
$C$ is a positive constant only depending on $L$.
\end{proof}

\begin{corollary} \label{bp9.01}
If $m\ge Cn$, then with probability at least $1- \exp(-cm)$, we have
\begin{align*}
\frac 1m \sum_{j=1}^m (\va_j^\T \vu)^2 (\va_j^\T \vv)^2 
\ge c_1>0, \qquad \forall\, \vu, \vv \in \mathbb S^{n-1},
\end{align*}
where $C, c$ and $c_1$ are absolute positive constants.  
\end{corollary}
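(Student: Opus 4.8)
The plan is to reduce the statement to a uniform law of large numbers for a bounded Lipschitz integrand via a smooth truncation, and then to verify that the corresponding population mean is bounded below by an absolute constant uniformly over all pairs $\vu,\vv\in\mathbb S^{n-1}$.

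First I would fix a cut-off $\phi\in C_c^\infty(\mathbb R)$ with $0\le\phi\le1$, $\phi\equiv1$ on $[-1,1]$ and $\phi\equiv0$ outside $[-2,2]$, together with a parameter $N\ge1$ to be chosen later. Since $\phi\le 1$ one has the pointwise lower bound
\[
\frac1m\sum_{j=1}^m (\va_j^\T\vu)^2(\va_j^\T\vv)^2 \ \ge\ \frac1m\sum_{j=1}^m (\va_j^\T\vu)^2\phi\Big(\frac{\va_j^\T\vu}{N}\Big)(\va_j^\T\vv)^2\phi\Big(\frac{\va_j^\T\vv}{N}\Big),
\]
so it suffices to bound the right-hand side from below. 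I would then apply Lemma \ref{bp9} with $f_1(z)=f_2(z)=z^2\phi(z/N)$; these functions are bounded, compactly supported and Lipschitz, so the growth, Lipschitz, and sub-Gaussian hypotheses of Lemma \ref{bp9} hold with constants depending only on $N$. Taking the accuracy parameter in Lemma \ref{bp9} to be, say, $\tfrac14$, this gives that for $m\ge Cn$, with probability at least $1-\exp(-cm)$,
\[
\frac1m\sum_{j=1}^m (\va_j^\T\vu)^2\phi\Big(\tfrac{\va_j^\T\vu}{N}\Big)(\va_j^\T\vv)^2\phi\Big(\tfrac{\va_j^\T\vv}{N}\Big)\ \ge\ \mathbb E\Big[(\va_1^\T\vu)^2\phi\big(\tfrac{\va_1^\T\vu}{N}\big)(\va_1^\T\vv)^2\phi\big(\tfrac{\va_1^\T\vv}{N}\big)\Big]-\tfrac14
\]
uniformly over $\vu,\vv\in\mathbb S^{n-1}$.

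Next I would lower bound the population mean. Writing $\sigma=\langle\vu,\vv\rangle$ and decomposing $\va_1^\T\vv=\sigma(\va_1^\T\vu)+\sqrt{1-\sigma^2}\,Y$ with $Y\sim N(0,1)$ independent of $\va_1^\T\vu$, a direct Gaussian computation yields $\mathbb E[(\va_1^\T\vu)^2(\va_1^\T\vv)^2]=1+2\sigma^2\ge 1$, independently of the pair $\vu,\vv$. Moreover, since $1-\phi(\tfrac{\va_1^\T\vu}{N})\phi(\tfrac{\va_1^\T\vv}{N})\le\chi_{|\va_1^\T\vu|\ge N}+\chi_{|\va_1^\T\vv|\ge N}$, Lemma \ref{bp4} (used twice, with the roles of $\vu$ and $\vv$ interchanged) shows that for $N$ large enough, depending only on a fixed small $\epsilon_0$,
\[
\Big|\mathbb E\big[(\va_1^\T\vu)^2(\va_1^\T\vv)^2\big]-\mathbb E\big[(\va_1^\T\vu)^2\phi(\tfrac{\va_1^\T\vu}{N})(\va_1^\T\vv)^2\phi(\tfrac{\va_1^\T\vv}{N})\big]\Big|\le 2\epsilon_0
\]
uniformly in $\vu,\vv$. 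Choosing $\epsilon_0=\tfrac18$ makes the truncated mean at least $1-\tfrac14=\tfrac34$, and combining this with the concentration estimate above I obtain $\frac1m\sum_{j=1}^m(\va_j^\T\vu)^2(\va_j^\T\vv)^2\ge\tfrac34-\tfrac14=\tfrac12$ for all $\vu,\vv\in\mathbb S^{n-1}$ with probability at least $1-\exp(-cm)$, which is the claim with $c_1=\tfrac12$.

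The only step that needs genuine care is the uniform lower bound on the population mean: one must note that the untruncated second moment $1+2\sigma^2$ stays bounded away from zero no matter how correlated $\vu$ and $\vv$ are (it is always $\ge1$), and that the truncation error controlled by Lemma \ref{bp4} is uniform in $\vu,\vv$ because it depends on the pair only through the Gaussian marginals of $\va_1^\T\vu$ and $\va_1^\T\vv$. Everything else is a routine invocation of the uniform concentration Lemma \ref{bp9}.
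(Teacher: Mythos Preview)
Your proposal is correct and follows essentially the same approach as the paper: bound the sum below by a smoothly truncated version, apply the uniform concentration of Lemma~\ref{bp9} to the bounded Lipschitz integrand, and show the truncated population mean is uniformly bounded below. The only cosmetic difference is in the last step: the paper lower bounds the truncated expectation directly by restricting to the event $\{|X|\le N/4,\ |Y|\le N/4\}$ and expanding, whereas you compute the full expectation $1+2\sigma^2\ge 1$ and subtract the tail via Lemma~\ref{bp4}; both routes give an absolute lower bound once $N$ is fixed.
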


\begin{proof}
Step 1. Write $\vv= s \vu +\sqrt{1-s^2} \vu^{\perp}$, where $|s|\le 1$, and $\vu^{\perp} \in \mathbb S^{n-1}$ is such that $\nj{\vu^{\perp}, \vu}=0$.  Let $\va\sim \mathcal N(0,I_n)$ and denote
$X=\va^\T \vu$, $Y=\va^\T \vu^{\perp}$. Clearly $X\sim \mathcal N(0,1)$, $Y \sim \mathcal N(0,1)$
and $X$, $Y$ are independent.  Now let $N\ge 4$. We have
\begin{align*}
 & \mathbb E (\va^\T \vu)^2 (\va^\T \vv)^2 \chi_{|\va^\T \vu|\le N} \chi_{|\va^\T \vv|\le N} \notag \\
 = &\; \mathbb E  X^2 ( s X +\sqrt{1-s^2} Y)^2 \chi_{|X|\le N}
 \chi_{|s X+\sqrt{1-s^2} Y| \le N} \notag \\
 \ge &\; \mathbb E X^2 (sX+\sqrt{1-s^2} Y)^2 \chi_{|X|\le \frac N4}
 \chi_{|Y| \le \frac N4} \notag \\
 = &\; \mathbb E (s^2 X^4 + (1-s^2) X^2 Y^2) \chi_{|X|\le \frac N4}
 \chi_{|Y|\le \frac N4}  \ge 2c_1>0,
 \end{align*}
where $c_1>0$ is an absolute constant, and $N$ is taken to be a sufficiently large
absolute constant. 

Step 2. Let $\phi \in C_c^{\infty}(\mathbb R)$ be such that $\phi(x) =x^2$
for $|x| \le N$ and $\phi(x)=0$ for $|x|\ge N+1$. Clearly
if $m \ge Cn$, then with probability at least $1-\exp(-cm)$, we have
\begin{align*}
\Bigl|\frac 1m \sum_{j=1}^m \phi(\va_j^\T \vu) \phi(\va_j^\T \vv)
-\operatorname{mean} \Bigr| \le \frac 12c_1
\end{align*}
and thus
\begin{align*}
 &\frac 1 m \sum_{j=1}^m (\va_j^\T \vu)^2 (\va_j^\T \vv)^2 
\ge\; \frac 1m \sum_{j=1}^m \phi(\va_j^\T \vu) \phi(\va_j^\T \vv)  >c_1,
\quad\forall\, \vu,\vv \in \mathbb S^{n-1}.
\end{align*}
\end{proof}

\begin{lemma} \label{bp10}
Let $h:\, \mathbb R \to \mathbb R$ be a Lipschitz continuous function such that
\begin{align*}
& \sup_{z\in \mathbb R} \frac {|h(z)|} {1+|z|} \lesssim 1 \quad \mbox{and} \quad \sup_{z\ne \tilde z} 
\frac {|h(z)-h(\tilde z) | } { |z-\tilde z|  } \lesssim 1.
\end{align*}
Define the set
\begin{align*}
F = \{ \vu \in \mathbb R^n:\; \|\vu\|_2 \le 1, \quad \vu^\T \vx=0 \}.
\end{align*}
Suppose $\vb=(b_1,\cdots,b_m) \in \mathbb R^m$ satisfies $\norm{\vb} \lesssim \sqrt{m} $. For any $0<\epsilon\le  1/2$, if $m\ge C \epsilon^{-2} \log (1/\epsilon) n$, then
with probability at least $1-\exp(-c \epsilon^2 m)$, we have
\begin{align*}
\Bigl| 
\frac 1m \sum_{j=1}^m b_j h(\va_j^\T \vu) - \operatorname{mean}
\Bigr| \le \epsilon, \qquad \forall\, \vu \in F.
\end{align*}

\end{lemma}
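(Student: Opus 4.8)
The plan is to run the now-familiar net $+$ concentration $+$ Lipschitz-transfer scheme, exactly as in Lemmas \ref{bp6} and \ref{bp9}, treating $\vb$ as a fixed deterministic weight vector. Since $F$ is contained in the unit ball of the hyperplane $\vx^\perp$, which has dimension at most $n$, I would fix a $\delta$-net $S_\delta$ of $F$ with $\operatorname{Card}(S_\delta)\le (1+2/\delta)^n$, the value $\delta\asymp\epsilon$ being pinned down at the end.

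For a single fixed $\vu_0\in S_\delta$, the growth bound $|h(z)|\le L(1+|z|)$ together with $\|\vu_0\|_2\le 1$ gives $\|h(\va_j^\T\vu_0)\|_{\psi_2}\lesssim 1$, so the centered variables $X_j:=h(\va_j^\T\vu_0)-\mathbb E h(\va_j^\T\vu_0)$ are independent, mean zero and sub-gaussian with $\max_j\|X_j\|_{\psi_2}=:K\lesssim 1$. Applying Hoeffding's inequality (Lemma \ref{bp0}) with the weights $b_j$ and $\norm{\vb}^2\lesssim m$, we get for any $0<\epsilon_1\le 1$
\[
\mathbb P\Bigl(\Bigl|\tfrac1m\sum_{j=1}^m b_j X_j\Bigr|>\epsilon_1\Bigr)\le 2\exp\!\Bigl(-\tfrac{c\,m^2\epsilon_1^2}{K^2\norm{\vb}^2}\Bigr)\le 2\exp(-c_1 m\epsilon_1^2),
\]
and note $\tfrac1m\sum_j b_j X_j=\tfrac1m\sum_j b_j h(\va_j^\T\vu_0)-\operatorname{mean}$. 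A union bound over $S_\delta$ then shows that, outside an event of probability $(1+2/\delta)^n\cdot 2\exp(-c_1 m\epsilon_1^2)$, the claimed inequality holds with $\epsilon_1$ in place of $\epsilon$ at every point of the net.

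To pass from $S_\delta$ to all of $F$ I would control the empirical average and its mean by a two-sided Lipschitz estimate. For $\vu,\vv\in F$ with $\norm{\vu-\vv}\le\delta$, the Lipschitz bound on $h$ and Cauchy--Schwarz give
\[
\Bigl|\tfrac1m\sum_j b_j h(\va_j^\T\vu)-\tfrac1m\sum_j b_j h(\va_j^\T\vv)\Bigr|\le L\sqrt{\tfrac1m\sum_j b_j^2}\;\sqrt{\tfrac1m\sum_j |\va_j^\T(\vu-\vv)|^2}\,,
\]
and on the event $\|\tfrac1m\sum_j\va_j\va_j^\T-\operatorname{I}\|_2\le 1$ — which holds with probability $\ge 1-\exp(-c_2 m)$ for $m\gtrsim n$ by Lemma \ref{bp5} — together with $\tfrac1m\sum_j b_j^2\lesssim 1$ this is $\lesssim L\delta$. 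For the means, Jensen plus the Lipschitz bound give $|\mathbb E h(\va_j^\T\vu)-\mathbb E h(\va_j^\T\vv)|\le L\,\mathbb E|\va_j^\T(\vu-\vv)|\le L\norm{\vu-\vv}$, hence by Cauchy--Schwarz again $|\operatorname{mean}(\vu)-\operatorname{mean}(\vv)|\lesssim L\delta$. Choosing $\epsilon_1=\epsilon/3$ and $\delta\asymp\epsilon/L$ small enough that both Lipschitz terms fall below $\epsilon/3$, and absorbing $(1+2/\delta)^n=\exp(O(n\log(1/\epsilon)))$ into $\exp(c_1 m\epsilon_1^2)$ — which is precisely the source of the requirement $m\ge C\epsilon^{-2}\log(1/\epsilon)n$ — and the extra $\exp(-c_2 m)$ into $\exp(-c\epsilon^2 m)$, yields the stated conclusion.

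The proof is essentially routine given the toolkit already assembled; the only point that needs a little care is that the uniformization must handle the random empirical average and its (deterministic but $\vu$-dependent) mean simultaneously, and that the Lipschitz control $\tfrac1m\sum_j|\va_j^\T(\vu-\vv)|^2\lesssim\norm{\vu-\vv}^2$ used for the former is not deterministic but rests on the operator-norm bound of Lemma \ref{bp5} — which is why the single failure probability $\exp(-c\epsilon^2 m)$ must quietly swallow a term of size $\exp(-cm)$.
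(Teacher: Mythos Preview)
Your proposal is correct and follows essentially the same route as the paper: Hoeffding (Lemma \ref{bp0}) for each fixed $\vu$ on a $\delta$-net of $F$, then a Lipschitz transfer in $\vu$ via Cauchy--Schwarz and the operator-norm event of Lemma \ref{bp5}, with the mean handled by the same Lipschitz bound combined with $\tfrac1m\sum_j|b_j|\lesssim 1$. The choice $\delta\asymp\epsilon$ and the resulting sample requirement $m\gtrsim\epsilon^{-2}\log(1/\epsilon)\,n$ match the paper exactly.
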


\begin{proof}
First it is easy to check that $\max_{j} \| h(\va_j^\T \vu) \|_{\psi_2} \lesssim 1$. 
By Lemma \ref{bp0}, for each $\vu \in F$, we have
\begin{align*}
\mathbb P ( 
\Bigl| \frac 1m \sum_{j=1}^m b_j h(\va_j^\T \vu) - \operatorname{mean}
\Bigr| > \frac{\epsilon}4)
\le 2 \exp( - c m\epsilon^2).
\end{align*}
Now let $\delta>0$ and introduce a $\delta$-net $S_{\delta}$ on the set 
$F$. Note that the set $F$ can be identified as a unit ball in $\mathbb R^{n-1}$.
We have $\operatorname{Card} (S_{\delta} ) \le (1+\frac 2{\delta})^n$. Thus
\begin{align*}
\mathbb P ( 
 \sup_{u  \in S_{\delta} }\Bigl| \frac 1m \sum_{j=1}^m b_j h(\va_j^\T \vu) - \operatorname{mean}
\Bigr| > \frac{\epsilon}4)
\le 2  (1+\frac 2 {\delta} )^n \exp( - c m\epsilon^2).
\end{align*}
By Lemma \ref{bp5}, if $m\ge Cn$, then with probability at least $1-\exp(-cm)$, we have
\begin{align*}
\frac 1m \sum_{j=1}^m (\va_j^\T \vu)^2 \le 2 \|\vu\|_2^2, \qquad \forall\, \vu \in \mathbb R^n.
\end{align*}
Now if $\vu \in S_{\delta}$, $\vv\in F$ with $\|\vv-\vu\|_2\le \delta$, then with
probability at least $1-\exp(-cm)$, we have
\begin{align*}
 & \Bigl| \frac 1m \sum_{j=1}^m b_j  \cdot h(\va_j^\T \vu) 
    -\frac 1m \sum_{j=1}^m b_j 
 \cdot  h(\va_j^\T \vv)   \Bigr|\notag \\
  \le&\;  \frac 1m \sum_{j=1}^m 
 |b_j| \cdot K_0 \cdot |\va_j^\T (\vu-\vv) |   \notag \\
 \le & K_0 \sqrt{\frac 1m \sum_{j=1}^m b_j^2}  \sqrt{\frac 1m \sum_{j=1}^m |\va_j^\T (\vu-\vv)|^2}\notag \\
 \le & K_1 \| \vu-\vv\|_2,
\end{align*}
where $K_0>0$, $K_1>0$ are absolute constants.  On the other hand
\begin{align*}
& \frac 1m \sum_{j=1}^m |b_j| \mathbb E | h(\va_j^\T \vu) -h(\va_j^\T \vv) | \notag \\
\lesssim &\; \frac 1m \sum_{j=1}^m |b_j| \cdot  \norm{\vu-\vv} \lesssim \norm{\vu-\vv}. 
\end{align*}
 It follows that for some absolute constant $K_3>0$, 
\begin{align*}
 & \Bigl| \frac 1m \sum_{j=1}^m b_j  \cdot \bigl(  h(\va_j^\T \vu)  -\mathbb E (h(\va_j^\T \vu) ) \bigr)
    -\frac 1m \sum_{j=1}^m b_j 
 \cdot \bigl( h(\va_j^\T \vv)  -\mathbb E( h(\va_j^\T \vv) ) \bigr)   \Bigr|\notag \\
\le & \; K_3 \norm{\vu-\vv}.
\end{align*}
Now set $\delta=\frac {\epsilon}{10K_3+10}$.  The desired conclusion then
follows with probability at least
\begin{align*}
1-2(1+\frac 2{\delta})^n \exp(-cm \epsilon^2) -\exp(-cm)
\ge 1- \exp(-c_1 \epsilon^2 m),
\end{align*}
provided $m\ge C  \epsilon^{-2} \log(1/\epsilon) n$, where $c_1>0$ is an absolute constant.
\end{proof}

\begin{lemma} \label{bp11}
Let $h:\, \mathbb R \to \mathbb R$ be a Lipschitz continuous function such that
\begin{align*}
& \sup_{z\in \mathbb R} \frac {|h(z)|} {1+|z|} \lesssim 1 \quad \mbox{and} \quad \sup_{z\ne \tilde z} 
\frac {|h(z)-h(\tilde z) | } { |z-\tilde z|  } \lesssim 1.
\end{align*}
Let $f_3:\, \mathbb R \to \mathbb R$ be such that
\begin{align*}
\sup_{z\in \mathbb R} \frac  { f_3(z)} {1+|z|^3} \lesssim 1.
\end{align*}
Define 
\begin{align*}
F = \{ \vu \in \mathbb R^n:\; \|\vu\|_2 \le 1, \quad \vu^\T \vx=0 \}.
\end{align*}
Then for any $0<\epsilon \le  1/2$,  there exist $C_1=C_1(\epsilon)>0$, $C_2=C_2(\epsilon)>0$,
 such that if $m\ge C_1 \cdot n$, then
the following holds with probability at least $1-\frac {C_2} {m^2}$:
\begin{align*}
\Bigl| \frac 1m \sum_{j=1}^m f_3(\va_j^\T \vx) h(\va_j^\T \vu)
- \operatorname{mean} \Bigr|\le \epsilon, \qquad\forall\, \vu \in F.
\end{align*}
\end{lemma}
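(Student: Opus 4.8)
The plan is to freeze the scalars $\va_1^\T\vx,\ldots,\va_m^\T\vx$ by conditioning, which turns the weights $f_3(\va_j^\T\vx)$ into a deterministic vector, and then to invoke the weighted uniform estimate of Lemma~\ref{bp10}. What makes this work is the defining constraint $\vu^\T\vx=0$ in the set $F$: for such $\vu$ the Gaussians $\va_j^\T\vu$ and $\va_j^\T\vx$ are uncorrelated, hence independent, and the family $\{\va_j^\T\vu:\vu\in F\}_{j=1}^m$ has, conditionally on $(\va_1^\T\vx,\ldots,\va_m^\T\vx)$, exactly the same joint law as unconditionally, namely that of $m$ independent copies of a standard Gaussian restricted to the hyperplane $\vx^\perp$. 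Writing $b_j:=f_3(\va_j^\T\vx)$, the quantity to be controlled is then $\frac1m\sum_{j=1}^m b_j\,h(\va_j^\T\vu)$ with $h$ in exactly the class treated by Lemma~\ref{bp10}, conditionally on the now deterministic vector $\vb=(b_1,\ldots,b_m)$.

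First I would verify that $\|\vb\|_2\lesssim\sqrt m$ on an event of probability at least $1-O(m^{-2})$, since this is precisely the hypothesis Lemma~\ref{bp10} imposes on the weights. As $\sup_{z}|f_3(z)|/(1+|z|^3)\lesssim1$ and a Gaussian has all moments finite, $X_j:=b_j^2$ satisfies $\mathbb E|X_j|^4=\mathbb E|f_3(\va_1^\T\vx)|^{8}\lesssim1$, so Lemma~\ref{bp7} applied with $t$ equal to the positive constant $\mathbb E b_1^2$ gives $\frac1m\sum_{j=1}^m b_j^2\le 2\,\mathbb E b_1^2$ off an event of probability $O(m^{-2})$, i.e.\ $\|\vb\|_2\le C\sqrt m$. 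Running the same fourth-moment estimate on $b_j$ itself yields $|\bar b-\mathbb E f_3(\va_1^\T\vx)|\le\epsilon'$ off an event of probability $O_{\epsilon'}(m^{-2})$, where $\bar b:=\frac1m\sum_j b_j$ and $\epsilon'$ is a small constant multiple of $\epsilon$ to be fixed below.

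Next I would apply Lemma~\ref{bp10} conditionally on $(\va_j^\T\vx)_j$, on the event $\{\|\vb\|_2\le C\sqrt m\}$: for $m\ge C(\epsilon)n$ it gives, with conditional probability at least $1-\exp(-c\epsilon^2 m)$,
\[
\Bigl|\frac1m\sum_{j=1}^m b_j\,h(\va_j^\T\vu)-\frac1m\sum_{j=1}^m b_j\,\mathbb E h(\va_j^\T\vu)\Bigr|\le\frac{\epsilon}{2},\qquad\forall\,\vu\in F.
\]
It remains to reconcile the centering with $\operatorname{mean}=\mathbb E\bigl[\frac1m\sum_j f_3(\va_j^\T\vx)h(\va_j^\T\vu)\bigr]$, which by the independence noted above equals $\mathbb E f_3(\va_1^\T\vx)\cdot\mathbb E h(\va_1^\T\vu)$, while the centering in the display equals $\bar b\cdot\mathbb E h(\va_1^\T\vu)$. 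Since $|\mathbb E h(\va_1^\T\vu)|\le\sup_{0\le t\le1}\mathbb E|h(tZ)|\lesssim1$ uniformly over $\vu\in F$ (with $Z\sim\mathcal N(0,1)$), choosing $\epsilon'$ a suitably small multiple of $\epsilon$ makes these two centerings differ by at most $\epsilon/2$ on the event $\{|\bar b-\mathbb E f_3(\va_1^\T\vx)|\le\epsilon'\}$, and the triangle inequality closes the estimate. Adding the failure probabilities gives a bound $O_\epsilon(m^{-2})+\exp(-c\epsilon^2 m)$, and since $m\ge C_1(\epsilon)n\ge1$ we may absorb $\exp(-c\epsilon^2 m)\le C(\epsilon)m^{-2}$, leaving the claimed $1-C_2(\epsilon)m^{-2}$.

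The main obstacle is the cubic growth permitted for $f_3$: the summand $f_3(\va_j^\T\vx)h(\va_j^\T\vu)$ is far too heavy-tailed -- its square behaves like a sixth power of a Gaussian times a sub-Gaussian factor -- for a single net-plus-sub-exponential argument to produce an exponentially small failure probability, which is exactly why only the polynomial rate $m^{-2}$ is attainable. The device that tames this, the conditioning step, rests entirely on the orthogonality $\vu\perp\vx$ built into $F$: it decouples the heavy factor $f_3(\va_j^\T\vx)$ from the direction-dependent factor $h(\va_j^\T\vu)$, so that the heavy factor enters only through the mild fourth-moment control of $\|\vb\|_2$ demanded by Lemma~\ref{bp10}.
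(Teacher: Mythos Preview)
Your proposal is correct and follows essentially the same route as the paper: condition on $(\va_j^\T\vx)_j$ to decouple the heavy weights $b_j=f_3(\va_j^\T\vx)$ from the direction-dependent part (the paper phrases this via the projection $\tilde\va_j=\va_j-(\va_j^\T\vx)\vx$, but the content is the same), use Lemma~\ref{bp7} to secure $\|\vb\|_2\lesssim\sqrt m$ and $|\bar b-\mathbb E b_1|\le\epsilon'$ with probability $1-O(m^{-2})$, apply Lemma~\ref{bp10} conditionally, and then reconcile the centering using $|\mathbb E h(\va_1^\T\vu)|\lesssim1$. Your final absorption of $\exp(-c\epsilon^2 m)$ into $C(\epsilon)m^{-2}$ is exactly what the paper does implicitly when it passes to the stated $1-C_2/m^2$ bound.
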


\begin{proof}
Step 1. Set $b_j = f_3(\va_j^\T \vx)$.  Denote 
\begin{align*}
K_0= \mathbb E b_1^2 \lesssim 1.
\end{align*}
By  Lemma \ref{bp7}, we have
\begin{align*}
\mathbb P ( \Bigl| \frac 1m \sum_{j=1}^m b_j^2 
-K_0 \Bigr| >t ) \lesssim \frac 1 {m^2 t^4}.
\end{align*}
Then with
probability at least $1- \frac 1 {m^2 }$, we have
\begin{align*}
\frac 1m \sum_{j=1}^m b_j^2 \le B_0,
\end{align*}
where $B_0>0$ is some absolute constant. 

Step 2. Denote $\tilde \va_j = \va_j -(\va_j^\T \vx) \vx$.  An important observation is that 
$(\va_j^\T \vx)_{1\le j\le m}$ and $(\tilde \va_j)_{1\le j\le m}$ are independent. 
Note that for $\vu \in F$ we have $\va_j^\T \vu=\tilde \va_j^\T \vu$.  Thus for every $\tilde b_j$ with the property $\frac 1m \sum_{j=1}^m (\tilde b_j)^2 \le B_0$, we have the following as
a consequence of Lemma \ref{bp10}:
For any $0<\epsilon \le  1/2$, if $m\ge C  \epsilon^{-2} \log(1/\epsilon) n$, then
with probability at least $1-\exp(-c \epsilon^2 m)$, we have
\begin{align*}
\Bigl| \frac 1m \sum_{j=1}^m \tilde b_j\cdot  \bigl(h(\tilde \va_j^\T \vu) - \mathbb E (h(\tilde \va_j^\T \vu ) ) \bigr)\Bigr| \le \frac 13  \epsilon, \qquad\forall\, \vu \in F.
\end{align*}

Step 3.  By using the results from Step 1 and Step 2, with probability at least  $1- \frac 2 {m^2 }$, we have
\begin{align*}
\Bigl| \frac 1m \sum_{j=1}^m  b_j \cdot  \bigl(h(\tilde \va_j^\T \vu) - \mathbb E (h(\tilde \va_j^\T \vu ) ) \bigr)\Bigr| \le  \frac 13\epsilon, \qquad\forall\, \vu \in F.
\end{align*}
Now note that $|\mathbb E ( h(\tilde \va_j^\T \vu) )| =| \mathbb E (h (\tilde \va_1^\T \vu) )|
\lesssim 1$.  By Lemma \ref{bp7}, we have
\begin{align*}
\mathbb P ( \Bigl| \frac 1m \sum_{j=1}^m  (b_j -\mathbb E b_j ) \Bigr |>t)
\lesssim  \frac 1 {m^2  t^4}. 
\end{align*}
Choosing $t= \frac {\epsilon}{K_1}$ where $K_1>0$ is a sufficiently large absolute constant
such that
\begin{align*}
 \frac 1 {K_1} \cdot |\mathbb E b_1| \le \frac {\epsilon}{10}
 \end{align*}
then yields the result.
\end{proof}

\begin{corollary} \label{bp11.01}
For any $0<\epsilon \le  1/2$,  there exist $C_1=C_1(\epsilon)>0$, $C_2=C_2(\epsilon)>0$,
 such that if $m\ge C_1 \cdot n$, then
the following holds with probability at least $1-\frac {C_2} {m^2}$:
\begin{align*}
\Bigl| \frac 1m \sum_{j=1}^m (\va_j^\T \vx)^3 (\va_j^\T \vu)
- \operatorname{mean} \Bigr|\le \epsilon, \qquad\forall\, \vu \in \mathbb S^{n-1}.
\end{align*}
\end{corollary}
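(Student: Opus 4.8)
The plan is to reduce the claim to Lemma~\ref{bp11}, which already handles the delicate heavy-tailed part, by splitting $\vu$ into its component along $\vx$ and its orthogonal complement. Write $s := \vu^\T\vx \in [-1,1]$ and $\vw := \vu - s\vx$, so that $\vw^\T\vx = 0$ and $\|\vw\|_2 = \sqrt{1-s^2} \le 1$; in particular $\vw \in F := \{\vu \in \R^n : \|\vu\|_2 \le 1,\ \vu^\T\vx = 0\}$. Since $\va_j^\T\vu = s\,\va_j^\T\vx + \va_j^\T\vw$, one has the pointwise identity
\[
\frac 1m \sum_{j=1}^m (\va_j^\T\vx)^3 (\va_j^\T\vu) = s\cdot \frac 1m \sum_{j=1}^m (\va_j^\T\vx)^4 \;+\; \frac 1m \sum_{j=1}^m (\va_j^\T\vx)^3 (\va_j^\T\vw),
\]
and, taking expectations, the corresponding mean equals $3s = 3\,\vu^\T\vx$ (using $\mathbb E Z^4 = 3$ for $Z\sim\mathcal N(0,1)$, together with the independence of $\va_1^\T\vx$ and $\va_1^\T\vw$, which kills the cross term).

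For the first term on the right I would apply Lemma~\ref{bp7} to the i.i.d.\ variables $X_j = (\va_j^\T\vx)^4$, whose fourth moment $\mathbb E Z^{16}$ is a finite absolute constant; this gives $\bigl|\frac 1m\sum_j (\va_j^\T\vx)^4 - 3\bigr| \le \epsilon/2$ with probability at least $1 - C(\epsilon)/m^2$, and since $|s| \le 1$ this term then contributes at most $\epsilon/2$ to the deviation from $3s$. For the second term I would invoke Lemma~\ref{bp11} with $f_3(z) = z^3$ and $h(z) = z$: these satisfy its hypotheses (indeed $\sup_z |z^3|/(1+|z|^3) \le 1$, while $h$ is $1$-Lipschitz with $|h(z)|/(1+|z|) \le 1$), so for $m \ge C_1(\epsilon) n$ and with probability at least $1 - C_2(\epsilon)/m^2$ we obtain $\bigl|\frac 1m\sum_j (\va_j^\T\vx)^3(\va_j^\T\vw) - \mathbb E (\va_1^\T\vx)^3(\va_1^\T\vw)\bigr| \le \epsilon/2$ uniformly over $\vw \in F$, and that expectation vanishes because $\vw \perp \vx$ makes $\va_1^\T\vx$ and $\va_1^\T\vw$ independent mean-zero Gaussians.

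Combining the two bounds by the triangle inequality and taking a union bound over the two polynomially small exceptional events yields
\[
\Big| \frac 1m \sum_{j=1}^m (\va_j^\T\vx)^3 (\va_j^\T\vu) - 3\,\vu^\T\vx \Big| \le \epsilon \qquad\text{for all } \vu \in \mathbb S^{n-1},
\]
with probability at least $1 - C_2'(\epsilon)/m^2$ once $m \ge C_1'(\epsilon) n$, which is exactly the assertion (recall $\operatorname{mean}$ here is $3\,\vu^\T\vx$). The genuine difficulty — controlling a fourth power of a Gaussian against a Lipschitz factor, uniformly over the sphere, without paying extra logarithmic factors — has already been absorbed into Lemma~\ref{bp11}, so the only point that needs care is that uniformity is inherited correctly: every $\vu \in \mathbb S^{n-1}$ produces a $\vw$ lying in the \emph{single} set $F$ over which Lemma~\ref{bp11} supplies a uniform estimate, and the scalar $s$ is bounded, so no further net/covering argument is needed at this stage.
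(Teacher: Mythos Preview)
Your proposal is correct and follows the same decomposition as the paper: split $\vu=(\vu^\T\vx)\vx+\tilde\vu$ with $\tilde\vu\perp\vx$, handle the cross term via Lemma~\ref{bp11}, and control the remaining $(\va_j^\T\vx)^4$ term (for which you rightly invoke Lemma~\ref{bp7}; the paper's proof is terse enough to leave this implicit). Your added remarks on uniformity and the computation of the mean are accurate fill-ins of what the paper elides.
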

\begin{proof}
We decompose $\vu=(\vu^\T\vx)\vx+ \tilde \vu$, where $\tilde \vu^\T\vx=0$. 
The result then easily follows from Lemma \ref{bp11}.
\end{proof}

\begin{lemma} \label{bp12}
For any $0<\epsilon \le  1/2 $, there exists $N_0=N_0(\epsilon)>0$,
$C_1=C_1(\epsilon)>0$, $C_2=C_2(\epsilon)>0$, such that if $m\ge C_1 n$,
then the following hold with probability at least $1-\frac {C_2} {m^2}$:
For any $N\ge N_0$, we have
\begin{align*}
&\frac 1m \sum_{j=1}^m |\va_j^\T \vx|^3 |\va_j^\T \vu|
\chi_{|\va_j^\T \vx| \ge N} \le \epsilon, \qquad\forall\, \vu \in \mathbb S^{n-1};\\
&\frac 1m \sum_{j=1}^m |\va_j^\T \vx|^3 |\va_j^\T \vu|
\chi_{|\va_j^\T \vu| \ge N} \le \epsilon, \qquad\forall\, \vu \in \mathbb S^{n-1}.
\end{align*}
\end{lemma}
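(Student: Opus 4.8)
The plan is to deduce both estimates from scalar (i.i.d.) concentration after a single application of Cauchy--Schwarz, peeling off or uniformly truncating the factor $|\va_j^\T\vu|$. Throughout we use that $\norm{\vx}=1$, so $\va_j^\T\vx\sim\mathcal N(0,1)$, and we exploit the obvious monotonicity: both sums are nonincreasing in $N$, so it suffices to establish each bound at $N=N_0$ on an event that does not depend on $N$, and the very same event then handles every $N\ge N_0$.

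For the first inequality the coefficient $b_j:=|\va_j^\T\vx|^3\chi_{|\va_j^\T\vx|\ge N}$ does not involve $\vu$, so Cauchy--Schwarz gives $\frac1m\sum_j|\va_j^\T\vx|^3|\va_j^\T\vu|\chi_{|\va_j^\T\vx|\ge N}\le\bigl(\frac1m\sum_j|\va_j^\T\vx|^6\chi_{|\va_j^\T\vx|\ge N}\bigr)^{1/2}\bigl(\frac1m\sum_j(\va_j^\T\vu)^2\bigr)^{1/2}$. I would bound the second factor by $\sqrt2$ uniformly over $\vu\in\mathbb S^{n-1}$ via Lemma \ref{bp5} (valid once $m\ge Cn$), and the first factor via Lemma \ref{bp7}: the summands $|\va_j^\T\vx|^6\chi_{|\va_j^\T\vx|\ge N}$ are i.i.d.\ with mean $\E|Z|^6\chi_{|Z|\ge N}$ (with $Z\sim\mathcal N(0,1)$), which tends to $0$ as $N\to\infty$, so after fixing $N_0=N_0(\epsilon)$ making this mean $\le\epsilon^2/8$ and taking deviation parameter $t=\epsilon^2/8$, the first factor is $\le\epsilon/2$ with probability at least $1-C_2(\epsilon)/m^2$. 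The product is then $\le\epsilon/\sqrt2<\epsilon$, and by monotonicity in $N$ this persists for all $N\ge N_0$ on the same event.

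For the second inequality the truncation moves with $\vu$, and here I would split the other way: $\frac1m\sum_j|\va_j^\T\vx|^3|\va_j^\T\vu|\chi_{|\va_j^\T\vu|\ge N}\le\bigl(\frac1m\sum_j|\va_j^\T\vx|^6\bigr)^{1/2}\bigl(\frac1m\sum_j(\va_j^\T\vu)^2\chi_{|\va_j^\T\vu|\ge N}\bigr)^{1/2}$. The first factor is $O(1)$, say $\le4$, with probability $\ge1-C/m^2$ by Lemma \ref{bp7} (since $\E Z^6=15$). The second factor requires a bound uniform over the sphere, which cannot come from scalar concentration; the clean fix is to dominate $z^2\chi_{|z|\ge N_0}$ by a smooth $\psi=\psi_{N_0}$ with $0\le\psi(z)\le z^2$, $\psi(z)=z^2$ for $|z|\ge N_0$, $\psi(z)=0$ for $|z|\le N_0/2$, and $|\psi'(z)|\lesssim1+|z|$ with an \emph{absolute} constant (take $\psi(z)=z^2\rho(z)$ for a standard cutoff $\rho$). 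Then $\psi(Z)\le Z^2$ gives $\|\psi(Z)\|_{\psi_1}\lesssim1$ uniformly in $N_0$, and $\psi$ meets the hypotheses of Lemma \ref{bp6}, so for $m\ge C(\epsilon)n$, with probability $\ge1-3\exp(-c(\epsilon)m)$, $\sup_{\vu\in\mathbb S^{n-1}}\frac1m\sum_j\psi(\va_j^\T\vu)\le\E\psi(Z)+\epsilon'\le\E Z^2\chi_{|Z|\ge N_0/2}+\epsilon'$, which is $\le\epsilon^2/16$ after enlarging $N_0$ and taking $\epsilon'=\epsilon'(\epsilon)$ small; since $(\va_j^\T\vu)^2\chi_{|\va_j^\T\vu|\ge N}\le(\va_j^\T\vu)^2\chi_{|\va_j^\T\vu|\ge N_0}\le\psi(\va_j^\T\vu)$ for $N\ge N_0$, the second factor is $\le\epsilon/4$ and the product is $\le\epsilon$. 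A union bound over the (at most three or four) events above yields the combined failure probability $C_2(\epsilon)/m^2$, absorbing the exponentially small terms into the polynomial one for $m\ge C_1(\epsilon)n$.

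The main obstacle is exactly the second inequality: because $\chi_{|\va_j^\T\vu|\ge N}$ depends on $\vu$ one needs genuine uniformity over $\mathbb S^{n-1}$, yet the natural dominating function $z^2\chi_{|z|\ge N}$ is only quadratically (not linearly) Lipschitz, which is precisely the regime outside plain sub-gaussian covering arguments. Lemma \ref{bp6} is tailored to the quadratically-Lipschitz, sub-exponential setting, so the whole argument hinges on checking that the smoothed dominator $\psi_{N_0}$ has both its Lipschitz bound $|\psi_{N_0}'(z)|\lesssim1+|z|$ and its sub-exponential norm $\|\psi_{N_0}(Z)\|_{\psi_1}$ controlled uniformly in $N_0$, which follows from the pointwise bound $\psi_{N_0}(z)\le z^2$.
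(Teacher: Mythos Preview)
Your argument is correct, and it takes a genuinely different route from the paper. The paper never applies Cauchy--Schwarz here; instead, for the first inequality it decomposes $\vu=(\vu^\T\vx)\vx+\tilde\vu$ with $\tilde\vu\perp\vx$, bounds the resulting $|\va_j^\T\vx|^4$ term by Lemma~\ref{bp7}, and handles the cross term $|\va_j^\T\vx|^3|\va_j^\T\tilde\vu|\chi_{|\va_j^\T\vx|\ge N}$ via Lemma~\ref{bp11}, which exploits the independence of $\va_j^\T\vx$ and $\va_j^\T\tilde\vu$. For the second inequality the paper first peels off $H_2=\frac1m\sum_j|\va_j^\T\vx|^3\chi_{|\va_j^\T\vx|>M}|\va_j^\T\vu|$ (controlled by the first inequality already proved), leaving a term $H_1$ in which $|\va_j^\T\vx|^3$ is bounded by $M^3$, and then appeals to Lemma~\ref{bp9} with smoothed cutoffs. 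Your Cauchy--Schwarz splitting is more elementary: it reduces everything to the scalar tail $\frac1m\sum_j|\va_j^\T\vx|^6\chi_{|\va_j^\T\vx|\ge N}$ (or the untruncated sixth moment) plus a single uniform sphere bound, bypassing Lemma~\ref{bp11} entirely and avoiding the two-parameter $(M,N)$ split. The paper's orthogonal-decomposition machinery is reused elsewhere (e.g.\ Lemma~\ref{bp14}), which is presumably why it was developed, but for this particular lemma your approach is shorter and uses only Lemmas~\ref{bp5}, \ref{bp6}, \ref{bp7}. One small remark: your absorption of the $\exp(-c(\epsilon)m)$ terms into $C_2(\epsilon)/m^2$ works because $m^2e^{-c(\epsilon)m}$ is bounded over all $m\ge1$, not merely for $m\ge C_1(\epsilon)n$; the phrasing at the end could be tightened.
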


\begin{proof}
We only sketch the proof.  Write $\vu=(\vu^\T\vx)\vx+ \tilde \vu$, where $\tilde \vu
\in F= \{ \tilde \vu\in \mathbb R^n:\; \|\tilde \vu\|_2\le 1,  \tilde \vu^\T \vx=0\}$. 
For the first inequality, note that (observe $|\vu^\T\vx|\le 1$)
\begin{align*}
&\frac 1m \sum_{j=1}^m |\va_j^\T \vx|^3 |\va_j^\T \vu| \chi_{|\va_j^\T \vx|\ge N} \notag \\
\le &\frac 1m \sum_{j=1}^m |\va_j^\T \vx|^4 \chi_{|\va_j^\T \vx |\ge N} 
+\frac 1 m\sum_{j=1}^m |\va_j^\T \vx|^3 |a_j \cdot \tilde \vu| \chi_{|\va_j^\T \vx| \ge N}.
\end{align*}
For the first term one can use Lemma \ref{bp7}. For the second term one
can use  Lemma \ref{bp11}. 

Now for the second inequality, we write
\begin{align*} 
&\frac 1m \sum_{j=1}^m |\va_j^\T \vx|^3 |\va_j^\T \vu|
\chi_{|\va_j^\T \vu| \ge N}  \notag \\
\le &\; \underbrace{\frac 1m \sum_{j=1}^m |\va_j^\T \vx|^3 \chi_{|\va_j^\T \vx| \le M} 
|\va_j^\T \vu| \chi_{|\va_j^\T \vu| \ge N}}_{=:H_1}
+ \underbrace{\frac 1m \sum_{j=1}^m |\va_j^\T \vx|^3 \chi_{|\va_j^\T \vx|>M}
|\va_j^\T \vu|}_{=:H_2}.
\end{align*}
For $H_2$,  by using the estimates already obtained in the beginning part of this proof,
it is clear that we can take $M$ sufficiently large such that $H_2\le \epsilon/2$. 
After $M$ is fixed, we return to the estimate of $H_1$. 
Note that we can work with a smoothed cut-off function instead of the strict cut-off. 
 The result then follows from
Lemma \ref{bp9} by taking $N$ sufficiently large.

\end{proof}

\begin{lemma} \label{bp13}
Define $ F = \{ \vu \in \mathbb R^n:\; \|\vu\|_2 \le 1, \quad \vu^\T \vx=0 \}$.
Suppose $\vb=(b_1,\cdots,b_m) \in \mathbb R^m$ satisfies  $\norm{\vb} \lesssim \sqrt{m} $ and $\norms{\vb}_\infty \lesssim \log m$.
For any $0<\epsilon\le 1/2$,  if $m\ge C \epsilon^{-1} n \log n$, then
with probability at least $1- \exp(-c \epsilon m/\log m )$, it holds that 
\begin{align*}
\Bigl| 
\frac 1m \sum_{j=1}^m b_j  (\va_j^\T \vu)(\va_j^\T \vv) - \operatorname{mean}
\Bigr| \le \epsilon, \qquad \forall\, \vu, \vv\in F.
\end{align*}

\end{lemma}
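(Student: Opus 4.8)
Since $\vb$ is deterministic, $\operatorname{mean}=\frac1m\sum_{j=1}^m b_j\,\E\big[(\va_j^\T\vu)(\va_j^\T\vv)\big]=\bar b\,(\vu^\T\vv)$ with $\bar b:=\frac1m\sum_{j=1}^m b_j$, so
\[
\frac1m\sum_{j=1}^m b_j(\va_j^\T\vu)(\va_j^\T\vv)-\operatorname{mean}=\vu^\T M\vv,\qquad M:=\frac1m\sum_{j=1}^m b_j\,\va_j\va_j^\T-\bar b\,\operatorname{I}.
\]
The matrix $M$ is symmetric, hence $\sup_{\vu,\vv\in F}\abs{\vu^\T M\vv}=\norms{PMP}$, where $P$ is the orthogonal projection onto the hyperplane $\vx^\perp$. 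The plan is to bound this operator norm by $\epsilon$; the virtue of the reduction is that, $M$ being symmetric, a fixed-radius net over a single variable suffices and no Lipschitz-continuity estimate over a fine net is required.

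Concretely, fix a $1/4$-net $\mathcal N$ of the unit sphere of $\vx^\perp$, so that $\operatorname{Card}(\mathcal N)\le 9^{n-1}$. The standard bound for symmetric operators gives $\norms{PMP}\le 2\max_{\vu\in\mathcal N}\abs{\vu^\T M\vu}$ (using $P\vu=\vu$ for $\vu\in\mathcal N$), and for a fixed unit $\vu\perp\vx$ we have $\vu^\T M\vu=\frac1m\sum_{j=1}^m b_j\big((\va_j^\T\vu)^2-1\big)$, a weighted sum of i.i.d. mean-zero sub-exponential variables $X_j:=(\va_j^\T\vu)^2-1$ with $\norms{X_j}_{\psi_1}\lesssim 1$. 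Applying Bernstein's inequality (Lemma~\ref{bp1}) with the weight vector $\vb$, and using $\norm{\vb}\lesssim\sqrt m$ and $\norms{\vb}_\infty\lesssim\log m$,
\[
\PP\xkh{\Abs{\frac1m\sum_{j=1}^m b_jX_j}>\frac{\epsilon}{2}}\le 2\exp\xkh{-c\min\xkh{m\epsilon^2,\ \frac{m\epsilon}{\log m}}}.
\]
A union bound over $\mathcal N$ gives total failure probability at most $2\cdot 9^{n-1}\exp\big(-c\min(m\epsilon^2,m\epsilon/\log m)\big)$, and since $m\ge C\epsilon^{-1}n\log n$ one has $\min(m\epsilon^2,m\epsilon/\log m)\gtrsim n$ with room to spare once $C$ is large (here using $\log m\asymp\log n$ in the range of $m$ of interest), so the net cardinality is absorbed and the claimed bound $1-\exp(-c'\epsilon m/\log m)$ follows.

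The main obstacle is the weak hypothesis $\norms{\vb}_\infty\lesssim\log m$: it drives the per-point estimate into the linear regime of Bernstein's inequality, which is exactly why one loses a factor $\log m$ in the exponent and needs $m\gtrsim\epsilon^{-1}n\log n$ instead of the $\epsilon^{-2}n$ (with probability $1-\exp(-c\epsilon^2m)$) obtained in the bounded-weight Lemmas~\ref{bp5}, \ref{bp9}, \ref{bp10}. It is worth noting why the operator-norm reduction is what yields the right sampling rate: a direct discretization of $F\times F$ in the spirit of Lemma~\ref{bp10} would, by Cauchy--Schwarz and Lemma~\ref{bp5}, only show the weighted bilinear form to be $O(\log m\cdot\delta)$-Lipschitz on $\delta$-balls, forcing a net of radius $\delta\sim\epsilon/\log m$ and producing a spurious extra iterated-logarithm factor in the sampling condition. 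Finally, for later use one records that $\vb$ is treated as deterministic here: when the lemma is invoked with a random weight such as $b_j=(\va_j^\T\vx)^2$, one first conditions on $(\va_j^\T\vx)_{j=1}^m$, which is independent of $(\tilde\va_j)_{j=1}^m$ and hence of $\va_j^\T\vu=\tilde\va_j^\T\vu$ for $\vu\in F$, on the event $\norm{\vb}\lesssim\sqrt m$, $\norms{\vb}_\infty\lesssim\log m$ --- an event guaranteed by Lemma~\ref{bp7} together with a maximal bound for the largest of $m$ independent squared standard Gaussians.
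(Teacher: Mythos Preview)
Your proof is correct and follows essentially the same route as the paper: both recast the centered sum as $\vu^\T A\vv$ for the symmetric operator $A=\frac1m\sum_j b_j(\va_j\va_j^\T-\operatorname{I})$, pass to a fixed-radius net on the unit ball of $\vx^\perp$, apply Bernstein's inequality (Lemma~\ref{bp1}) pointwise with $\norm{\vb}\lesssim\sqrt m$, $\norms{\vb}_\infty\lesssim\log m$, and take a union bound. The only difference is cosmetic---you exploit symmetry to use the quadratic-form bound $\norms{PMP}\le 2\max_{\vu\in\mathcal N}|\vu^\T M\vu|$ over a single $1/4$-net, whereas the paper uses the bilinear version over $S_\delta\times S_\delta$; this saves a factor of two in the net cardinality but is otherwise identical.
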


\begin{proof}
 Let $0<\delta <  1/2$ and introduce a $\delta$-net $S_{\delta}$ on the set 
$F$. Note that the set $F$ can be identified as a unit ball in $\mathbb R^{n-1}$.
We have $\operatorname{Card} (S_{\delta} ) \le (1+\frac 2{\delta})^n$. Introduce the operator
\begin{align*}
A = \frac 1m \sum_{j=1}^m b_j (\va_j \va_j^T -\operatorname{I}).
\end{align*}
We have 
\begin{align*}
\| A \|_{2} =\sup_{\vx,\vy \in F}
\langle A\vx, \vy \rangle \le \frac 1 {1-2\delta}
\sup_{\vx, \vy \in S_{\delta}} \langle A \vx, \vy\rangle.
\end{align*}
By Lemma \ref{bp1}, for each $\vu, \vv \in F$, we have
\begin{align*}
\mathbb P ( 
\Bigl| \frac 1m \sum_{j=1}^m b_j (\va_j^\T \vu)(\va_j^\T \vv)- \operatorname{mean}
\Bigr| > \frac{\epsilon}4)
\le 2 \exp( - c   \min\{  {m \epsilon^2},  \frac {m\epsilon}{\log m } \} )
\le 2 \exp( -c \frac{m}{\log m} \epsilon). 
\end{align*}
Thus
\begin{align*}
\mathbb P ( 
 \sup_{\vu,\vv   \in S_{\delta} }\Bigl| \frac 1m \sum_{j=1}^m b_j (\va_j^\T \vu)(\va_j^\T \vv) - \operatorname{mean}
\Bigr| > \frac{\epsilon}4)
\le 2  (1+\frac 2 {\delta} )^n \exp( - c \frac m{\log m} \epsilon).
\end{align*}
Taking $\delta=\frac 14$ and $m\ge C\epsilon^{-1} n \log m$ then yields the result. 
\end{proof}

\begin{corollary} \label{bp13.01}
Suppose $h_1$, $h_2:\, \mathbb R\to \mathbb R$ are locally Lipschitz continuous
functions such that
\begin{align*}
&\sup_{z \in \mathbb R} \frac {|h_1(z)|+|h_2(z)|} {1+|z|} \lesssim 1, \\
& \sup_{z\ne \tilde z} \frac {|h_i(z)-h_i(\tilde z)|}{|z-\tilde z|} \lesssim 1,
\quad i=1,2.
\end{align*}
Define $ F = \{ \vu \in \mathbb R^n:\; \|\vu\|_2 \le 1, \quad \vu^\T \vx=0 \}$.  Suppose $\vb=(b_1,\cdots,b_m) \in \mathbb R^m$ satisfies  $\norm{\vb} \lesssim \sqrt{m} $ and $\norms{\vb}_\infty \lesssim \log m$.
For any $0<\epsilon\le 1/2$,  there exist a constant $C_1=C_1(\epsilon)>0$, such that  if $m\ge C_1 n \log n$, then with probability at least  $1- \exp(-c \epsilon m/\log m )$, it holds that 
\begin{align*}
\Bigl| 
\frac 1m \sum_{j=1}^m b_j  h_1 (\va_j^\T \vu) h_2(\va_j^\T \vv) - \operatorname{mean}
\Bigr| \le \epsilon, \qquad \forall\, \vu, \vv\in F.
\end{align*}

\end{corollary}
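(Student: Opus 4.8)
The plan is to run the standard machinery of Lemmas \ref{bp9}, \ref{bp10} and \ref{bp13}: discretize the set $F$ — a unit ball in $\mathbb R^{n-1}$ — by a $\delta$-net $S_\delta$ with $\operatorname{Card}(S_\delta)\le(1+2/\delta)^n$, control the statistic at each net point by Bernstein's inequality, take a union bound, and absorb the discretization error by a Lipschitz estimate, with $\delta>0$ a small constant depending only on $\epsilon$. For a fixed $(\tilde\vu,\tilde\vv)\in S_\delta\times S_\delta$, the linear-growth bound $|h_k(z)|\lesssim1+|z|$ gives $\|h_k(\va_1^\T\tilde\vu)\|_{\psi_2}\lesssim1$, hence $\|h_1(\va_1^\T\tilde\vu)h_2(\va_1^\T\tilde\vv)\|_{\psi_1}\lesssim1$ since a product of sub-gaussian variables is sub-exponential; so Lemma \ref{bp1}, together with $\norm{\vb}^2\lesssim m$ and $\norms{\vb}_\infty\lesssim\log m$, yields
\[
\mathbb P\Bigl(\Bigl|\frac1m\sum_{j=1}^m b_j\,h_1(\va_j^\T\tilde\vu)h_2(\va_j^\T\tilde\vv)-\operatorname{mean}\Bigr|>\frac\epsilon4\Bigr)\le 2\exp\bigl(-c\min\{m\epsilon^2,\,m\epsilon/\log m\}\bigr)\le 2\exp(-c'\epsilon m/\log m),
\]
the last inequality being valid once $m$ is large enough in terms of $\epsilon$, which is ensured by $m\ge C_1 n\log n$. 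A union bound over the $(1+2/\delta)^n$ net points costs a factor $\exp(O(n\log(1/\epsilon)))$, and this is dominated by $\exp(-c'\epsilon m/\log m)$ once $C_1=C_1(\epsilon)$ is chosen large enough (using $\log m\lesssim\log n$ on the relevant range of $m$).

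It remains to pass from $S_\delta$ to all of $F$, and this is the only delicate step. For $\vu,\tilde\vu,\vv,\tilde\vv\in F$ with $\norm{\vu-\tilde\vu}\le\delta$ and $\norm{\vv-\tilde\vv}\le\delta$, the triangle inequality together with the Lipschitz and linear-growth bounds on $h_1,h_2$ reduces the discretization error of both the empirical average and of $\operatorname{mean}$ to terms of the shape $\frac1m\sum_{j=1}^m|b_j|\,|\va_j^\T\vh|\,(1+|\va_j^\T\vw|)$, where $\vh$ is one of $\vu-\tilde\vu,\,\vv-\tilde\vv$ (so $\norm{\vh}\le\delta$) and $\vw\in F$. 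The contribution of $\frac1m\sum_{j=1}^m|b_j|\,|\va_j^\T\vh|$ is $\lesssim\delta$ by Cauchy--Schwarz, using $\frac1m\sum_{j=1}^m b_j^2=\frac1m\norm{\vb}^2\lesssim1$ and the elementary bound $\frac1m\sum_{j=1}^m(\va_j^\T\vh)^2\le2\norm{\vh}^2$ from Lemma \ref{bp5} (the analogue for the mean follows from $\E|\va_1^\T\vh|\le\norm{\vh}$). The cross term $\frac1m\sum_{j=1}^m|b_j|\,|\va_j^\T\vh|\,|\va_j^\T\vw|$ is the crux: the crude bound $|b_j|\le\norms{\vb}_\infty\lesssim\log m$ would force $\delta\asymp\epsilon/\log m$ and ruin the $n\log n$ scaling. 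Instead I would write $|\va_j^\T\vh|\,|\va_j^\T\vw|\le\tfrac{\norm{\vh}}2\bigl(|\va_j^\T\hat\vh|^2+|\va_j^\T\vw|^2\bigr)$ with $\hat\vh=\vh/\norm{\vh}$, note that $\hat\vh$ and $\vw$ both lie in the hyperplane $\{\vx\}^\perp$, and invoke Lemma \ref{bp13} \emph{itself} — applied with the weights $|b_j|$ (which satisfy the same size constraints) and with its two arguments equal — to obtain, with probability $1-\exp(-cm/\log m)$,
\[
\sup_{\vz\in F}\ \frac1m\sum_{j=1}^m|b_j|\,(\va_j^\T\vz)^2\ \le\ \frac1m\sum_{j=1}^m|b_j|+\frac12\ \lesssim\ 1 .
\]
Since $\hat\vh,\vw\in F$, this bounds the cross term by $\lesssim\norm{\vh}\lesssim\delta$, and likewise its mean. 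Choosing $\delta$ a sufficiently small multiple of $\epsilon$ and combining the Bernstein-on-the-net bound with the two discretization estimates proves the corollary.

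The main obstacle is exactly this cross term: obtaining a uniform bound on $\frac1m\sum_{j=1}^m|b_j|\,|\va_j^\T\vh|\,|\va_j^\T\vw|$ without paying a spurious logarithmic factor. Bootstrapping Lemma \ref{bp13} into itself — its diagonal case, with weights $|b_j|$ and test vectors confined to $\{\vx\}^\perp$ — is what keeps the net scale $\delta$ of constant order, and hence keeps the sampling complexity at $m\gtrsim n\log n$ rather than $m\gtrsim n\log^2 n$.
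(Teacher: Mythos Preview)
Your proposal is correct and follows essentially the same route as the paper: discretize $F$ by a $\delta$-net, apply Bernstein (Lemma~\ref{bp1}) at net points with the sub-exponential bound coming from $\|\vb\|_\infty\lesssim\log m$, and---crucially---bootstrap Lemma~\ref{bp13} with weights $|b_j|$ to control the weighted quadratic form $\tfrac1m\sum_j|b_j|(\va_j^\T\vz)^2$ uniformly over $\vz\in F$, which is exactly what keeps $\delta$ of constant order. The paper carries out the Young-inequality step slightly differently (it writes $|\va_j^\T(\vu-\tilde\vu)|\,|\va_j^\T\vv|\le\tfrac1\delta(\va_j^\T(\vu-\tilde\vu))^2+\delta(\va_j^\T\vv)^2$ and then applies the operator bound to both pieces), but this is the same device as your $|\va_j^\T\vh|\,|\va_j^\T\vw|\le\tfrac{\|\vh\|}{2}\bigl((\va_j^\T\hat\vh)^2+(\va_j^\T\vw)^2\bigr)$, just with a different normalization.
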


\begin{proof}
 Let $0<\delta < \frac 12$ and introduce a $\delta$-net $S_{\delta}$  with
 $\operatorname{Card} (S_{\delta} ) \le (1+\frac 2{\delta})^n$ on the set 
$F$. As we shall see momentarily, we will need to take $\delta =O(\epsilon)$. 
By Lemma \ref{bp1},  we have
\begin{align*}
\mathbb P ( 
 \sup_{\vu,\vv   \in S_{\delta} }\Bigl| \frac 1m \sum_{j=1}^m b_j h_1 (\va_j^\T \vu)
 h_2 (\va_j^\T \vv) - \operatorname{mean}
\Bigr| > \frac{\epsilon}4)
\le 2  (1+\frac 2 {\delta} )^n \exp( - c \frac m{\log m} \epsilon).
\end{align*}
Now for any $\vu,\vv \in F$, $\tilde \vu, \tilde \vv\in S_{\delta}$ with
 $\|\vu-\tilde \vu\|_2 \le \delta$, $\|\vv-\tilde \vv\|_2 \le \delta$, we have
 \begin{align*}
 & \Bigl| h_1(\va_j^\T \vu) h_2 (\va_j^\T \vv) -  h_1(\va_j^\T \tilde \vu)
  h_2 (\va_j^\T \tilde \vv) \Bigr| \notag \\
  \le  & \;  |h_1(\va_j^\T \vu) -h_1(\va_j^\T \tilde \vu) |
  \cdot |h_2(\va_j^\T \vv)| +  |h_1(\va_j^\T \tilde \vu)|
  \cdot | h_2(\va_j^\T \vv)-h_2 (\va_j^\T \tilde \vv) | \notag \\
  \le & \; K_0 |\va_j^\T (\vu-\tilde \vu) | (1+|\va_j^\T \vv|)
  + K_0 (1+|\va_j^\T \tilde \vu|) |\va_j^\T (\vv-\tilde \vv) | \notag \\
  \le &\; K_0 \Bigl( \frac 1{\delta} (\va_j^\T (\vu-\tilde \vu) )^2
  +\delta (\va_j^\T \vv)^2 +2\delta
  + \frac 1 {\delta} (\va_j^\T (\vv-\tilde \vv) )^2 + \delta (\va_j^\T \tilde \vu)^2 \Bigr),
  \end{align*}
  where $K_0$ is an absolute constant.
Now introduce the operator
\begin{align*}
A = \frac 1m \sum_{j=1}^m |b_j| (\va_j \va_j^T -\operatorname{I}).
\end{align*}
By Lemma \ref{bp13}, with probability at least
$1-\exp( -c \frac {m}{\log m} )$, we have
\begin{align*}
\langle A \vx, \vy \rangle \le 1, \qquad\forall\, \vx,\vy \in F.
\end{align*}
Thus with the same probability, we have
\begin{align*}
&\frac 1m \sum_{j=1}^m |b_j| \cdot \Bigl| h_1(\va_j^\T \vu) h_2 (\va_j^\T \vv) -  h_1(\va_j^\T \tilde \vu)
  h_2 (\va_j^\T \tilde \vv) \Bigr| \notag \\
\le &\; K_0 \Bigl( \frac 1{\delta}\frac 1m \sum_{j=1}^m |b_j|  ((\va_j^\T (\vu-\tilde \vu) )^2 
+ (\va_j^\T (\vv-\tilde \vv) )^2 ) \notag \\
& \qquad + \delta \frac 1m \sum_{j=1}^m |b_j | ( (\va_j^\T \vv)^2 +(\va_j^\T \tilde \vu)^2 )  
+ 2\delta \frac 1m \sum_{j=1}^m |b_j|  \Bigr) \notag \\
\le &\; K_1 \delta,
\end{align*}
where $K_1>0$ is another absolute constant.  It is also not difficult to control the differences
in expectation, i.e. for some absolute constant $K_2>0$,
\begin{align*}
&\frac 1m \sum_{j=1}^m |b_j| \cdot \Bigl| \mathbb E h_1(\va_j^\T \vu) h_2 (\va_j^\T \vv) -  \mathbb E h_1(\va_j^\T \tilde \vu)
  h_2 (\va_j^\T \tilde \vv) \Bigr|  \le  K_2 \delta,
  \end{align*}
 Now take $\delta = \frac {\epsilon}{4(K_1+K_2)}$
and the desired result clearly follows by taking $\frac m {\log m} \gtrsim n$. 
\end{proof}

\begin{lemma} \label{bp14}
For any $0<\epsilon\le  1/2$, there are constants $C_1=C_1(\epsilon)>0$,
$C_2=C_2(\epsilon)>0$, such that if $m\ge C_1 n \log n$, then with
probability at least $1- \frac {C_2}{m^2}$, it holds
\begin{align*}
\Bigl| \frac 1m \sum_{j=1}^m (\va_j^\T \vu)^2 (\va_j^\T \vx)^2
-\operatorname{mean} \Bigr| \le \epsilon,
\qquad\forall\, \vu \in \mathbb S^{n-1}.
\end{align*}
\end{lemma}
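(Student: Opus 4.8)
The plan is to exploit that $\vx$ is \emph{fixed}: once we condition on the linear forms $\va_j^\T\vx$, their powers become deterministic weights, so the heavy‑tailed product $(\va_j^\T\vu)^2(\va_j^\T\vx)^2$ can be handled by the weighted uniform estimates already established (Lemmas~\ref{bp11} and \ref{bp13}) instead of a brute‑force net argument. Concretely, for $\vu\in\mathbb S^{n-1}$ set $s:=\vu^\T\vx\in[-1,1]$ and $\tilde\vu:=\vu-s\vx$, so that $\tilde\vu^\T\vx=0$ and $\norm{\tilde\vu}^2=1-s^2\le1$, i.e.\ $\tilde\vu\in F:=\{\vw\in\R^n:\norm{\vw}\le1,\ \vw^\T\vx=0\}$, and $\va_j^\T\vu=s(\va_j^\T\vx)+\va_j^\T\tilde\vu$. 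Expanding the square,
\begin{align*}
\frac1m\sum_{j=1}^m(\va_j^\T\vu)^2(\va_j^\T\vx)^2
&= s^2\cdot\underbrace{\frac1m\sum_{j=1}^m(\va_j^\T\vx)^4}_{=:T_1}
+2s\cdot\underbrace{\frac1m\sum_{j=1}^m(\va_j^\T\vx)^3(\va_j^\T\tilde\vu)}_{=:T_2}\\
&\qquad+\underbrace{\frac1m\sum_{j=1}^m(\va_j^\T\vx)^2(\va_j^\T\tilde\vu)^2}_{=:T_3}.
\end{align*}
Since $\va_j^\T\vx$ and $\va_j^\T\tilde\vu$ are uncorrelated (their covariance is $\vx^\T\tilde\vu=0$) jointly Gaussian, hence independent, one has $\E T_1=3$, $\E T_2=0$, $\E T_3=\norm{\tilde\vu}^2=1-s^2$, so $3s^2+0+(1-s^2)=1+2s^2=1+2(\vu^\T\vx)^2$ is exactly the target mean; it then suffices to show each $T_i$ concentrates, uniformly in $\tilde\vu$ (equivalently $\vu$), around its mean with small error.

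The terms $T_1$ and $T_2$ are inexpensive and cost no logarithm. As $T_1$ does not depend on $\vu$, Lemma~\ref{bp7} applied to the i.i.d.\ summands $(\va_j^\T\vx)^4$ (bounded fourth moment) gives $|T_1-3|\le\epsilon$ with probability $1-O(m^{-2})$ for \emph{every} $m$, and multiplying by $s^2\le1$ preserves this. For $T_2$ I would apply Lemma~\ref{bp11} with $f_3(z)=z^3$ and $h(z)=z$ (both meeting the growth/Lipschitz hypotheses there), which yields $\sup_{\tilde\vu\in F}|T_2|\le\epsilon$ with probability $1-O(m^{-2})$ as soon as $m\ge C(\epsilon)n$; then $|2sT_2|\le2\epsilon$.

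The real work is $T_3$, and this is where the factor $\log n$ enters. I would condition on $(\va_1^\T\vx,\dots,\va_m^\T\vx)$ and set $b_j:=(\va_j^\T\vx)^2$; conditionally, the vectors $\tilde\va_j:=\va_j-(\va_j^\T\vx)\vx$ are i.i.d.\ standard Gaussian on the hyperplane $\vx^\perp\cong\R^{n-1}$, and $\va_j^\T\tilde\vu=\tilde\va_j^\T\tilde\vu$ for $\tilde\vu\in F$. On an event of probability $1-O(m^{-2})$ the deterministic hypotheses of Lemma~\ref{bp13} hold: $\norm{\vb}_2=\bigl(\sum_j(\va_j^\T\vx)^4\bigr)^{1/2}\lesssim\sqrt m$ by Lemma~\ref{bp7}, and $\norm{\vb}_\infty=\max_j(\va_j^\T\vx)^2\lesssim\log m$ by a union bound over the $m$ Gaussian tails of $\va_j^\T\vx$. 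On that event, Lemma~\ref{bp13} (taken with $\vu=\vv=\tilde\vu$, inside the subspace $\vx^\perp$, where its set $F$ is the full unit ball of $\R^{n-1}$) gives, for $m\ge C(\epsilon)n\log n$ and with conditional probability $1-\exp(-c\epsilon m/\log m)$,
\[
\Bigl|\,T_3-\norm{\tilde\vu}^2\cdot\frac1m\sum_{j=1}^m(\va_j^\T\vx)^2\,\Bigr|\le\epsilon,\qquad\forall\,\tilde\vu\in F.
\]
Combining with $\frac1m\sum_j(\va_j^\T\vx)^2=1+O(\epsilon)$ (Lemma~\ref{bp5}) and $\norm{\tilde\vu}^2=1-s^2\le1$ yields $|T_3-(1-s^2)|\lesssim\epsilon$ uniformly in $\vu$. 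The $\log m$ in the denominator of the exponent — produced by $\norm{\vb}_\infty\lesssim\log m$ entering Bernstein's inequality inside Lemma~\ref{bp13} — is precisely what forces $m\gtrsim n\log n$ rather than $m\gtrsim n$; this is the main obstacle, and it is already isolated in Lemma~\ref{bp13}, so what remains is only the reduction above.

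Adding the three estimates and replacing $\epsilon$ by a fixed absolute fraction of the target accuracy gives $\bigl|\frac1m\sum_j(\va_j^\T\vu)^2(\va_j^\T\vx)^2-(1+2(\vu^\T\vx)^2)\bigr|\le\epsilon$ for all $\vu\in\mathbb S^{n-1}$; the exceptional set is a union of finitely many $O(m^{-2})$ events together with one of probability $\exp(-c\epsilon m/\log m)$, which is itself $O(m^{-2})$ once $m$ is past an $\epsilon$‑dependent absolute threshold, so the claimed bound holds with probability $1-C_2(\epsilon)/m^2$ whenever $m\ge C_1(\epsilon)n\log n$.
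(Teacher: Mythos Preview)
Your proposal is correct and follows essentially the same route as the paper: the orthogonal decomposition $\vu=(\vu^\T\vx)\vx+\tilde\vu$, handling the first two terms via Lemma~\ref{bp7} and Lemma~\ref{bp11} respectively, and then treating the crucial third term by conditioning on $b_j=(\va_j^\T\vx)^2$, checking $\norm{\vb}\lesssim\sqrt m$ and $\norm{\vb}_\infty\lesssim\log m$ on a high-probability event, and invoking Lemma~\ref{bp13}. The only cosmetic difference is that the paper controls $\frac1m\sum_j b_j$ via Lemma~\ref{bp7} rather than Lemma~\ref{bp5}, which is immaterial.
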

\begin{proof}
Write $\vu = (\vu^\T \vx) \vx+\vu^\perp$, where $\nj{\vu^\perp, \vx}=0$. Then
\begin{align*}
&\frac 1m \sum_{j=1}^m (\va_j^\T \vu)^2 (\va_j^\T \vx)^2 \notag \\
=&\; (\vu^\T \vx)^4\cdot \frac 1m \sum_{j=1}^m (\va_j^\T \vx)^4  
+ 2(\vu^\T \vx)  \cdot \frac 1m \sum_{j=1}^m (\va_j^\T \vx)^3 \cdot (\va_j^\T \vu^\perp) \notag \\
& \qquad + \frac 1m \sum_{j=1}^m (\va_j^\T \vx)^2 (\va_j^\T \vu^\perp)^2.
\end{align*}
Clearly the first two terms can be easily handled by Lemma \ref{bp7} and
Lemma \ref{bp11} respectively. For these terms we actually only need
$m\ge Cn$.  To handle the last term we need $m\gtrsim n \log n$. The main observation
is that $(\va_j^\T \vx)$ and $(\va_j^\T \vu^\perp)$ are independent. 
Write $b_j =(\va_j^\T \vx)^2$ and observe that with probability at least
$1- O(m^{-2})$, we have
\begin{align*}
\sum_{j=1}^m b_j^2 \le 100 m, \quad  \max_{1\le j \le m} b_j  \le 100 \log m. 
\end{align*} 
For $m\gtrsim n \log n$, by using Lemma \ref{bp13}, it holds with probability
at least $1-O(m^{-2} )- \exp(-c \epsilon  m/\log m)= 1-O(m^{-2})$ that
\begin{align*}
\Big| \frac 1m \sum_{j=1}^m b_j ((\va_j^\T \vu)^2 - \|\vu\|_2^2)  \Bigr|
\le \frac {\epsilon} 2, \qquad\forall\, \vu \in F=\{v\in \mathbb R^n:\,
\|\vv\|_2\le 1, \, \vv^\T \vx=0 \}.
\end{align*}
By Lemma \ref{bp7}, we have with probability $1-O(m^{-2})$, 
\begin{align*}
\|\vu\|_2 \Bigl |\frac 1m \sum_{j=1}^m b_j
-\operatorname{mean} \Bigr|
\le \Bigl |\frac 1m \sum_{j=1}^m b_j
-\operatorname{mean} \Bigr| \le \frac {\epsilon}2.
\end{align*}
The desired result then easily follows.
\end{proof}


%
%



\end{document}